\newtheorem{theorem}{Theorem}[section]
\newtheorem{corollary}[theorem]{Corollary}
\newtheorem{definition}[theorem]{Definition}
\newcommand{\qed}{\nobreak \ifvmode \relax
 \else
      \ifdim\lastskip<1.5em \hskip-\lastskip
      \hskip1.5em plus0em minus0.5em \fi \nobreak
      \vrule height0.75em width0.5em depth0.25em\fi}
\newenvironment{proof}[1][Proof]{\begin{trivlist}
\item[\hskip \labelsep {\bfseries #1}]}{\end{trivlist}}
\renewcommand{\footnoterule}{%
  \kern -3pt
  \hrule width \columnwidth height 0.5pt
  \kern 3pt
}
\begin{document}

\title{Restless Video Bandits: Optimal SVC Streaming in a Multi-user Wireless Network}

\author{S. Amir Hosseini}
\author{Shivendra S. Panwar}
\affil{Department of Electrical and Computer Engineering, NYU Tandon School of Engineering}


\pagenumbering{gobble}

\maketitle

\begin{abstract}
In this paper, we consider the problem of optimal scalable video delivery to mobile users in wireless networks given arbitrary Quality Adaptation (QA) mechanisms. In current practical systems, 
QA and scheduling are performed independently by the content provider and network operator, respectively. While most research has been focused on jointly
optimizing these two tasks, the high complexity that comes with a joint approach makes the implementation impractical. Therefore, we present
a scheduling mechanism that takes the QA logic of each user as input and optimizes the scheduling accordingly. Hence, there is no need for centralized QA and 
cross-layer interactions are minimized. We model the QA-adaptive scheduling and the jointly optimal problem as a Restless Bandit and a Multi-user Semi Markov Decision Process,
respectively in order to compare the loss incurred by not employing a jointly optimal scheme. We then present heuristic algorithms in order to achieve the optimal outcome of the Restless Bandit solution 
assuming the base station has knowledge of the underlying quality adaptation of each user (QA-Aware). We also present a simplified heuristic without the need for any higher layer knowledge at the base station (QA-Blind). We show that our QA-Aware strategy can achieve up to two times improvement in user network utilization compared to popular baseline algorithms such as Proportional Fairness.
We also provide a testbed implementation of the QA-Blind scheme in order to compare it with baseline algorithms in a real network setting.  
\end{abstract}

\section{Introduction}\label{sec:intro}
Video and real time applications are the largest consumer of mobile wireless data (40\% during peak consumption) in North America, and it is predicted that this trend will continue \cite{sandvine2016}. This calls for more intelligent usage of the available spectrum and more bandwidth conserving techniques for video delivery. For this purpose, adaptive video delivery over HTTP has been standardized under the commercial name DASH. DASH can also be implemented using the scalable extension of the video codec H.264/SVC and H.265/SHVC. In SVC, each segment is encoded into a base layer containing the minimum quality representation, and one or more enhancement layers for additional quality. Apart from higher flexibility in segment delivery, SVC also benefits the network in terms of caching efficiency and congestion reduction at the server \cite{sanchez2011idash}. These benefits have led to efforts for commercial deployment of SVC. For instance, Vidyo and Google have begun a collaboration for implementing SVC on WebRTC using the VP9 codec \cite{vidyo_2013}. The process of delivering adaptive video using SVC in a wireless network can be broken into two separate tasks:


\emph{Quality Adaptation (QA)}: QA determines the order in which different layers of different segments must be requested by the user and is performed by an end-to-end application specified by the content provider (Netflix, Amazon, etc.). The adaptation policy is not specified in the DASH standard and therefore, depending on the user device, video application, content provider, etc., different vendors can use different policies. 

\emph{Scheduling}: In multi-user wireless networks, where the bottleneck is typically the access link, the base station determines how the time-frequency resources are shared among users. This task is referred to as scheduling, and it is a design choice of the network service provider. In general, the scheduling policy should ensure high Quality of Experience (QoE) and utilize the wireless resources efficiently. 


The above two tasks can either be implemented independent of each other as shown in Figure \ref{fig:sys_sep}, or jointly by the base station as illustrated in Figure \ref{fig:sys_joint}. A joint optimization would require considerable cross-layer functionality at the base station making it impractical and overly complex. It would also call for coordination between content providers and network operators which is undesirable because it forces the content provider to give away control over its content delivery process, which it may be reluctant to do for business reasons. On the other hand, separately optimizing the two tasks provides inferior system performance compared to the joint case. In this paper, we combine the two schemes in Figure \ref{fig:sys} and we design a scheduling policy that adapts itself to any arbitrary QA policy that is implemented on each end user (QA-adaptive scheduling). In our proposed system model, end users can deploy any QA provided by the content provider. The network then takes the QA of each user as input and optimizes the scheduling accordingly. As a result, content providers will still have full control over the adaptation process and the scheduling will be adaptive to the underlying QA. Furthermore, this separation between QA and scheduling may allow service providers, bound by the evolving net neutrality rules, a new option to maximize QoE without explicit, and therefore possibly discriminatory, cooperation with content providers. The recently developed MPEG's Server and Network Assisted DASH (SAND) technology, offers standardized messaging schemes and protocol exchanges for service providers and operators to enhance streaming experience while also improving network bandwidth utilization \cite{thomas2016applications}. The exchange of QA logic between content provider and the network can be done within the SAND framework.

\begin{figure}[h]
	\captionsetup[subfigure][h]{twoside,margin={-0.1cm,-0.1cm}}
	\centering
	\begin{subfigure}[h]{0.4\textwidth}
             \includegraphics[height=1.4in]{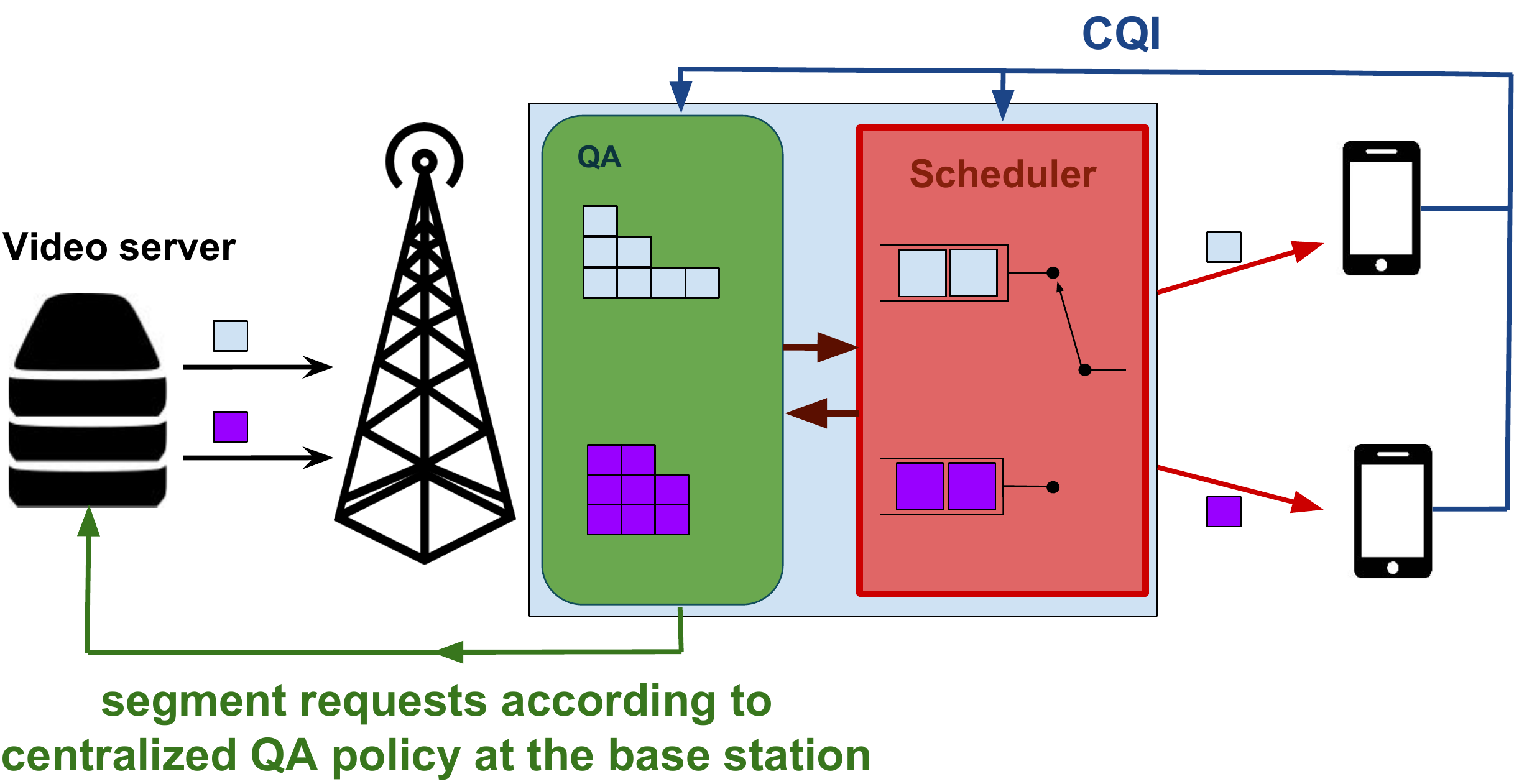}
             \caption{Joint optimal QA and scheduling.}\label{fig:sys_joint}         
        \end{subfigure}\qquad \\
        \begin{subfigure}[h]{0.4\textwidth}
             \includegraphics[height=1.5in]{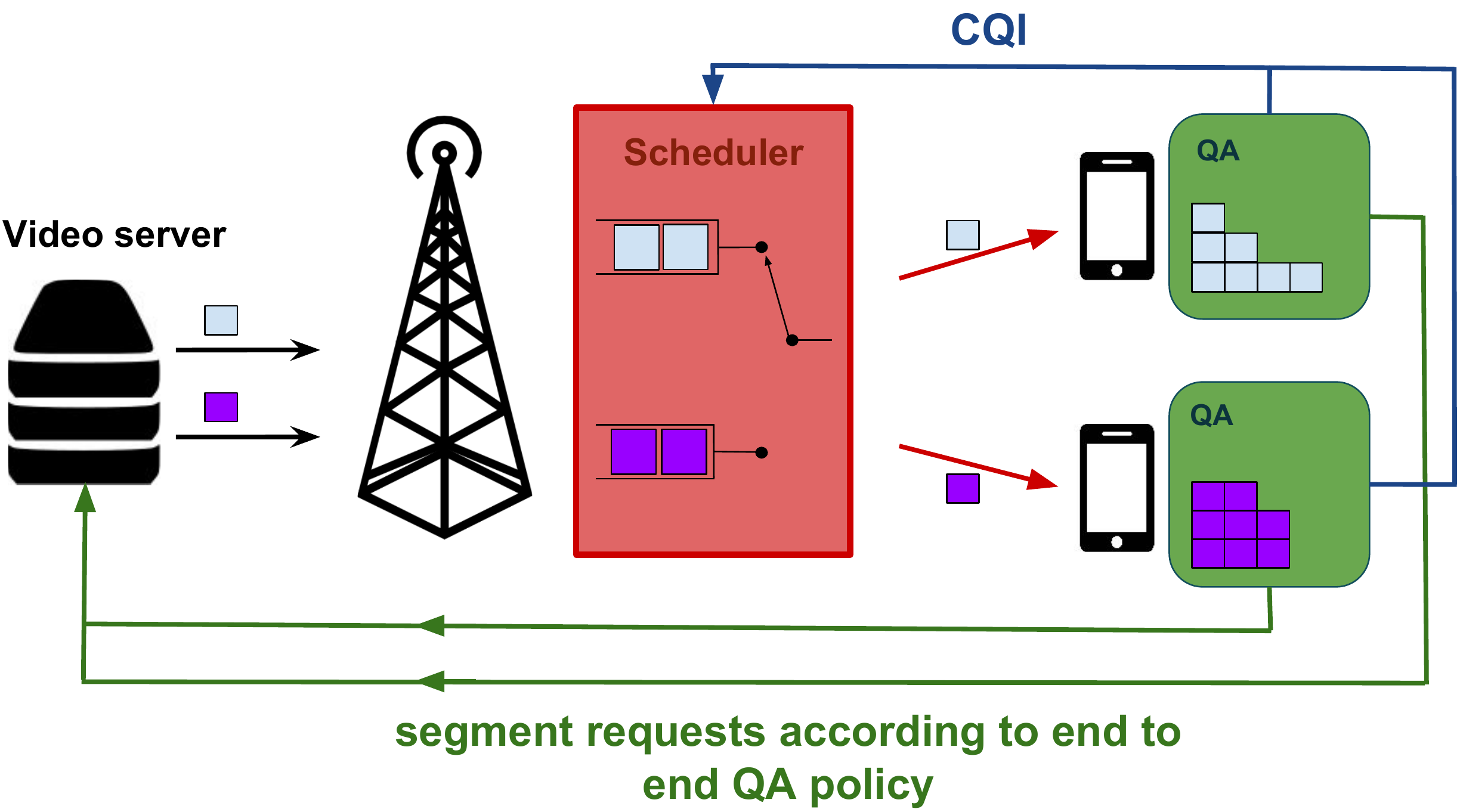}
             \caption{Separate QA and scheduling.}\label{fig:sys_sep}
        \end{subfigure}\qquad
\caption{System architecture for two SVC wireless delivery schemes.}
\label{fig:sys}         
\end{figure}

For this purpose, we first formulate the QA-adaptive scheme and the joint optimization using the concept of Restless Bandits (RB) \cite{whittle1988restless} and Multi-user Semi Markov Decision Process (MUSMDP), respectively, in order to quantify the loss in performance incurred by diverging from the jointly optimal scheme. We then develop a heuristic algorithm that perform scheduling given that the base station is aware of the QA used by the end users (QA-Aware). Furthermore, by analyzing the behavior of the scheduler for different QA schemes, we devise an even simpler scheduling heuristic in which the base station is blind to any of the users' QA (QA-Blind). 

RB is a powerful tool for optimizing sequential decision making based on forward induction. It is represented by a set of slot machines (one-armed bandits), where at any time slot a fixed number of them can be operated to receive a reward. The goal is to schedule the bandits such that their long term sum reward is maximized. This is a generalization of the traditional Multi-Armed Bandit (MAB) problem in which bandits that are not chosen in a slot do not change state and offer no reward in that slot \cite{gittins1979bandit}.

The remainder of the paper is organized as follows. In Section \ref{sec:rel_work}, we provide a summary of relevant research on the topic. Section \ref{sec:sys_model} describes the system model which is followed by the RB formulation of the problem in Section \ref{sec:prob_form}. The heuristic algorithms are presented in Section \ref{sec:alg}. Section \ref{sec:sim} and \ref{sec:implementation} contain the simulation and implementation results, respectively. Finally, Section \ref{sec:con} concludes the paper.

\section{Related Work}\label{sec:rel_work}

The additional flexibility added by the multi-layer structure of SVC has triggered a substantial body of research focusing on optimizing single user QA. For this purpose, various approaches are used ranging from dynamic programming \cite{hosseini2015not,andelin2012quality,xiang2012adaptive,otwani2015optimal}, heuristics \cite{kim2005optimal,kuschnig2010evaluation,sieber2013implementation}, and experimental methods \cite{muller2012using,huang2015buffer}. 

Many others have investigated the joint problem of QA and scheduling for scalable video in wireless networks \cite{freris2013distortion,freris2010resource,ji2009scheduling,khan2016qoe,zhao2015qoe,talebi2011quasi,zhang2010cross,xiao2015optimal,fu2009systematic,cicalo2014distortion}. Among these papers, some have deployed network utility maximization techniques for solving the joint problem \cite{freris2013distortion,freris2010resource,talebi2011quasi,zhao2015qoe,cicalo2014distortion}. In \cite{ji2009scheduling}, a gradient based method is used in which, in every time slot, the base station solves a weighted rate maximization problem to update the gradient. The majority of these schemes are myopic and obtain optimality in a real time fashion which makes them suitable for live streaming events. 

The authors of \cite{xiao2015optimal,fu2009systematic}, model the problem first as a Multi-user Markov Decision Process and solve it using an iterative sub-gradient method. Unlike the previous papers, the proposed schemes are foresighted, i.e., the effect of each decision on future actions is taken into account. However, they require complex iterative computations in every time slot. In all the above papers, scheduling and QA are jointly optimized and therefore suffer the shortcomings discussed in Section \ref{sec:intro}. 

Other papers have proposed simple collaboration mechanisms between content providers and network operators with the goal of improving existing QA schemes for DASH\cite{zahran2017sap,chen2013scheduling,georgopoulos2013towards,petrangeli2016qoe}. The main argument in these papers is that current QA mechanisms that fully rely on client based adaptation fail to deliver acceptable performance in terms of fairness, stability, and resource utilization. By providing network assistance through the exchange of system statistics between the network and the client, the QA policies can be improved. In \cite{georgopoulos2013towards}, an OpenFlow assisted control plane orchestrates this functionality. An in-network system of coordination proxies for facilitating resource sharing among clients is proposed in \cite{petrangeli2016qoe}. The authors of \cite{zahran2017sap} develop a scheme that leverages both network and client state information to optimize the pacing of different video flows. In \cite{chen2013scheduling}, the bitrate of each requested stream is throttled to a certain range and a proportional fair scheduler shares the resources among the streams. In our work, we deploy a similar collaboration mechanism between the network and the content provider but for the purpose of optimizing the scheduling policy given that users may deploy any arbitrary QA.

To the best of our knowledge, no prior work has considered optimal QA-adaptive scheduling for scalable video in wireless networks. Furthermore, our proposed solution aims at optimizing the scheduling procedure in a foresighted manner and is suitable for video on demand, where buffering of content is possible.

\section{System Model}\label{sec:sys_model}
In this section, we start by describing the network and video models used in the formulation. We then present a matrix representation to model arbitrary QA.

\subsection{Network Model:}
The network under consideration consists of $N$ users from the set $\mathcal N = \{1,2,\cdots,N\}$ and a base station. The total bandwidth is denoted by $W_{tot}$ and is divided into $M$ equal subchannels, as in OFDMA. At each time slot, the base station chooses $M$ ($M \leq N$) users and allocates time-frequency resources to each of them. We assume that for each user $n$, the channel has flat fading and the capacity of each subchannel follows a Markov chain with transition matrix $\mathbf{C}_n$, where $C_{n,i,j} = P(c_{n,t+1} = j|c_{n,t} = i)$. The states of this Markov chain $c_{n,t}$ are taken from a finite set $\mathcal{C}$ and represent the maximum achievable data rate per subchannel, which is a function of the available modulation and coding schemes in the network. We also assume that the channel variation is slow enough so that the download rate remains constant over one time slot. 

\subsection{Video Model:}\label{sec:video_model}
The users are streaming scalable video, each encoded into equal length segments of $\tau_{seg}$ seconds. The segments are encoded into $L$ quality layers. Throughout the paper, we refer to each individual layer of a segment as a sub-segment. We assume that sub-segments of the same layer are of equal rate and the layer rates are denoted by $\mathcal{Q} = \{q_1,\cdots,q_L\}$. At each time slot, users receive rewards based on the quality of the video segments that are played back in that time slot. As measure for QoE, we use a reward function that maps the rate of the video that is played back to the perceived quality as follows \cite{hu2012qoe}:
\begin{equation}\label{eq:qoe}
R =
\begin{cases}
e^{-\phi\left(\frac{R_p}{R_{max}}\right)^{-\theta}+\phi}, &\quad\text{no re-buffering} \\
r_{pen}, &\quad\text{re-buffering}   
\end{cases}
\end{equation}
where $R_p$ is the rate of the video that is being played back and $R_{max} = \sum_{l=1}^Lq_l$ is the maximum rate of that segment when all layers are present. The constants $\phi$ and $\theta$ are video-specific parameters of the quality model, and it is shown in \cite{hu2012qoe} that after averaging over numerous video sequences, their values is equal to $0.16$ and $0.66$, respectively. If the playback header reaches a segment for which the base layer is not delivered, playback stalls and re-buffering occurs. In order to account for this in the reward function, we assign a penalty for all instances of re-buffering denoted by $r_{pen}$, with a value depending on the sensitivity to re-buffering. By setting the value of $r_{pen}$, we implicitly determine our desired delay-quality trade-off. Needless to say, in order to penalize re-buffering, $r_{pen}$ should be set to a value less than the reward obtained by only having the base layer. The lower the value of $r_{pen}$, the higher the penalty. The main purpose of using adaptive video instead of constant rate video is the ability to decrease the quality of the video whenever there is risk of re-buffering. Hence, we suggest a low value for $r_{pen}$ throughout our simulation study in order to avoid re-buffering as much as possible.

\subsection{Quality Adaptation:}

In order to model delivery scenarios with arbitrary QA, we develop a matrix representation of the end user buffer and call it the \emph{policy matrix}. In this section, we describe how the policy matrix for each QA is derived. In Section \ref{sec:prob_form}, the policy matrix is used for the formulation of the optimization problem.  

We define the policy matrix $\mathbf P^{\pi_n}(c_{n,t})$ as a binary transition matrix representing the QA policy $\pi_n$ that is applied when user $n$ is in channel state $c_{n,t} \in \mathcal{C}$ at time $t$. Assuming the policy to be stationary, we can drop the time index from now on. For a buffer limit of $b_{max}$, the policy matrix determines all possible sub-segment deliveries that are allowed by policy $\pi_n$ in channel state $c_n$ in one time slot. Since each layer can have any number of sub-segments between 0 and $b_{max}$, the size of the policy matrix is $(b_{max}+1)^L\times (b_{max}+1)^L$. Each row of this matrix represents a particular buffer state at any time slot prior to selecting the next sub-segments to deliver, and each column represents the state of the buffer right after policy $\pi_n$ is applied. Hence, if the element in the $i^{th}$ row and $j^{th}$ column of $\mathbf{P}(c_n)$ is 1, it means that the policy chooses to download those sub-segments for which the buffer state changes from $i$ to $j$. 

Figure \ref{fig:buf} illustrates the concept of policy matrix with a simple example. It shows an end user buffer streaming a video that is encoded into a base and two enhancement layers. Suppose that under the current channel conditions, the user can receive one sub-segment in the current time slot ($c=1$Mbps). The current buffer state is denoted by $i = (6,4,1)$, showing the number of sub-segments per layer. Assume that the next sub-segment to be requested is from the second enhancement layer. However, since the slot is one second, one segment of the video will be played back and the final state of the buffer will be $j = (5,3,1)$. Therefore, the $i^{th}$ row of the policy matrix is constructed as follows (other rows are constructed in a similar fashion):

\begin{equation}\label{eq:pol_mat}
\mathbf{P}(c_n)_{i,j}=
\begin{cases}
1, &\quad\text{if}~j=(5,3,1) \\
0, &\quad\text{otherwise}   
\end{cases}
\end{equation}

With this technique, any arbitrary QA mechanism can be modeled as a set of policy matrices, each representing a particular channel state. For the remainder of our analysis, we consider three different QA policies:
\begin{enumerate}
\item \emph{Diagonal Buffer Policy (DBP)}: Results from existing research \cite{hosseini2015not, andelin2012quality} suggest that it is optimal to pre-fetch lower layers first, and fill higher layers after. In this policy, which we call the \emph{diagonal policy}, the user starts pre-fetching sub-segments from the lowest layer until the difference between the sub-segments of that layer and the one above reaches a certain \emph{pre-fetch threshold}, at which point it switches to the layer above, and this continues for all layers. This way, the difference between the buffer occupancy for each layer with the subsequent upper layer is kept at the fixed pre-fetch threshold. The policy depicted in Figure \ref{fig:buf} is an example of the diagonal policy. It should be noted that each two neighboring layers can have different pre-fetch thresholds depending on their respective segment sizes, the additional video quality they provide and design preferences. 
\item \emph{Channel Based Policy (CBP)}: In this scheme, users conservatively request more base layers whenever they are in bad channel conditions and gradually become more aggressive and request more enhancement layers as the channel condition improves \cite{kuschnig2010evaluation}. 
\item \emph{Base layer Priority Policy (BPP)}: In this scheme, base layer sub-segments are requested while buffer occupancy is low. After buffer is filled beyond a certain limit, the policy switched to full quality segments. This method has been proposed for single layered DASH video delivery \cite{joseph2014nova}. 
\end{enumerate}
There are many different ways to design a CBP or BPP policy. In Section \ref{sec:prob_form} we describe the particular CBP and BPP policies we used for the simulations. 

\begin{figure}[t!]
\includegraphics[width = 0.47\textwidth]{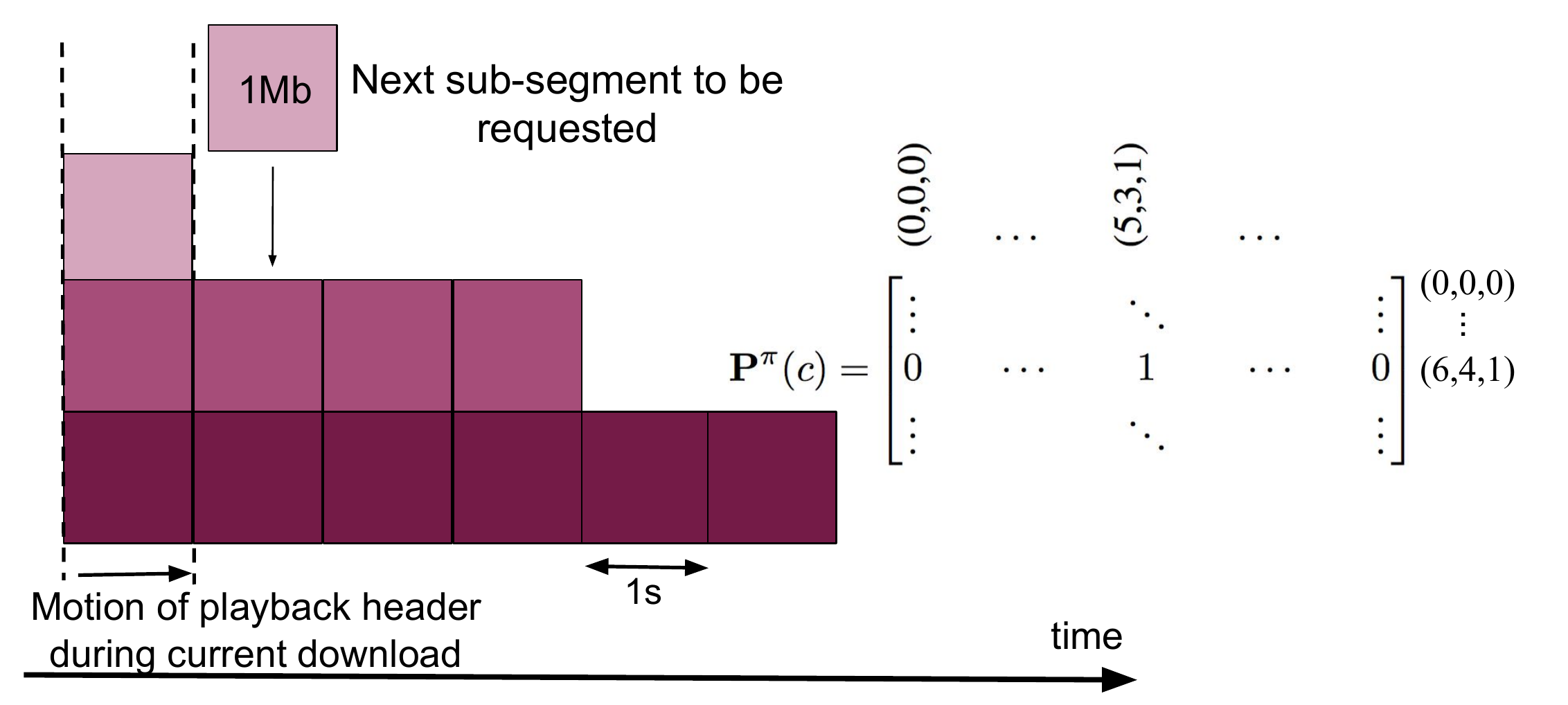}
\caption{Policy matrix and corresponding receiver buffer under DBP with pre-fetch threshold equal to two. The QA policy chooses the next sub-segment from the top enhancement layer.}\label{fig:buf}
\end{figure}

\section{Problem Formulation}\label{sec:prob_form}
In this section, we first formulate the QA-adaptive scheduling as a Restless Bandit (RB). In order to compare the optimal solution of this formulation with a jointly optimal QA and scheduling scheme, we also formulate the latter using a MUSMDP. 

\subsection{QA-adaptive scheduling}

We assume that similar to Figure \ref{fig:sys_sep}, QA is determined by the content provider. The scheduler takes the QA of each user as input prior to the start of the streaming process and optimizes the schedule accordingly. Each time the previously requested segments are delivered, users request new segments from the video server. They also send Channel Quality Indicator (CQI) messages to the base station every time slot. The video server sends the requested segments to the base station where they are buffered and scheduled for delivery. 

In order to formulate the RB problem, we first model the state space of any user $n$, denoted by $\mathcal{S}_n$. $\mathcal{S}_n$ is defined as the combination of the instantaneous channel state $c_n$ and the current state of the buffer. We define the state of the buffer as a vector $\mathbf {b_n}$ representing the number of sub-segments the user has currently stored in the buffer for each layer, i.e., $\mathbf{b_n} = (b_{n,l})_{1:L}$. Therefore, the state space can be represented as $\mathcal{S}_n=\{(c_n,\mathbf{b_n})|c_n\in\mathcal{C},b_{n,l}\in\{0,\cdots,b_{max}\}\}$ and is of size $|\mathcal{C}|(1+b_{max})^L$. At each time slot $k$, the policy taken by the scheduler is in the form of an action vector $\mathbf{a_k} = (a_{n,k})_{1:N}$ of size $N$, where $a_{n,k}$ is set to one for scheduled users and zero for the rest. Hence, each user can be modeled as a bandit that is either operated in a slot or not. Since the scheduler does not control the users action due to the arbitrary QA, the bandits are uncontrolled and therefore, satisfy all necessary conditions for RB \cite{gittins1979bandit}. 

The transition from one state to the next depends on the channel transition matrix and the policy matrix as well as if the user was scheduled (active) in that time slot or not (passive). The structure of the transition matrix is similar to the policy matrix with the difference that here we also include the instantaneous channel state. We define two state transition matrices for the active and passive users and denote them as $\mathbf{H_n^1}$ and $\mathbf{H_n^0}$, respectively. In the passive case, the user cannot request any new sub-segments and can only play back the existing segments in the buffer. Therefore, the passive policy matrix $\mathbf{P_n^0}$ indicates transitions for which the occupancy of each layer is decremented by one. If no base layer is left in the buffer, no playback is possible, and therefore, no change occurs in the state of the buffer. We count this as an instance of re-buffering. The same procedure is followed for all channel states and we can write the state transition matrix $\mathbf{H_n^0}$ as follows:
\begin{equation}\label{eq:h0}
\mathbf{H_n^0} = \mathbf{C_n}\otimes\mathbf{P_n^0},
\end{equation}
where $\otimes$ is the Kronecker product. For the active state transition matrix $\mathbf{H_n^1}$, we need to create the policy matrices for all channel states ($\mathbf{P}^{\pi_n}(c_n), c_n\in\mathcal{C}$) since, depending on the available data rate, a different number of sub-segments can be delivered in every time slot. After determining the policy matrices for all channel states, the active state transition matrix can be derived as:

\begin{eqnarray}\label{eq:h1}
\mathbf{H_n^1} = 
\begin{bmatrix}
C_{n,1,1}\mathbf{P}^{\pi_n}(1)&\cdots&C_{n,1,|\mathcal{C}|}\mathbf{P}^{\pi_n}(1)\\
\vdots&\ddots&\vdots\\
C_{n,|\mathcal{C}|,1}\mathbf{P}^{\pi_n}(|\mathcal{C}|)&\cdots&C_{n,|\mathcal{C}|,|\mathcal{C}|}\mathbf{P}^{\pi_n}(|\mathcal{C}|)
\end{bmatrix}
\end{eqnarray}

The objective function is the expected discounted sum of rewards received by the users throughout the streaming process and is expressed as:
\begin{equation}\label{eq:objrb}
\max_u\mathbb{E}_u\left[\sum_{k=0}^\infty\sum_{n \in \mathcal{N}}R_{s_{n,k}}^{a_{n,k}}\beta^k\right],
\end{equation} 
where $\beta$ is the discount factor ($0 < \beta < 1$) and $R_{s_{n,k}}^{a_{n,k}}$ is the reward received by user $n$ if it is in state $s_{n,k}$ in time slot $k$ and chooses action $a_{n,k}$, and is calculated according to (\ref{eq:qoe}). The goal of the optimal scheduler is to determine the optimal policy $u$ in such a way that the expected sum of received rewards is maximized with respect to the resource constraint bounding the number of active users at each time slot to $M$. We assume that segments that are downloaded in each time slot cannot be played back in the same slot. This is also the case for real video delivery in which after a segment is received, it takes some time for decoding and processing before it becomes available for playback. With this assumption, the immediate reward of a user is independent of the immediate action of the scheduler, and we can write $R_{s_{n,k}}^0 = R_{s_{n,k}}^1= R_{s_{n,k}}$.

Next, we define for every user $n$ and scheduling policy $u$, the performance measures $x_{s_n}^a(u)$, where $a$ is either zero or one for the passive and active case, respectively. These performance measures are then defined as follows:
\begin{equation}\label{eq:perf_measure}
x_{s_n}^a(u) = \mathbb{E}_u\left[\sum_{k = 0}^\infty I^a_{s_n}(k)\beta^k\right],
\end{equation}
where
\begin{equation}\label{eq:indicator}
    I^a_{s_n}(k)=\left\{
                \begin{array}{ll}
                  1& \text{If in slot $k$, user $n$ is in state $s_n$}\\
                  & ~\text{and is assigned action $a$}\\
                  0 & \text{otherwise}
                  \end{array}
              \right.
\end{equation}
Essentially, $x_{s_n}^a(u)$ is the expected discounted amount of time that policy $u$ assigns action $a$ to user $n$ whenever the user is in state $s_n$. It is proved in \cite{d1960probleme} that the set of all Markovian policies for user $n$ following the transition matrix $\mathbf{H}^a_n$, forms a polytope which can be represented as follows:
\begin{eqnarray}\label{eq:polytope}
\mathcal{Q}_n &=& \Bigg\{\mathbf{x_n}\in\mathcal R_+^{|\mathcal{S}_n\times\{0,1\}|} \Bigg |x_{j_n}^0+x_{j_n}^1 \nonumber\\ 
 &=& \alpha_{j_n}+\beta\sum_{i_n\in\mathcal{S}_n}\sum_{a\in\{0,1\}}h_{i_nj_n}^ax_{i_n}^a,j_n\in\mathcal{S}_n\Bigg\},
\end{eqnarray}
where $\alpha_{j_n}$ represents the probability of $j_n$ being the initial state for user $n$.

Consequently, the RB can be formulated as the following linear program:
\begin{eqnarray}\label{eq:opt_1}
&\max_\mathbf{x} &\sum_{n\in\mathcal N}\sum_{s_n\in\mathcal S_n}\sum_{a\in\{0,1\}}R_{s_n}x_{s_n}^a \label{eq:objective}\\\nonumber
&&\text{subject to:} \\\label{eq:depinoux}
&&\mathbf{x_n}\in\mathcal{Q}_n, ~~n\in\mathcal N \\\label{eq:whittle}
&&\sum_{n \in \mathcal N}\sum_{s_n\in\mathcal S_n}x_{s_n}^1 = \frac{M}{1-\beta},
\end{eqnarray}
where $\mathbf{x} = (\mathbf{x}_1,\cdots,\mathbf{x}_N)$ and $\mathbf{x}_i = (x_{s_i}^0,x_{s_i}^1)_{s_i\in\mathcal{S}_i}$. The objective function is derived by simply replacing the performance measure from (\ref{eq:perf_measure}) into the original objective function (\ref{eq:objrb}). It should be noted that since the quality adaptation is pre-determined, the users are modeled as uncontrolled agents, i.e., the action space only determines if the user is active or passive without specifying what users do in the active mode. This is by definition the classic Restless Bandit (RB) problem including a set of agents (bandits) where a fixed number of them are activated in every time slot. All bandits, whether active or not, change state and receive a reward for the next slot. 

Although, based on our network model, the number of active users per slot is kept at a constant $M$, RB fixes the average number of active users per slot instead (see the resource constraint (\ref{eq:whittle})). According to RB theory \cite{weber1990index}, if the size and capacity of the network grow infinitely large ( $N,M \rightarrow \infty$) while $\frac{M}{N}$ remains fixed, the solution to RB asymptotically converges to the case with a constant number of users per slot. Therefore, for a fixed $\frac{M}{N}$, the larger the network, the closer RB will get to our desired solution.

We can simplify the problem for the cases in which all users are \emph{homogeneous}, i.e., they have the same buffer limit, use the same QA, and have similar video and channel characteristics. In this case, the polytope constraint (\ref{eq:polytope}) becomes identical for all users.
\begin{theorem}\label{prop:simpl}
If the users in a network are homogeneous, an allocation policy that results in equal active and passive service time in each state for every user is an optimal allocation policy for RB. In other words, $\mathbf{x}_i^0 = \mathbf{x}_j^0$ and $\mathbf{x}_i^1 = \mathbf{x}_j^1, \forall i,j\in\mathcal{N}$ in the optimal point, where $\mathbf{x}_i^0 = (x_{s_i}^0)_{s_i\in\mathcal{S}_i} $ and $\mathbf{x}_i^1 = (x_{s_i}^1)_{s_i\in\mathcal{S}_i} $ . 
\end{theorem}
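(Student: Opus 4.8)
The plan is to exploit the permutation symmetry of the homogeneous problem together with the convexity of its feasible region, producing a symmetric optimizer by averaging. Under homogeneity the per-user polytopes all coincide, say $\mathcal{Q}_n = \mathcal{Q}$, the per-state rewards agree, $R_{s_n} = R_s$, and the state spaces, transition matrices $\mathbf{H}^a_n$, and initial distributions $\alpha_{j_n}$ are common; hence the linear program (\ref{eq:objective})--(\ref{eq:whittle}) is invariant under any relabeling of the users. The objective is linear, and the feasible set---the intersection of the convex polytopes $\mathcal{Q}$ (one per user) with the single linear coupling hyperplane (\ref{eq:whittle})---is convex. This is exactly the structure that makes an averaging argument go through.

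Concretely, I would start from any optimal solution $\mathbf{x}^\star = (\mathbf{x}_1^\star,\dots,\mathbf{x}_N^\star)$, whose existence is guaranteed since (\ref{eq:objective})--(\ref{eq:whittle}) is a feasible, bounded linear program (each performance measure is a discounted occupation time, hence bounded by $\frac{1}{1-\beta}$). I then form the user-averaged vector $\bar{\mathbf{x}} = \frac{1}{N}\sum_{m\in\mathcal{N}} \mathbf{x}_m^\star$ and take the candidate $\hat{\mathbf{x}} = (\bar{\mathbf{x}},\dots,\bar{\mathbf{x}})$, assigning the same component $\bar{\mathbf{x}}$ to every user. By construction $\hat{\mathbf{x}}$ satisfies $\mathbf{x}_i^0 = \mathbf{x}_j^0$ and $\mathbf{x}_i^1 = \mathbf{x}_j^1$ for all $i,j\in\mathcal{N}$, so it already has the symmetric form claimed; the task reduces to checking that $\hat{\mathbf{x}}$ is feasible and optimal.

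Feasibility splits into two parts. Since each $\mathbf{x}_m^\star$ lies in the common polytope $\mathcal{Q}$ and $\mathcal{Q}$ is convex, the average $\bar{\mathbf{x}}$ again lies in $\mathcal{Q}$, so the polytope constraint (\ref{eq:polytope}) holds for every user. For the coupling constraint (\ref{eq:whittle}), linearity gives $\sum_{n}\sum_{s}\bar{x}_s^1 = N\sum_s \bar{x}_s^1 = \sum_{m}\sum_s (x_m^\star)_s^1 = \frac{M}{1-\beta}$, so the active-time budget is preserved exactly. Optimality is then immediate from linearity of the objective together with homogeneity of the rewards: the objective at $\hat{\mathbf{x}}$ equals $N\sum_{s}\sum_{a}R_s \bar{x}_s^a = \sum_{m}\sum_{s}\sum_{a}R_s (x_m^\star)_s^a$, which is precisely the optimal value attained by $\mathbf{x}^\star$. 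Hence $\hat{\mathbf{x}}$ is a symmetric optimal policy, which establishes the claim.

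The one place requiring care---and what I regard as the crux---is that feasibility and optimality must survive the \emph{same} averaging operation simultaneously: convexity of $\mathcal{Q}$ is what handles the per-user polytope membership, while the fact that both the coupling constraint (\ref{eq:whittle}) and the objective (\ref{eq:objective}) are linear, with rewards identical across users, is what lets the global active-time budget and the objective value pass through the average unchanged. I would also make explicit at the outset that homogeneity genuinely forces $\mathcal{Q}_n$, $R_{s_n}$, $\mathcal{S}_n$, and $\alpha_{j_n}$ to coincide, since the whole construction rests on a single shared polytope and reward vector; once that is pinned down, the remainder is a routine convexity-plus-symmetry computation with no further obstacles.
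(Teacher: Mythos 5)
Your proposal is correct and follows essentially the same route as the paper's proof in Appendix A: both symmetrize an arbitrary optimal solution by averaging over users, then verify that the average remains in the (common) polytope, satisfies the coupling constraint by linearity, and preserves the objective value because rewards are identical across users. The only cosmetic difference is that you invoke convexity of $\mathcal{Q}$ directly where the paper writes out the averaged affine equations $\mathbf{A}^0\tilde{\mathbf{x}}^0+\mathbf{A}^1\tilde{\mathbf{x}}^1=\alpha$ explicitly.
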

\begin{proof}
Refer to Appendix \ref{app_1}.
\qed
\end{proof}
\begin{corollary} \label{cor:groups}
If the homogeneity conditions hold, RB can be simplified to the following linear program: 
\begin{eqnarray}\label{eq:opt_2}
&\max_\mathbf{x}&\sum_{s\in\mathcal S}\sum_{a\in\{0,1\}}R_{s}x_{s}^a \\
&&\text{subject to:} \\
&&\mathbf{x}\in\mathcal{Q} \\
&&\sum_{s\in\mathcal S}x_{s}^1 = \frac{M}{N(1-\beta)}\label{eq:simp_res_const}
\end{eqnarray}
The number of variables in the above linear program is equal to $2|\mathcal S|$ and it has $|\mathcal S| + 1$ constraints. 
Hence, we can model a network consisting of multiple homogeneous users using the state space of a single user, and therefore significantly decrease the number of variables and constraints of RB. We can optimize the scheduling for heterogeneous (not homogeneous) users, by grouping them into multiple groups each comprising of homogeneous users, and including only one sample user per group in the optimization. In this case, there will be one polytope constraint for each group and the resource constraint changes to:
\begin{equation}
\sum_{g\in\{1,\cdots.G\}}N_{g}\sum_{s_g\in\mathcal S_g}x_{s_g}^1 = \frac{M}{1-\beta},\label{eq:simp_res_const_multclass}
\end{equation}
where $g$ is the index of the groups and $N_g$ and $\mathcal{S}_g$ represent the number of users and the state space of user in group $g$, respectively. The complexity of the problem, therefore only depends on the number of groups and not the number of users per group.
\end{corollary}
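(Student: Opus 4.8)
The plan is to deduce the Corollary directly from Theorem~\ref{prop:simpl} by exploiting the permutation symmetry that homogeneity confers on the linear program (\ref{eq:opt_1}). First I would observe that when all users are homogeneous, the per-user data defining (\ref{eq:opt_1}) are identical across $n$: the reward vectors satisfy $R_{s_n}=R_s$ (same video and quality model), the transition matrices $\mathbf{H}_n^0,\mathbf{H}_n^1$ coincide (same QA, buffer limit and channel statistics), and the initial distributions $\alpha_{j_n}$ agree, so that every polytope constraint (\ref{eq:polytope}) is literally the same set $\mathcal{Q}_n=\mathcal{Q}$. Consequently the full program is invariant under any relabelling of the users, and its feasible region is the intersection of $N$ identical copies of $\mathcal{Q}$ with the single coupling constraint (\ref{eq:whittle}).

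Next I would invoke Theorem~\ref{prop:simpl}, which guarantees the existence of an optimal point of (\ref{eq:opt_1}) that is \emph{symmetric}, i.e. $\mathbf{x}_i^0=\mathbf{x}_j^0$ and $\mathbf{x}_i^1=\mathbf{x}_j^1$ for all $i,j$. Writing the common value as $(\mathbf{x}^0,\mathbf{x}^1)$ and substituting into (\ref{eq:opt_1}), each term decouples into $N$ identical contributions: the objective becomes $N\sum_{s\in\mathcal{S}}\sum_{a}R_s x_s^a$, whose constant factor $N$ is immaterial to the arg-max and may be dropped to recover the objective of (\ref{eq:opt_2}); the $N$ identical polytope constraints collapse to the single requirement $\mathbf{x}\in\mathcal{Q}$; and the resource constraint (\ref{eq:whittle}) becomes $N\sum_{s}x_s^1=\frac{M}{1-\beta}$, i.e. (\ref{eq:simp_res_const}). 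Since any feasible point of the reduced program (\ref{eq:opt_2}) replicates to a symmetric feasible point of (\ref{eq:opt_1}) and conversely, and the optimum is attained at a symmetric point, the two programs are equivalent. The variable/constraint count then follows by inspection: (\ref{eq:opt_2}) has the $2|\mathcal{S}|$ unknowns $(x_s^0,x_s^1)_{s\in\mathcal{S}}$, one equality per state $j\in\mathcal{S}$ coming from $\mathcal{Q}$ (that is $|\mathcal{S}|$ constraints) together with the single resource equation, for a total of $|\mathcal{S}|+1$.

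For the heterogeneous case I would partition $\mathcal{N}$ into $G$ groups, each internally homogeneous, and apply the symmetry argument group-wise: Theorem~\ref{prop:simpl} applied within group $g$ yields an optimum in which all $N_g$ users of that group share a common $(\mathbf{x}^0_g,\mathbf{x}^1_g)$, so each group collapses to a single representative carrying multiplicity $N_g$. The objective and polytope constraints then reduce to one per group, while the coupling constraint (\ref{eq:whittle}) becomes the weighted sum (\ref{eq:simp_res_const_multclass}), making the size of the reduced program depend only on $G$ and the per-group state spaces rather than on the $N_g$.

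The main obstacle, and the step carrying the real content, is the existence of a symmetric optimizer, which is precisely the assertion of Theorem~\ref{prop:simpl}; once that is granted the reduction is a routine substitution. The only points requiring care are (i) verifying that \emph{every} ingredient of (\ref{eq:opt_1})---reward, both transition matrices, and the initial-state distribution $\alpha$---is genuinely shared across homogeneous users, since the polytope in (\ref{eq:polytope}) depends on $\alpha$ and on $\mathbf{H}^a$; and (ii) confirming that the replicate-and-restrict correspondence between feasible points of the two programs preserves optimal value, so that no loss is incurred by passing to the single-user formulation.
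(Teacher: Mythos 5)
Your proposal is correct and follows essentially the same route as the paper: the corollary is presented there without a separate proof precisely because it is the routine substitution you describe, resting entirely on the symmetric optimizer guaranteed by Theorem~\ref{prop:simpl} (whose averaging argument also applies group-wise for the heterogeneous extension). Your added care about the shared initial distribution $\alpha$ and the replicate-and-restrict correspondence is consistent with, and slightly more explicit than, the paper's implicit reasoning.
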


\subsection{Joint Optimal QA and Scheduling}

In this section, we formulate the joint problem of optimally requesting new segments and scheduling users using the same network and video models described earlier. Similar to the RB, every user is modeled as an independent agent that changes state in a Markovian fashion. In this case, the action space for each user $n$ contains the passive action as well as the index of the layer of the sub-segment to be downloaded in the active case and can be represented as $a_n = \{0,1,\cdots,L\}$. The key difference between this formulation and the RB is that previously, we assumed that actions are taken at every time slot, and that within each slot multiple sub-segments can be downloaded if the user is active. Now, we assume that if the user is active, each decision is made once the previous action is fully executed. Therefore, actions will have different durations depending on the state of the user and the problem becomes a MUSMDP. 

In order to properly formulate the MUSMDP, we define the duration of a time slot to be the duration of the shortest possible action $\tau_{slot} = \frac{\min_l q_l}{\max(\mathcal{C})}$, where $q_l$ is the size of layer $l \in\{1,\cdots,L\}$. Since the time slot duration is generally shorter than the duration of a segment, at every time slot we will have partially played back segments and the fraction of the playback needs to be included in the state space representation as well. Hence, the state space of user $n$ is denoted by $\mathcal{S}_n=\{(c_n,\mathbf{b_n},u_n)|c_n\in\mathcal{C},b_{n,l}\in\{0,\cdots,b_{max}\},u_n\in\{0,\cdots,\frac{\tau_{seg}}{\tau_{slot}}-1\}\}$, where $c_n$ and $\mathbf{b}_n$ are defined as before and $u_n$ is the number of time slots that have passed since the current segment started playing back. For simplicity of notation we assume that $\tau_{seg}$ is an integer multiple of $\tau_{slot}$. Figure \ref{fig:smdp} illustrates the state space under the new setting in a simple way. In this figure, $u$ is initially zero. After the next sub-segment is delivered, playback continues and the playback header points at $u = \Delta t$.

\begin{figure}[t!]
\centering
\includegraphics[width = 0.35\textwidth]{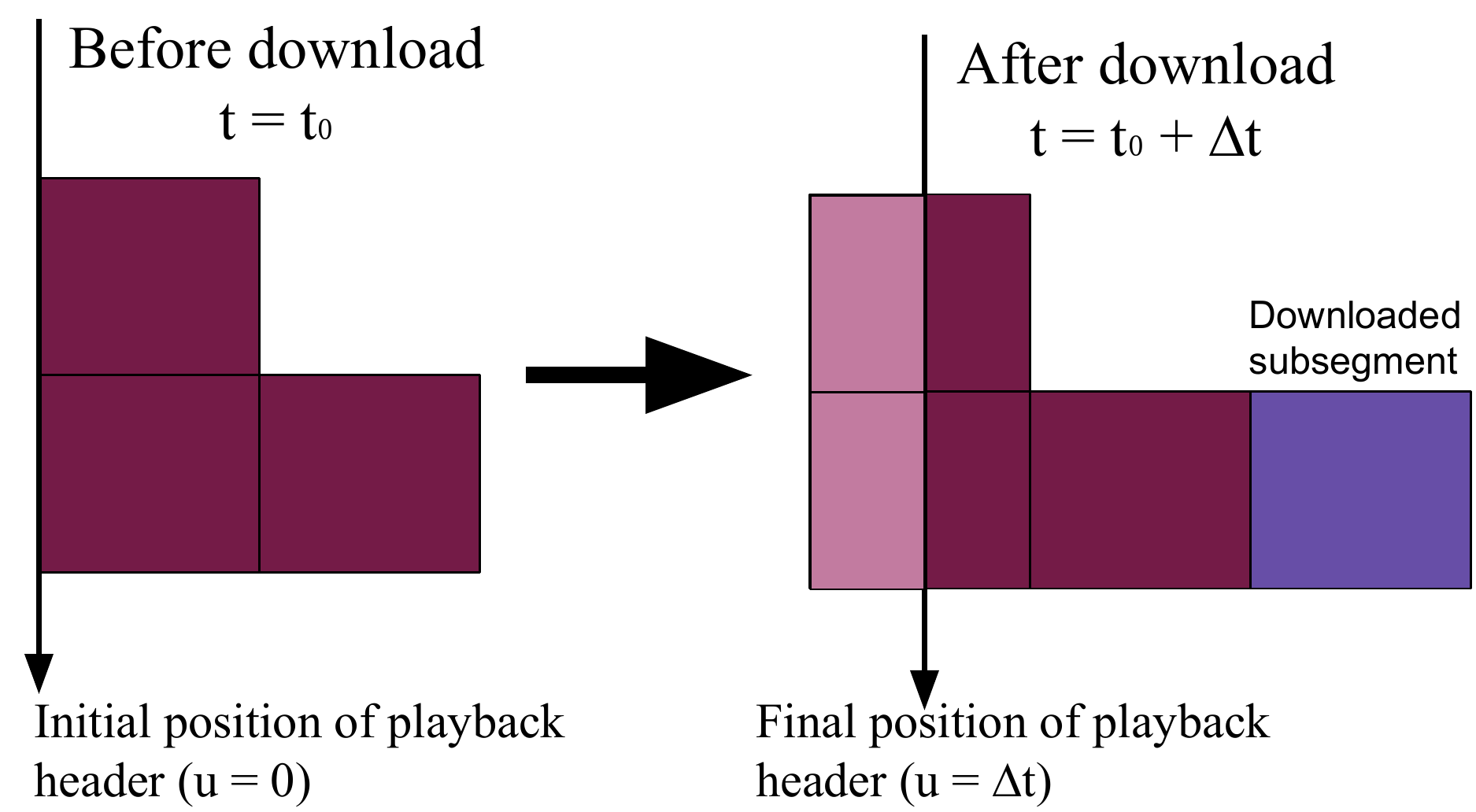}
\caption{Buffer evolution model for the MUSMDP. Decisions are made at unequal time intervals, hence, the fraction of playback for each segment ($u$) is included in the state space.}\label{fig:smdp}
\end{figure}

We can turn a discounted SMDP into a discounted MDP by modifying the transition probabilities and reward function \cite{puterman2014markov}. First, we need to develop policy matrices for each of the actions. For the passive action, we assume that for the duration of one time slot, the user plays back the video in the buffer without adding any segment to it. For the active cases, we generate one matrix per layer $l$ represented by $\mathbf{H}^l$. Since the channel transitions occur at every time slot, it is not known beforehand how many time slots each action takes. In order to determine the probability distribution of the duration of action $l$, we first define the random variable $\tau_{l}$ as the minimum number of time slots required to fully execute action $l$ given that the initial channel state is $c$ as follows:
\[
f^l_c(\tau_l) = P\left.\left(\sum_{k=1}^{\tau_l}c_k \geq \frac{q_l}{\tau_{slot}}\right| c_1 = c\right),
\]
where $c_k$ is the available rate in time slot $k$ and it transitions according to the channel matrix $\mathbf{C}$. By considering all possible trajectories of channel state transitions, we can determine the joint probability distribution of $\tau_l$ and the final channel state, given the initial channel state, as $f^l_{c}(t,j) = P\left(\tau_l = t ,c_{\tau_l + 1}=j| c_1 = c \right)$. 

In order to generate the policy matrix for user $n$ for the active cases, we consider the next state as the one in which $b_{n,l}$ is incremented by one and $u_n$ is incremented by the duration of the action. Therefore, the policy matrices $\mathbf{P}^l(\tau,c_n)$ is also a function of the number of time slots required for executing action $l$. Given that we apply discounting on a slot by slot basis, the policy matrix $\mathbf{H}^l_n$ can be written as:

\begin{eqnarray}\label{eq:hl}
\mathbf{H}_n^l = 
\begin{bmatrix}
\mathbf{V}^l_{1,1}&\cdots&\mathbf{V}^l_{1,|\mathcal{C}|}\\
\vdots&\ddots&\vdots\\
\mathbf{V}^l_{|\mathcal{C}|,1}&\cdots&\mathbf{V}^l_{|\mathcal{C}|,|\mathcal{C}|}
\end{bmatrix}
\end{eqnarray}

where:

\begin{equation}\label{eq:vl}
\mathbf{V}^l_{i,j} = \sum_{k=1}^\infty e^{-sk}f^l_{i}(k,j)\mathbf{P}^l(k,i),
\end{equation}

where $e^{-s}$ is the discount factor and $l = \{0,1,\cdots,L\}$. For the passive case where $l=0$, we have $f^0_{i,j}(k)$ equal to one for $k=1$ and zero for all higher values since we define the passive action duration to be one time slot.
  
Suppose that user $n$ makes its $m^{th}$ decision at time $\sigma_n^m$. Then, the objective function of the MUSMDP is the following:
\begin{equation}\label{eq:musmdp_new_objective}
\max_u \mathbb{E}_u \left(\sum_{n\in\mathcal{N}}\sum_{m=0}^\infty e^{-s\sigma_n^m} \sum_{k=0}^{\tau_l} e^{-sk}R^k_{s_n}\right), 
\end{equation}
where $e^{-s}$ is the discount factor, $\tau_l$ is the duration of action $l$ under policy $u$ and $R^k_{s_n}$ is the reward obtained by user $n$ in state $s_n$ after $k$ time slots have passed. The above expression can be turned into an equivalent discounted RB using the theorem below:
 
\begin{theorem}\label{theorem:musmdp}
The MUSMDP can also be formulated as a linear program as follows:
 \begin{eqnarray}\label{eq:opt_1}
&\max_\mathbf{y} &\sum_{n\in\mathcal N}\sum_{s_n\in\mathcal S_n}\sum_{a\in\{0,1,\cdots,L\}}\bar r_{s_n}^{a_n}y_{s_n}^{a_n} \label{eq:objective_smdp}\\\nonumber
&&\text{subject to:} \\\label{eq:depinoux_smdp}
&&\mathbf{y_n}\in\mathcal{P}_n, ~~n\in\mathcal N \\\label{eq:whittle_smdp}
&&\sum_{n \in \mathcal N}\sum_{a_n=1}^L\sum_{s_n\in\mathcal S_n}y_{s_n}^{a_n}\bar\tau_{s_n}^{a_n} = \frac{M}{1-e^{-s}}, \\\label{eq:resource_smdp}
\end{eqnarray}

where 
\[
y_{s_n}^{a_n}(u) = \mathbb{E}_u\left[\sum_{k = 0}^\infty I^{a_n}_{s_n}(k)e^{-sk}\right],
\]
and $I^{a_n}_{s_n}(k)$ is defined as (\ref{eq:indicator}) and $\bar{r}_{s_n}^{a_n} = \sum_{t=0}^\infty \sum_{c\in\mathcal{C}}\left(\sum_{k=0}^te^{-sk}R_{s_n}^k f_{s_n}^{a_n}(t,c)\right)$. Furthermore, the equivalent polytope constraint turns into the following:

\begin{eqnarray}\label{eq:polytope_smdp}
\mathcal{P}_n&=& \Bigg\{\mathbf{y_n}\in\mathcal R_+^{|\mathcal{S}_n\times\{0,1,\cdots,L\}|} \Bigg |\sum_{{a_n}\in\{0,1,\cdots,L\}} y_{j_n}^l \\ 
 &=& \alpha_{j_n}+\sum_{i_n\in\mathcal{S}_n}\sum_{{a_n}\in\{0,1,\cdots,L\}}h_{i_nj_n}^{a_n}y_{i_n}^{a_n},j_n\in\mathcal{S}_n\Bigg\}, \nonumber
\end{eqnarray}

Finally, $\bar\tau_{s_n}^{a_n}$ is the expected discounted duration of action $a_n$ if the user is in state $s_n$ when the action is taken and it can be written as $\bar\tau_{s_n}^{a_n} =\sum_{t=0}^\infty \sum_{c\in\mathcal{C}}\left(\sum_{k=0}^te^{-sk}f_{s_n}^{a_n}(t,c)\right)$.
\end{theorem}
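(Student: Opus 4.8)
The plan is to mirror the Restless Bandit derivation given earlier, but with two modifications that account for the semi-Markov structure: first, convert the discounted SMDP into an equivalent discounted MDP, and second, convert the LP variables from a \emph{per-decision} occupation measure into a \emph{per-slot} budget in the resource constraint. The key observation is that the transition matrices $\mathbf{H}_n^l$ in (\ref{eq:hl})--(\ref{eq:vl}) already bundle the random action duration together with the slot-by-slot discounting: each block $\mathbf{V}_{i,j}^l$ integrates $e^{-sk}$ against the joint duration/channel law $f_i^l(k,j)$, so $\mathbf{H}_n^l$ is a \emph{substochastic} matrix whose row defect $1-\sum_k e^{-sk}P(\tau_l=k\,|\,c_1=i)$ is exactly the discount mass lost while executing action $l$. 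This is precisely the discounted SMDP-to-MDP reduction of \cite{puterman2014markov}, and once it is in place the problem is a standard discounted MDP per user, so the remainder follows the occupation-measure template of d'Epenoux \cite{d1960probleme}.

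First I would establish that $\bar{r}_{s_n}^{a_n}$ is the correct single-decision reward for the reduced MDP. Starting from the SMDP objective (\ref{eq:musmdp_new_objective}), I would condition on the state $s_n$ and action $a_n$ chosen at the $m$-th decision epoch $\sigma_n^m$: the factor $e^{-s\sigma_n^m}$ discounts back to that epoch, while the inner sum $\sum_{k=0}^{\tau_l}e^{-sk}R_{s_n}^k$, averaged over the joint law $f_{s_n}^{a_n}(t,c)$ of duration and terminal channel, is exactly $\bar{r}_{s_n}^{a_n}$. Summing the discounted rewards over all decision epochs at which user $n$ occupies $(s_n,a_n)$ and taking expectations produces the occupation measure $y_{s_n}^{a_n}(u)$; hence the objective collapses to $\sum_n\sum_{s_n}\sum_{a_n}\bar{r}_{s_n}^{a_n}y_{s_n}^{a_n}$, which is (\ref{eq:objective_smdp}).

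Next I would derive the polytope $\mathcal{P}_n$ in (\ref{eq:polytope_smdp}). For a fixed user following the reduced MDP, the discounted occupation measures satisfy the usual Kolmogorov flow-balance identity: the total discounted mass leaving state $j_n$, summed over the outgoing action (the left-hand side $\sum_{a_n}y_{j_n}^{a_n}$), equals the mass injected at $t=0$, namely $\alpha_{j_n}$, plus the discounted inflow $\sum_{i_n}\sum_{a_n}h_{i_nj_n}^{a_n}y_{i_n}^{a_n}$ from every predecessor state-action pair. The only departure from the RB polytope (\ref{eq:polytope}) is the absence of an explicit factor $\beta$ in front of the inflow term: here the discount is already inside $h_{i_nj_n}^{a_n}$ by construction of $\mathbf{H}_n^l$. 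I would verify this identity by substituting the definition of $y$ into both sides and applying a one-step shift of the indicator process, exactly as in the proof that $\mathcal{Q}_n$ is the achievable polytope, which also supplies the converse direction needed for LP equivalence (every polytope point is realized by a Markovian policy).

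The main obstacle is the resource constraint (\ref{eq:whittle_smdp}). In the RB the budget $\tfrac{M}{1-\beta}$ counted expected discounted active \emph{slots}, and since each slot hosted exactly one decision this count coincided with $\sum_n\sum_{s_n}x_{s_n}^1$ as in (\ref{eq:whittle}). Here a single active decision occupies the random number of slots $\tau_l$, so the per-decision measures $y_{s_n}^{a_n}$ must be reweighted to recover a slot count. The correct weight is the expected discounted duration $\bar{\tau}_{s_n}^{a_n}=\sum_{t}\sum_{c}\sum_{k=0}^{t}e^{-sk}f_{s_n}^{a_n}(t,c)$, and the crux is to show that $\sum_n\sum_{a_n\ge 1}\sum_{s_n}y_{s_n}^{a_n}\bar{\tau}_{s_n}^{a_n}$ is the total expected discounted number of slots in which a user is active, which the Whittle relaxation fixes to $\tfrac{M}{1-e^{-s}}=\sum_{k\ge 0}Me^{-sk}$. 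The delicate point is that the product $y_{s_n}^{a_n}\,\bar{\tau}_{s_n}^{a_n}$ is legitimate only because, by the strong Markov property, the duration of an action depends on the trajectory solely through the state $s_n$ (which carries the initiating channel state) at the decision epoch; this conditional independence lets the discount accumulated up to $\sigma_n^m$ factor out from the discounted slot-count accrued during the action. Verifying that $\bar{\tau}$ performs exactly this slot-level bookkeeping, and that no discount mass is double-counted across the boundary between consecutive actions, is the step requiring the most care.
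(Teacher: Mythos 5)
Your proposal is correct and follows essentially the same route as the paper's proof: reduce the discounted SMDP to a discounted MDP whose transition kernel is the duration-and-discount-absorbing matrix $\mathbf{H}_n^l$ (so the reduced reward is $\bar r_{s_n}^{a_n}$ and the polytope follows the d'Epenoux flow-balance template with the discount folded into $h_{i_nj_n}^{a_n}$), then obtain the resource constraint by factoring the expected discounted active-slot count into the occupation measure times $\bar\tau_{s_n}^{a_n}$. Your explicit appeal to the strong Markov property to justify that factorization is a slightly more careful articulation of the step the paper carries out directly in its expectation computation, but it is the same argument.
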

\begin{proof}
Refer to Appendix \ref{app_2}.
\qed
\end{proof}

\subsection{Evaluation}
Now, we evaluate the optimal performance of each of the above scenarios in order to determine the loss incurred by abandoning the jointly optimal scheme for the more practical QA-adaptive method. In order to show this, we solve the above problems with identical system settings described in Table \ref{table:params}. 

\begin{table}[h] 
\centering
\begin{tabular}{|c|c|}
\hline
\textbf{Parameter} & \textbf{Value} \\
\hline
number of users & 20\\
\hline
subchannel bandwidth & 2MHz \\
\hline
channel states per subchannel & (1,2,5,10) Mbps\\ 
\hline
number of layers & 2\\
\hline
video rate per layer & (1,1) Mbps \\
\hline
buffer limit & 20s \\
\hline
segment duration & 1s \\
\hline
discount factor per second & 0.99 \\
\hline
$r_{pen}$ & 0 \\
\hline
\end{tabular}\caption{Parameters}\label{table:params}
\end{table}

\begin{figure*}[t]
	\captionsetup[subfigure][h]{twoside,margin={0.2cm,0.2cm}}
	\centering
	\begin{subfigure}[h]{0.3\textwidth}
             \includegraphics[height=1.6in]{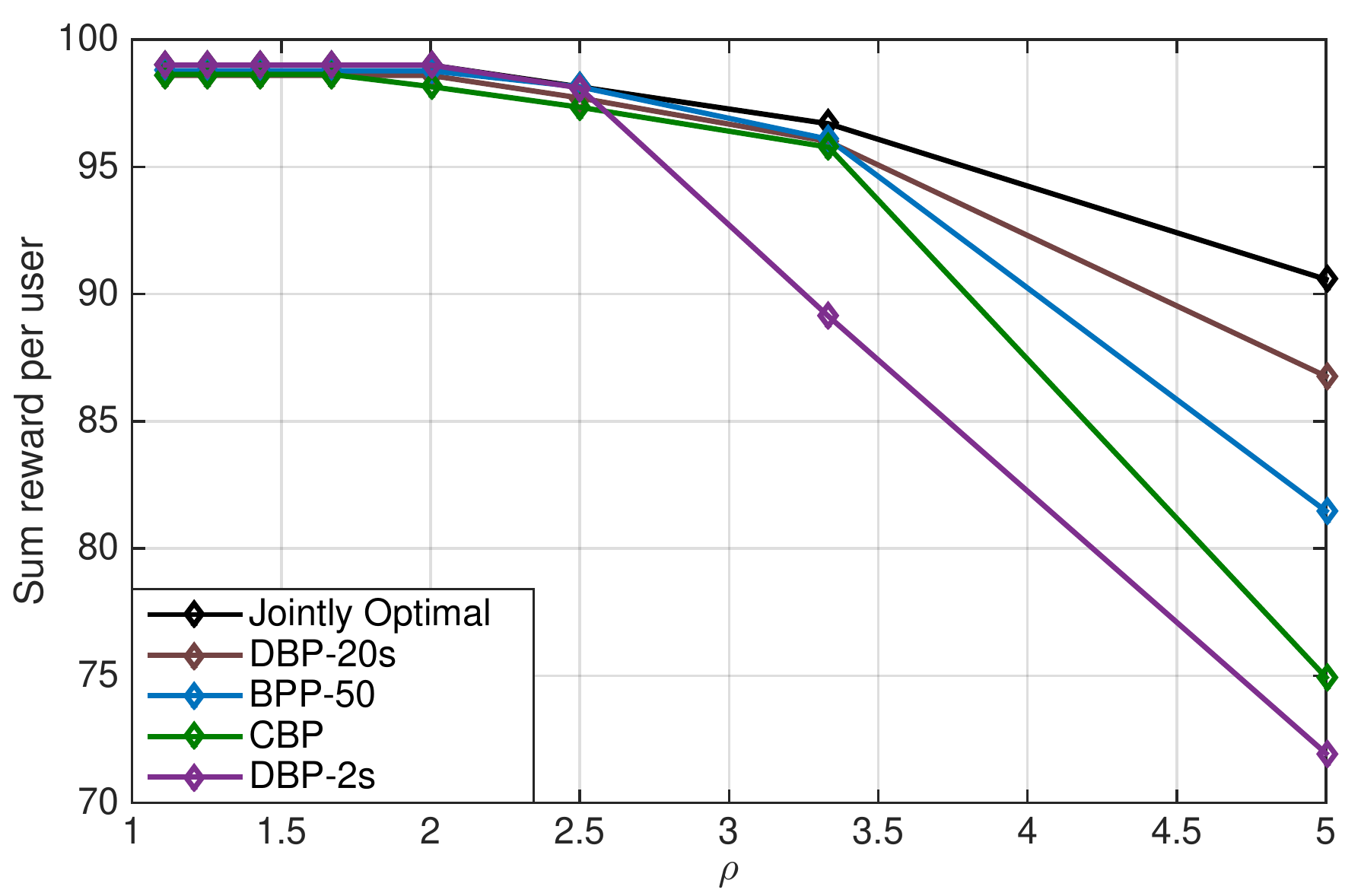}
             \caption{Average rate per subchannel: 4.5 Mbps}\label{fig:joint_comp_midchan}         
        \end{subfigure}\qquad 
        \begin{subfigure}[h]{0.3\textwidth}
             \includegraphics[height=1.6in]{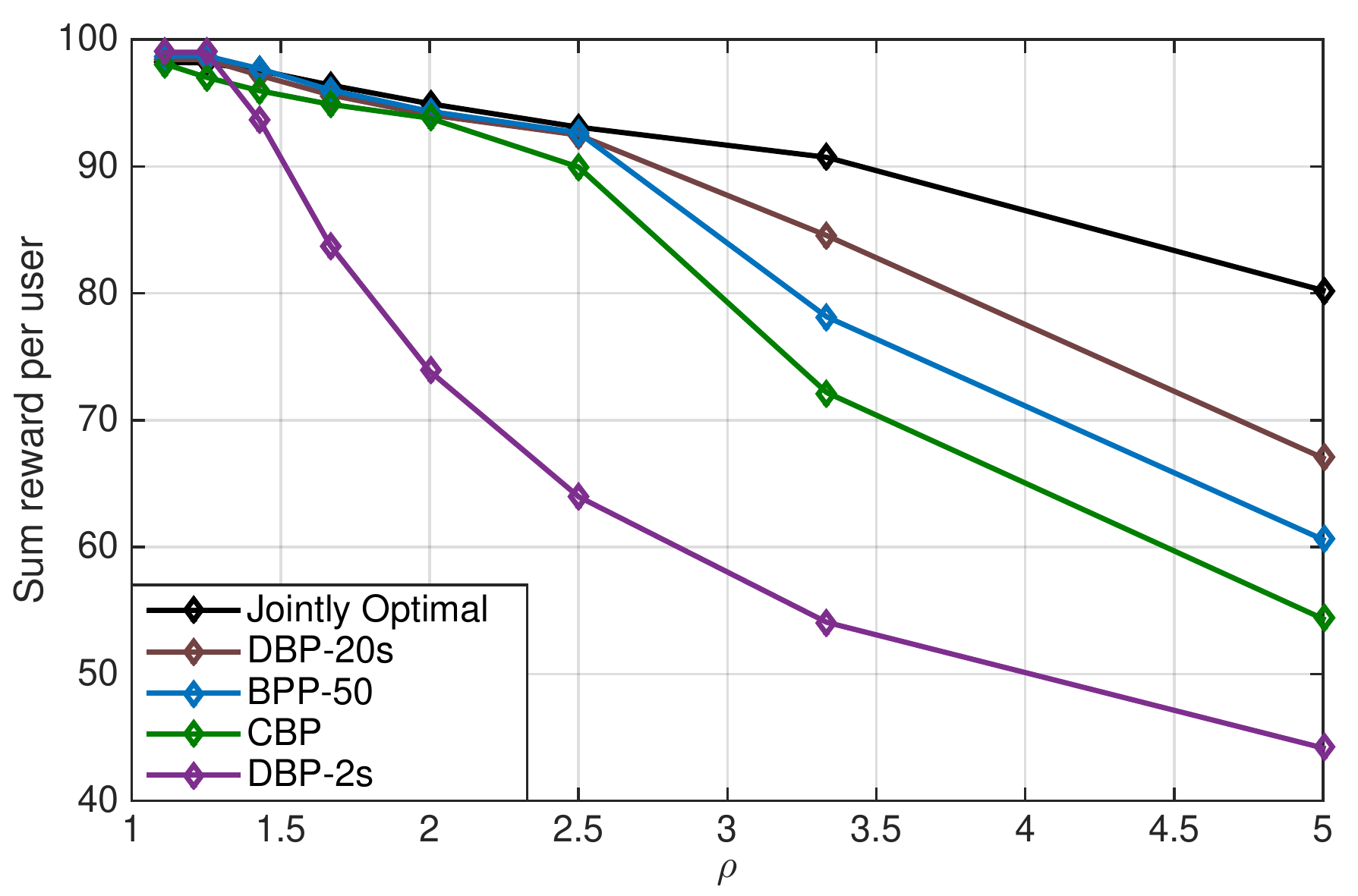}
             \caption{Average rate per subchannel: 2.5 Mbps}\label{fig:joint_comp_lowchan}
        \end{subfigure}\qquad
        \begin{subfigure}[h]{0.3\textwidth}
             \includegraphics[height=1.6in]{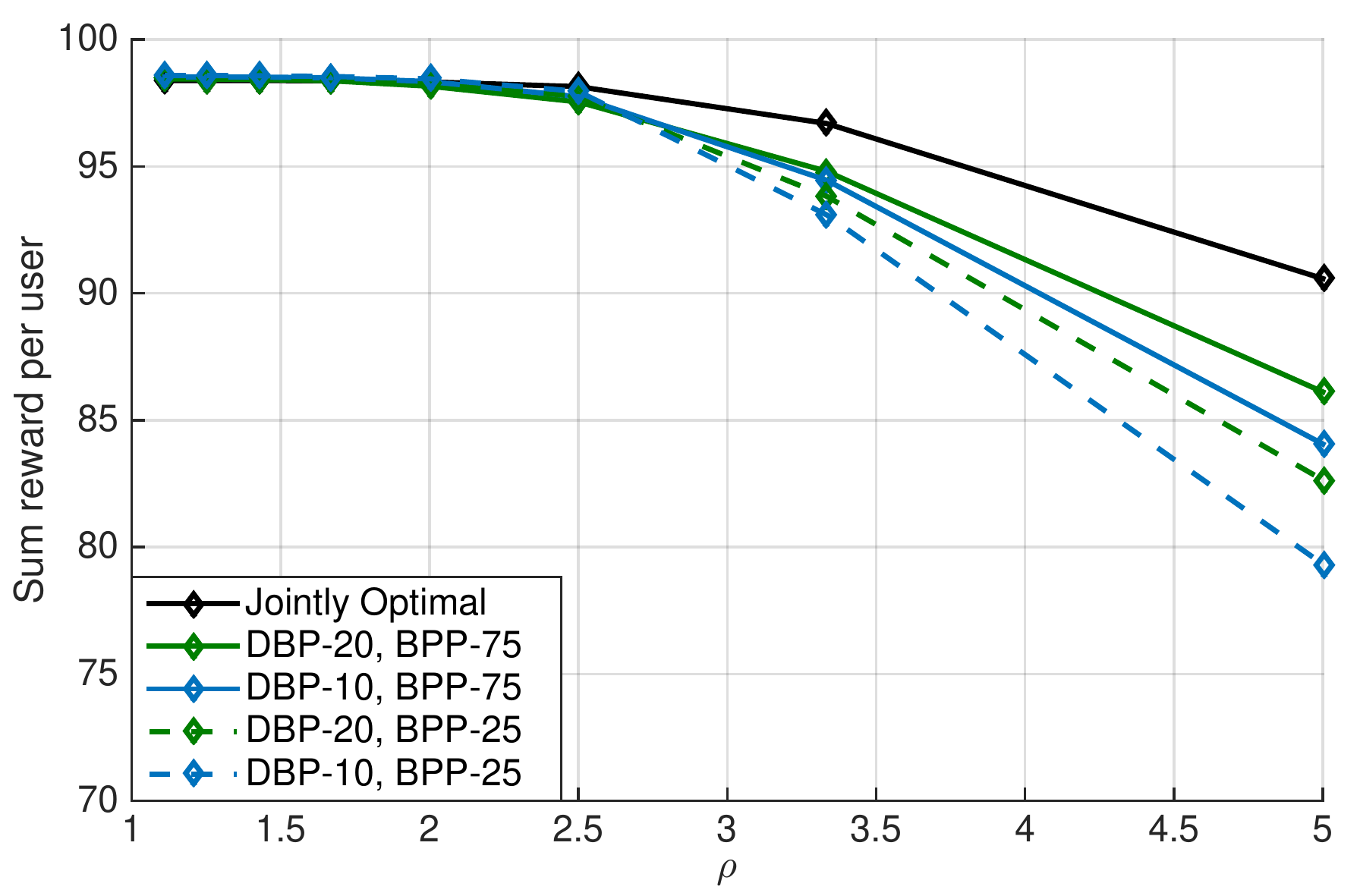}
             \caption{Average rate per subchannel: 4.5 Mbps}\label{fig:joint_comp_bpp}
        \end{subfigure}\qquad
\caption{Performance comparison between jointly optimal QA and scheduling and QA-adaptive scheduling for different QA schemes.}
\label{fig:joint_comp}         
\end{figure*}

Figure \ref{fig:joint_comp} shows the sum reward per user, which is the value of the objective function for the RB for each user. In these figures, the reward is plotted as a function of the load on the network, which we define as the average number of users that compete for one subchannel, $\rho = \frac{N}{M}$. In order to vary the load on the network, we vary the number of available subchannels from 4 to 18 in increments of 2. We perform this evaluation for two network settings, with average rates of 4.5 Mbps and 2.5 Mbps, respectively. The QA schemes used are different variations of BPP, CBP, and two DBP schemes. In the BPP-x scheme, base layer segments are requested until the occupancy reaches x\% of the buffer limit. After that, full quality segments are requested. For CBP, if the channel is in the two low rate states, the user only requests base layer segments. In the third channel state, two-thirds of the resources are spent on requesting base layers and the rest is reserved for enhancement layers. In the best state, only full quality segments are requested. The DBP-x QA policy represents the diagonal policy with a pre-fetch threshold of x seconds.

In Figures \ref{fig:joint_comp_midchan} and \ref{fig:joint_comp_lowchan}, the network is assumed to be homogeneous and all users in the network use the same QA. We observe that especially for low load scenarios, choosing a proper QA along with QA-adaptive scheduling will perform close to the jointly optimal case. For high load scenarios, the choice of QA becomes more important and using DBP with a large pre-fetch threshold that fills the buffer with base layer sub-segments up to the buffer limit and then starts downloading enhancement layers performs best, since by pre-fetching base layers, we lower the risk of re-buffering. Based on these results we argue that if an optimal QA-adaptive scheduler is used, an arbitrary QA can perform relatively close to the global optimum of the system. Figure \ref{fig:joint_comp_bpp} illustrates the performance of the QA-adaptive scheduling mechanism in a heterogeneous system where one half of the users deploy a DBP policy and the other half use a BPP policy. This figure is represents networks in which some content providers use SVC while others use single layered video. This model is useful because once content providers start offering adaptive video with SVC, they have to coexist with services that will still rely on single layered video. Our proposed QA-adaptive model is capable of devising scheduling policies for these mixed environments. Figure \ref{fig:joint_comp_bpp} shows that by using proper QA schemes, our scheduler performs within 85\% of the jointly optimal scheme.

In the next section, we describe the scheduler operation for the QA-adaptive scenario.

\section{Online Algorithm}\label{sec:alg}
The solution of RB gives the long term performance measures, representing the total discounted time each action is applied in each state without explicitly stating what action should be taken in each time slot. In this section, we propose two heuristic algorithms for the QA adaptive scheme that are based on ranking the states of the users in terms of the scheduling priority, based on the optimal solution of RB.  

The first algorithm is designed for the case in which the base station knows what particular QA is being used by each user beforehand, which we denote as \emph{QA Aware Scheduling}. The second algorithm is designed to further simplify this procedure and imitate the functionality of the QA Aware Scheduling without actually knowing the QA of each user. We call this policy \emph{QA Blind Scheduling}.

\subsection{QA Aware Scheduling Algorithm}

We start by defining the dual problem of RB as shown below. Without loss of generality, we show the dual of the simplified problem for the case with homogeneous users. 
\begin{eqnarray}\label{eq:opt_dual}
&\min_\mathbf{\lambda} &\sum_{s\in\mathcal S}\alpha_s\lambda_s + \frac{M}{N(1-\beta)}\lambda\label{eq:objective}\\
&&\text{subject to:} \\
&&\lambda_s-\beta\sum_{j\in\mathcal{S}}h^0_{s,j}\lambda_j\geq NR^0_s,~\forall s\in\mathcal{S},\\
&&\lambda_s-\beta\sum_{j\in\mathcal{S}}h^1_{s,j}\lambda_j+\lambda\geq NR^1_s,~\forall s\in\mathcal{S},
\end{eqnarray}
where the set of variables is denoted by $\mathbf{\lambda} = \{(\lambda_s)_{s\in\mathcal{S}},\lambda\}$. We define \emph{reduced cost coefficients} $\gamma^a_s$ for $\forall s \in \mathcal{S}, a \in \{0,1\}$ as follows:
\begin{equation}
\gamma^0_s = \lambda^*_s-\beta\sum_{j\in\mathcal{S}}h^0_{s,j}\lambda^*_j -  NR^0_s,
\end{equation}
\begin{equation}
\gamma^1_s = \lambda^*_s-\beta\sum_{j\in\mathcal{S}}h^1_{s,j}\lambda^*_j +\lambda-  NR^1_s,
\end{equation}
where $\lambda^*_s$ is the optimal value for $\lambda_s$. 

The reduced cost coefficient $\gamma^a_s$ represents the rate of decrease in the objective function in RB per unit increase in the variable $x^a_s$ \cite{murty1983linear}. For example, if in a particular time instant, two users are in states $s_1$ and $s_2$, respectively, such that $\gamma^1_{s_1} > \gamma^1_{s_2}$, scheduling the user in $s_2$ will cause less reduction in the objective function and should be prioritized over the other. By using this characteristic and the fact that due to complementary slackness, either $\gamma^a_s = 0$ or $x^a_s = 0$ for $\forall s,a$, we can derive a ranking scheme for all states as shown in Algorithm \ref{alg:ranking}. 

\begin{algorithm}
\caption{QA Aware Scheduling}
\label{alg:ranking}
\textbf{Step 0:} Solve RB and its dual and determine $x^{*0}_s$, $x^{*1}_s$, $\mathbf{\gamma}^{0}_s$ and $\mathbf{\gamma}^{1}_s$ $\forall s \in \mathcal{S}$.\\
\textbf{Step 1:} Set $Q_1 = \{s\in\mathcal{S}|x^{*1}_s > 0\}$.\\
\textbf{Step 2:} Sort $Q_1$ in descending order of $\mathbf{\gamma}^{0}_s$.\\
\textbf{Step 3:} Set $Q_0 = \{s\in\mathcal{S}|x^{*0}_s > 0\}$.\\
\textbf{Step 4:} Sort $Q_0$ in ascending order of $\mathbf{\gamma}^{1}_s$.\\
\textbf{Step 5:} Define ranking vector $Q = [Q_1,Q_0]$
\end{algorithm} 

It is easy to argue that states for which $x^{*1}_s$ is positive should have priority over those for which $x^{*1}$ is zero. Therefore, we prioritize states with $x^{*1}_s > 0$ and sort them in descending order of their respective $\mathbf{\gamma}^{0}_s$. Then, at the secondary level of priority, we take states with $x^{*0}_s > 0$ and sort them is ascending order of their $\mathbf{\gamma}^{0}_s$. Thereby, we have a priority list of all states,  where in every time slot, the $M$ users that appear highest in the list are scheduled. As a simple example, consider the case where a network has three users. At a particular time slot, the users are in states $s_1$, $s_2$, and $s_3$, respectively. Also, suppose that the value of the reduced cost coefficients and long term performance measures in this time slot are as follows:

\begin{table}[h] 
\centering
\begin{tabular}{|c|c|c|c|c|}
\hline
state&$x_s^1$&$\mathbf{\gamma}_s^1$&$x_s^0$&$\mathbf{\gamma}_s^0$ \\
\hline
$s_1$&1&0&0.2&0\\
\hline
$s_2$&0&10&0.1&0\\
\hline
$s_3$&0.5&0&0&20\\
\hline
\end{tabular}\caption{A sample instance of the QAA algorithm for three users. Note that the numbers are not derived from an actual experiments and serve only as am example.}\label{table:alg_example}
\end{table}

By following Algorithm \ref{alg:ranking} for this sample case, we first prioritize user 1 and 3 over user 2 because their respective $x_s^1$ is non-zero. Among users 1 and 3, we pick user 3 because it has a lower $\mathbf{\gamma}_s^0$. Therefore, the resulting ranking of users will become $3-1-2$. 

In order to implement QAA, the base station needs to know the channel matrix, the QA of each user and the video characteristics such as number of layers and segment size. Since segment sizes are not equal throughout the video, an estimate for average size can be used for the calculation. The RB linear program is solved along with its dual and all variables that are needed for QAA are determined prior to the start of the stream. However, the wireless channel is non-stationary and the channel matrix might change over time. Therefore, in order to have a more accurate estimate for the channel dynamics, the channel matrix should be updated at periodic intervals and the RB is recalculated in order to update the variables in Algorithm \ref{alg:ranking}. Frequently updating the variables will increase the accuracy of the algorithm as well as its computational complexity. It is experimentally shown in \cite{fund2013performance} that the channel matrix can be assumed to be stationary for a duration in the order of tens of seconds. Updating the RB variables should also be performed whenever a user enters or exits the network.

\subsection{QA Blind Scheduling Algorithm}

In this section, we derive a simple QA Blind heuristic policy for scheduling users without the base station knowing any of the mentioned system characteristics in the previous section. To that end, we start by studying the outcome of the RB problem for a variety of scenarios in order to find common trends.

\begin{figure}
	\centering
             \includegraphics[width=0.4\textwidth]{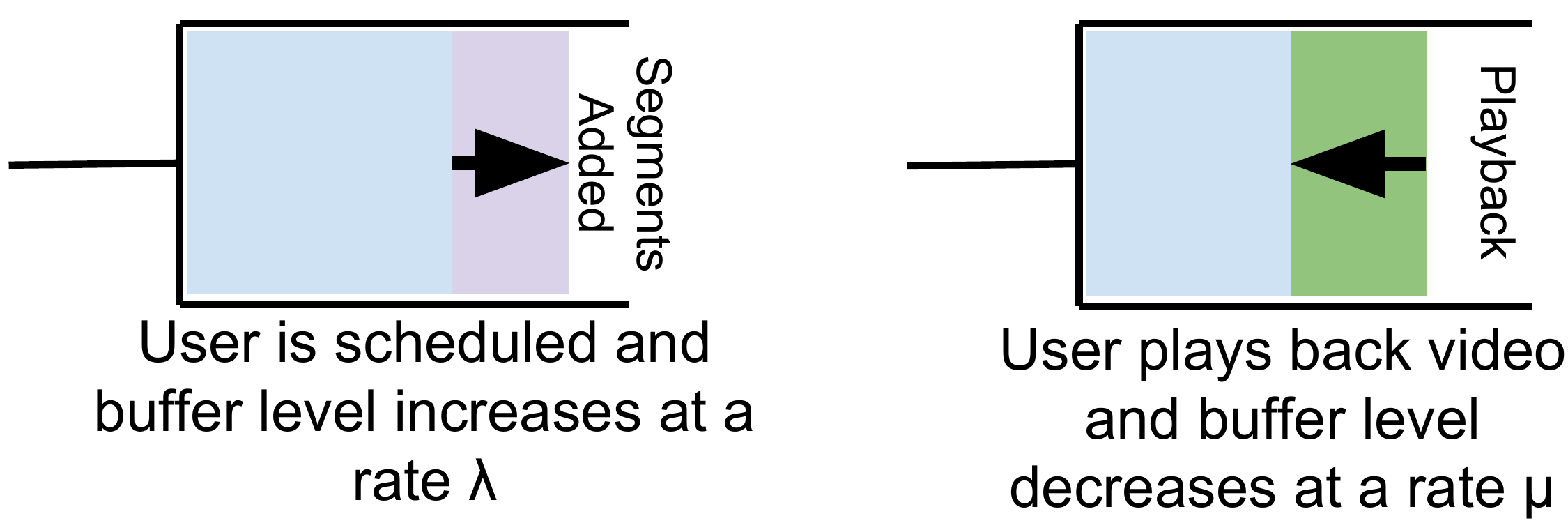}
             \caption{User buffer level increases with rate $\lambda$ when video data is delivered and decreases with rate $\mu$ due to continuous playback.}\label{fig:alg_buf}
\end{figure}

\begin{figure}
	\centering
             \includegraphics[width=0.4\textwidth]{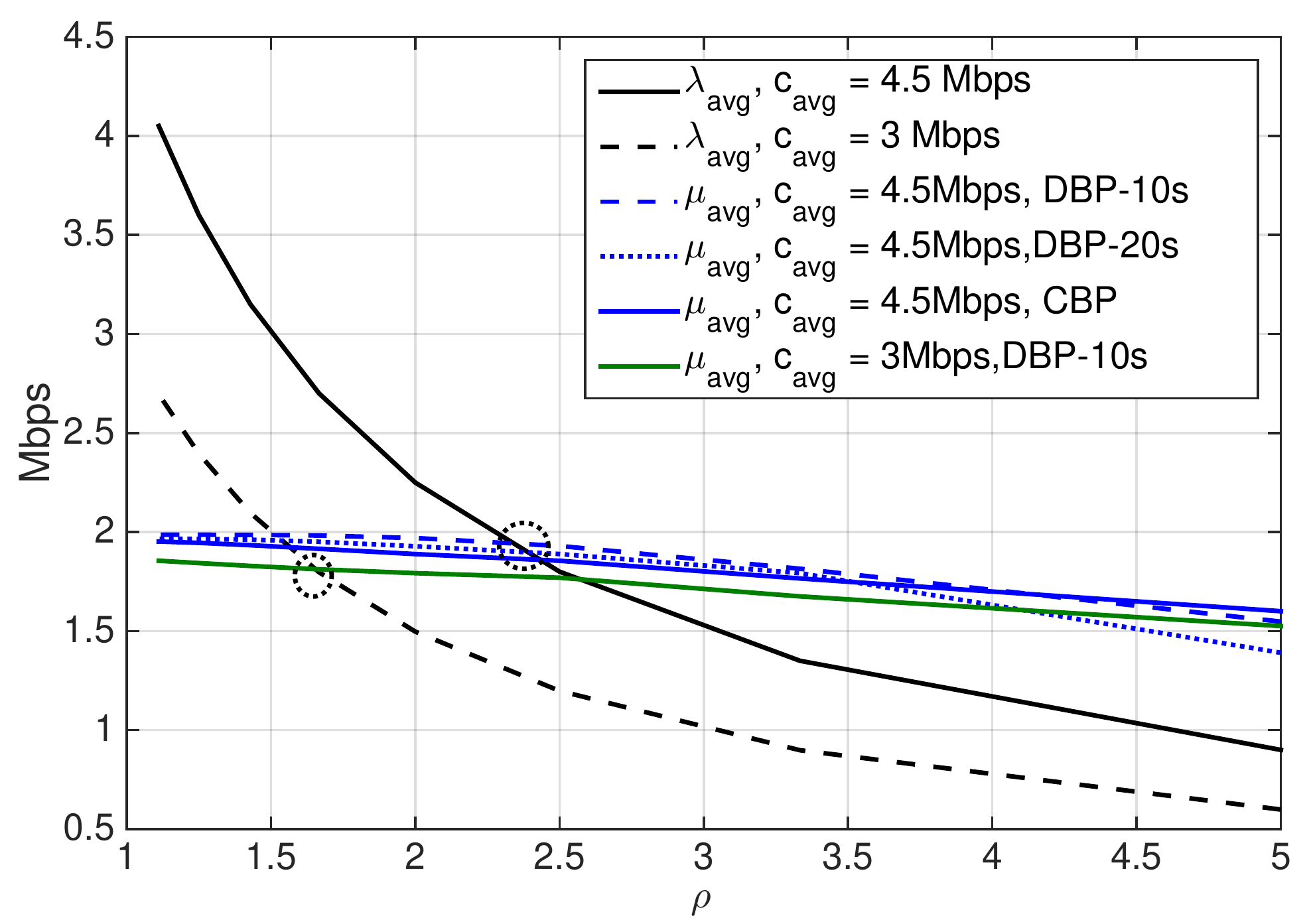}
             \caption{$\lambda_{avg}$ and $\mu_{avg}$ for different QA and network capacities as a function of load. The encircled points refer to the critical load $\rho^*$ for which the buffer level is constant on average.}\label{fig:alg_lambda_mu}
\end{figure}

\begin{figure*}[t]
	\captionsetup[subfigure][h]{twoside,margin={0cm,0cm}}
	\centering
        	\begin{subfigure}[h]{0.24\textwidth}
             \includegraphics[height=1.5in]{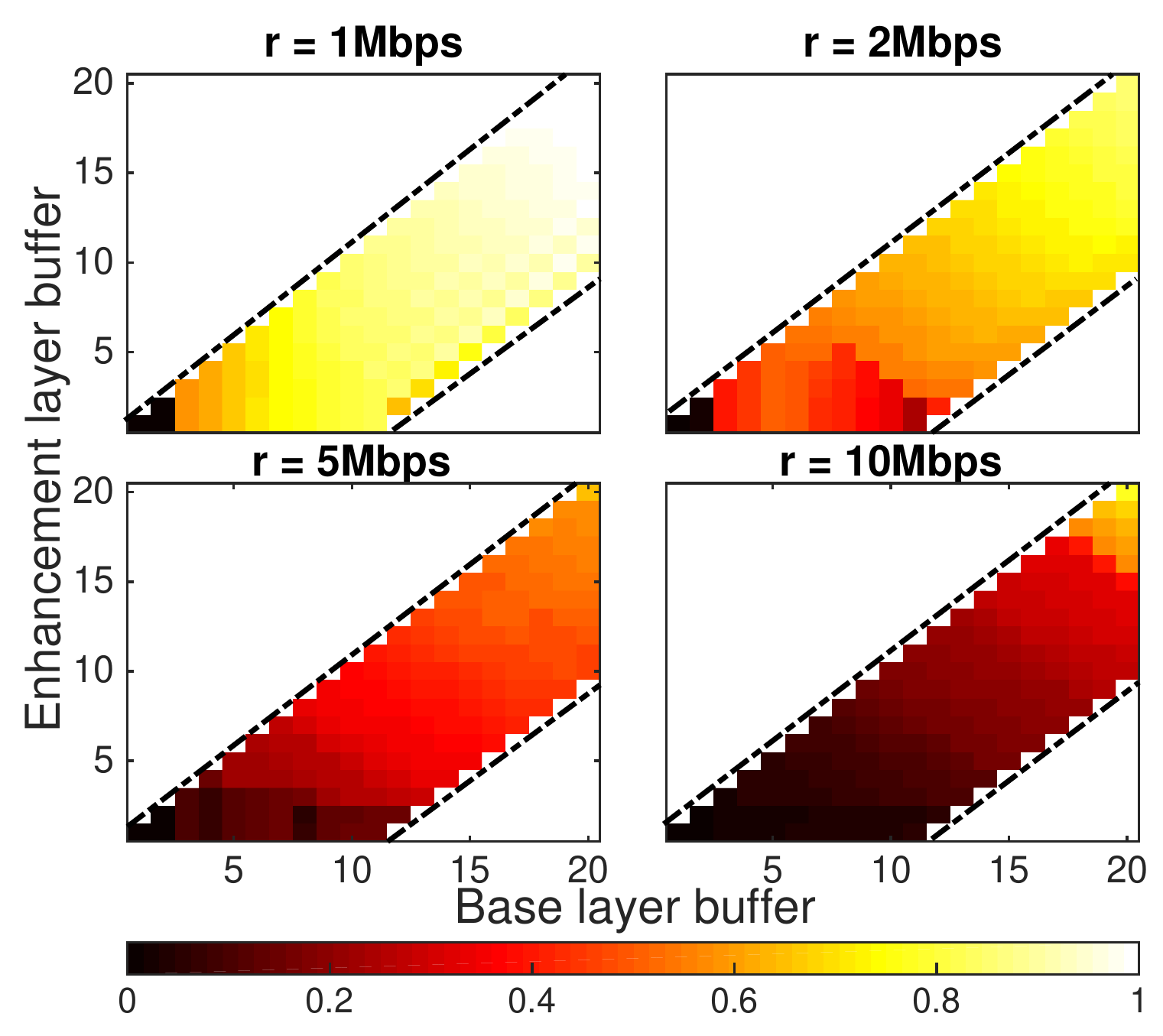}
             \caption{$c_{avg} =  4.5$ Mbps, $\rho = 2.5$}\label{fig:heatmap_thresh10_450_highload}
        \end{subfigure}
        \begin{subfigure}[h]{0.24\textwidth}
             \includegraphics[height=1.5in]{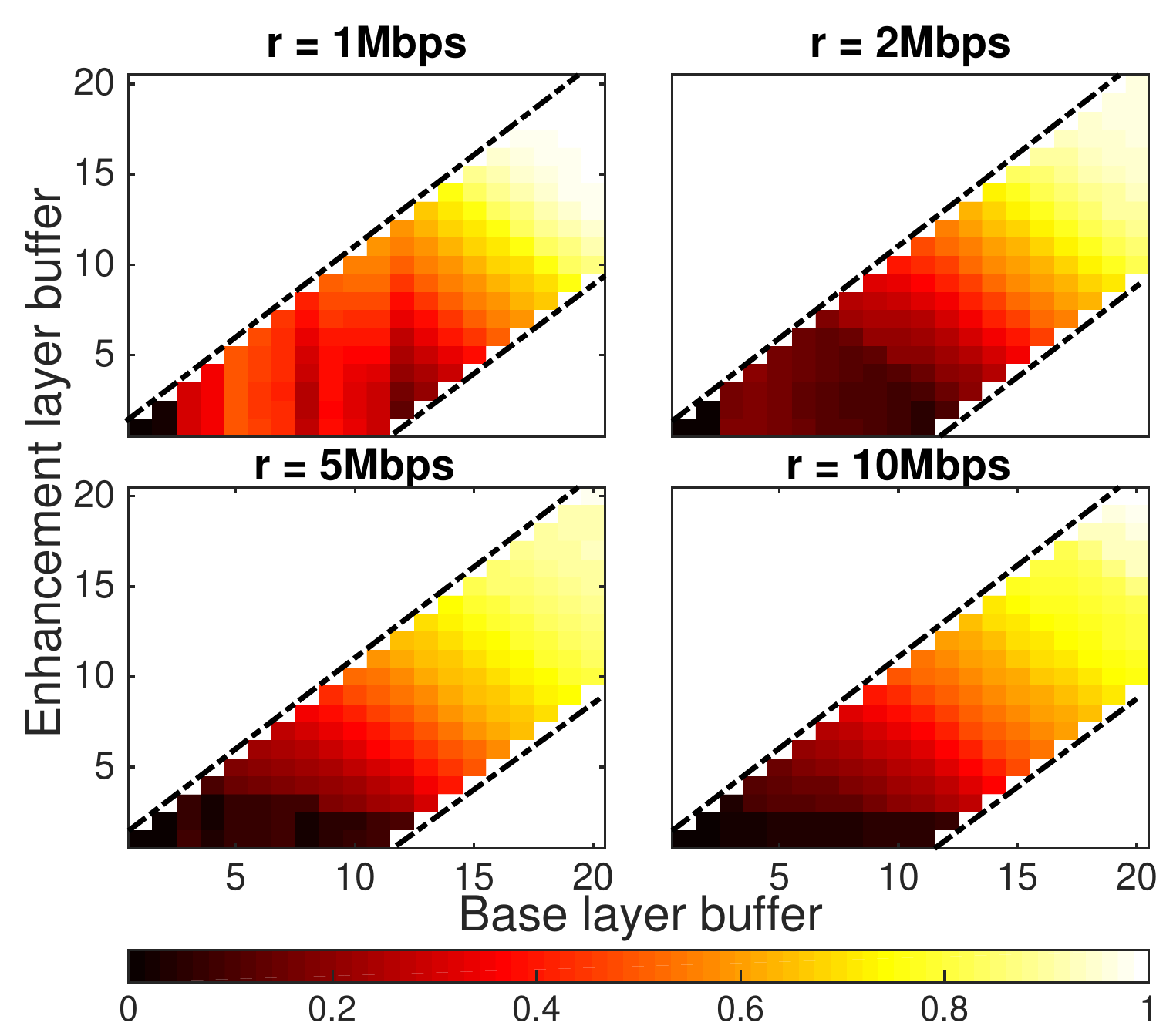}
             \caption{$c_{avg} =  4.5$ Mbps, $\rho = 2$}\label{fig:heatmap_thresh10_450_lowload}
        \end{subfigure}
        	\begin{subfigure}[h]{0.24\textwidth}
             \includegraphics[height=1.5in]{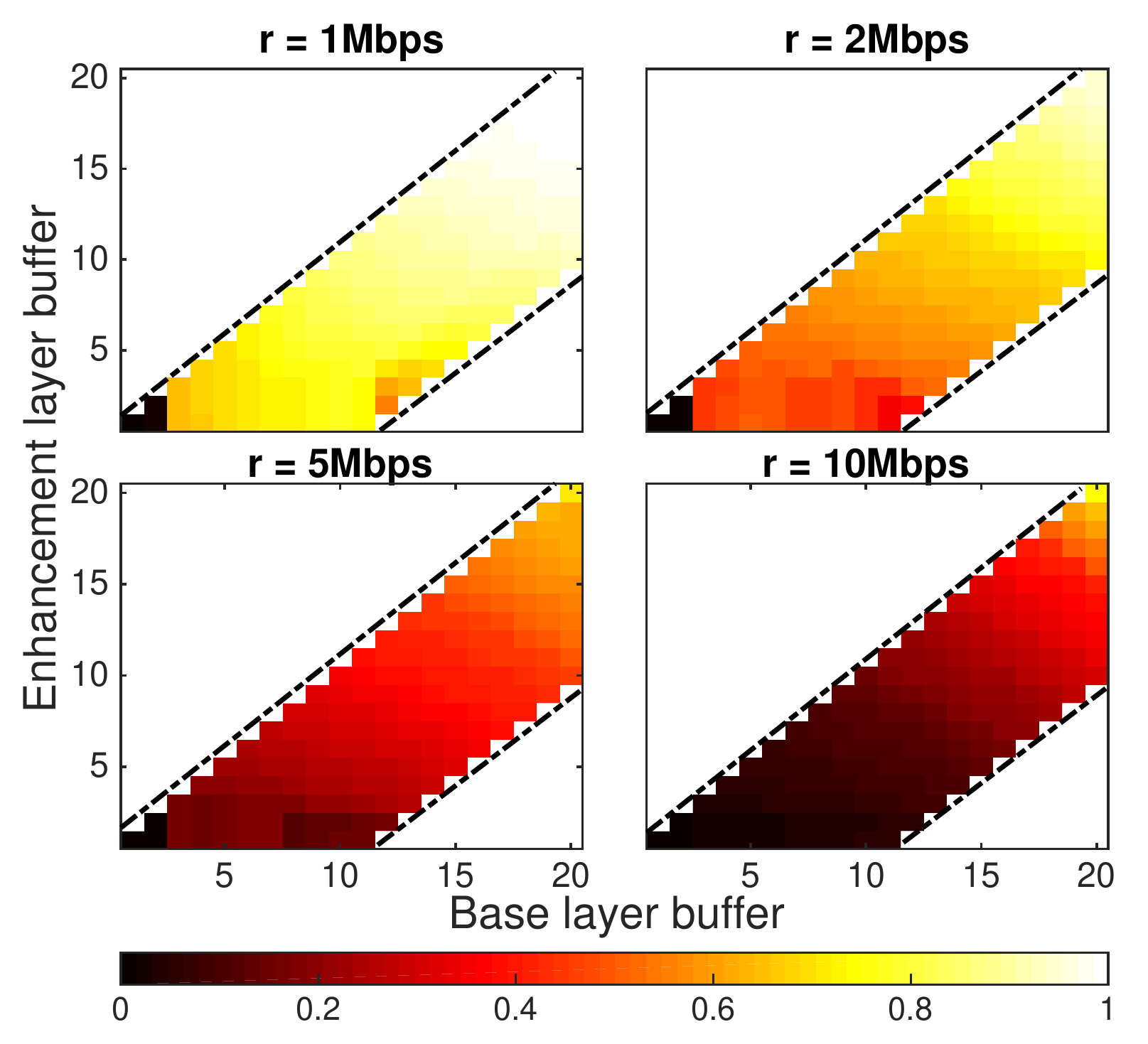}
             \caption{$c_{avg} =  3$ Mbps, $\rho = 1.88$}\label{fig:heatmap_thresh10_300_highload}
        \end{subfigure}
        \begin{subfigure}[h]{0.24\textwidth}
             \includegraphics[height=1.5in]{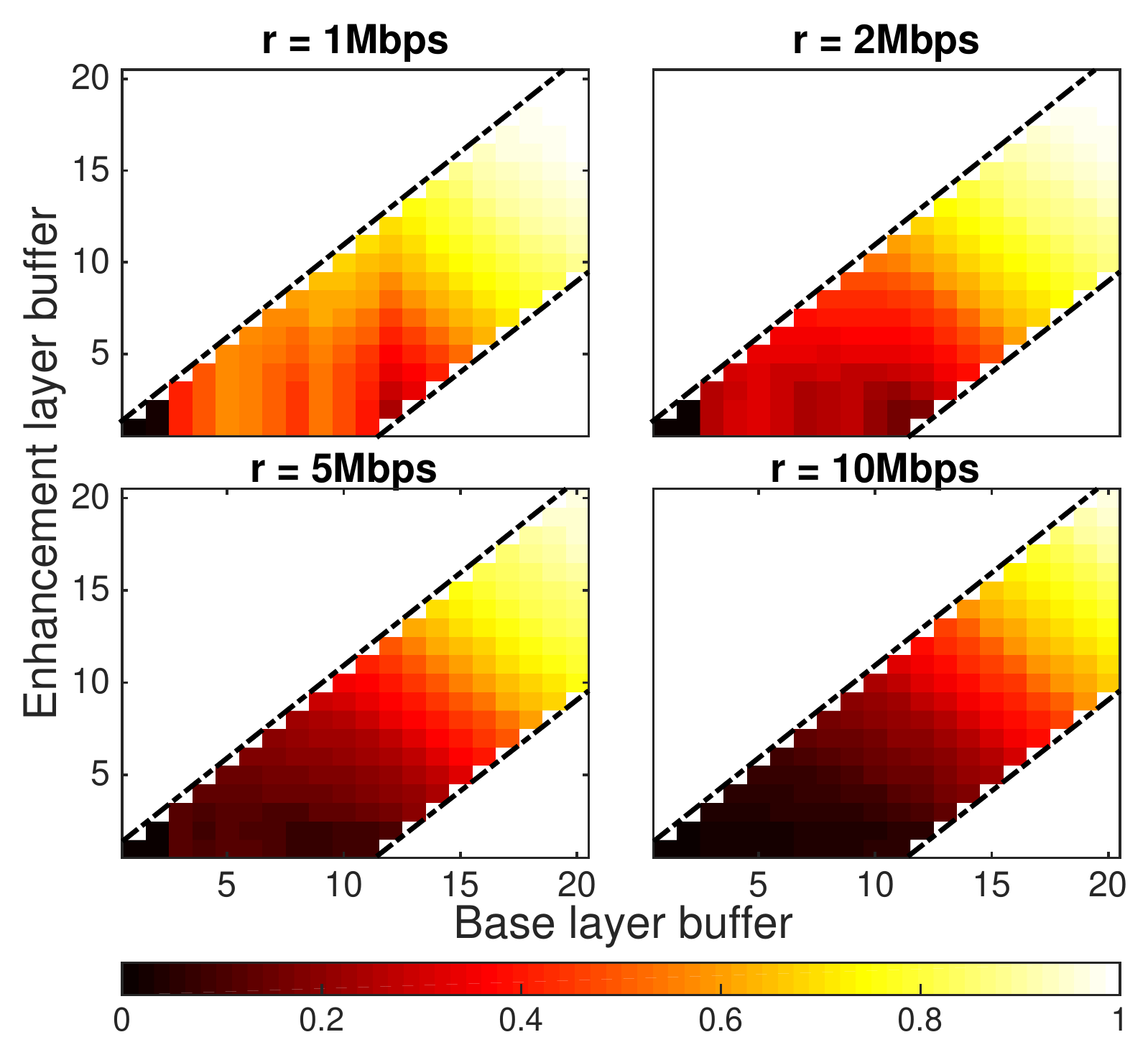}
             \caption{$c_{avg} =  3$ Mbps, $\rho = 1.25$}\label{fig:heatmap_thresh10_300_lowload}
        \end{subfigure}
\caption{Scheduling priority comparison for users with DBP-10s. For the case with $c_{avg} = 4.5$ Mbps, the critical load $\rho^*$ is equal to 2.3 and for the case with $c_{avg}= 3$ Mbps, $\rho^* = 1.58$. The region beyond the dashed lines corresponds to buffer values that are not possible.}
\label{fig:heatmap_thresh10}         
\end{figure*}

\begin{figure*}[t]
	\captionsetup[subfigure][h]{twoside,margin={0cm,0cm}}
	\centering
	\begin{subfigure}[h]{0.24\textwidth}
             \includegraphics[height=1.5in]{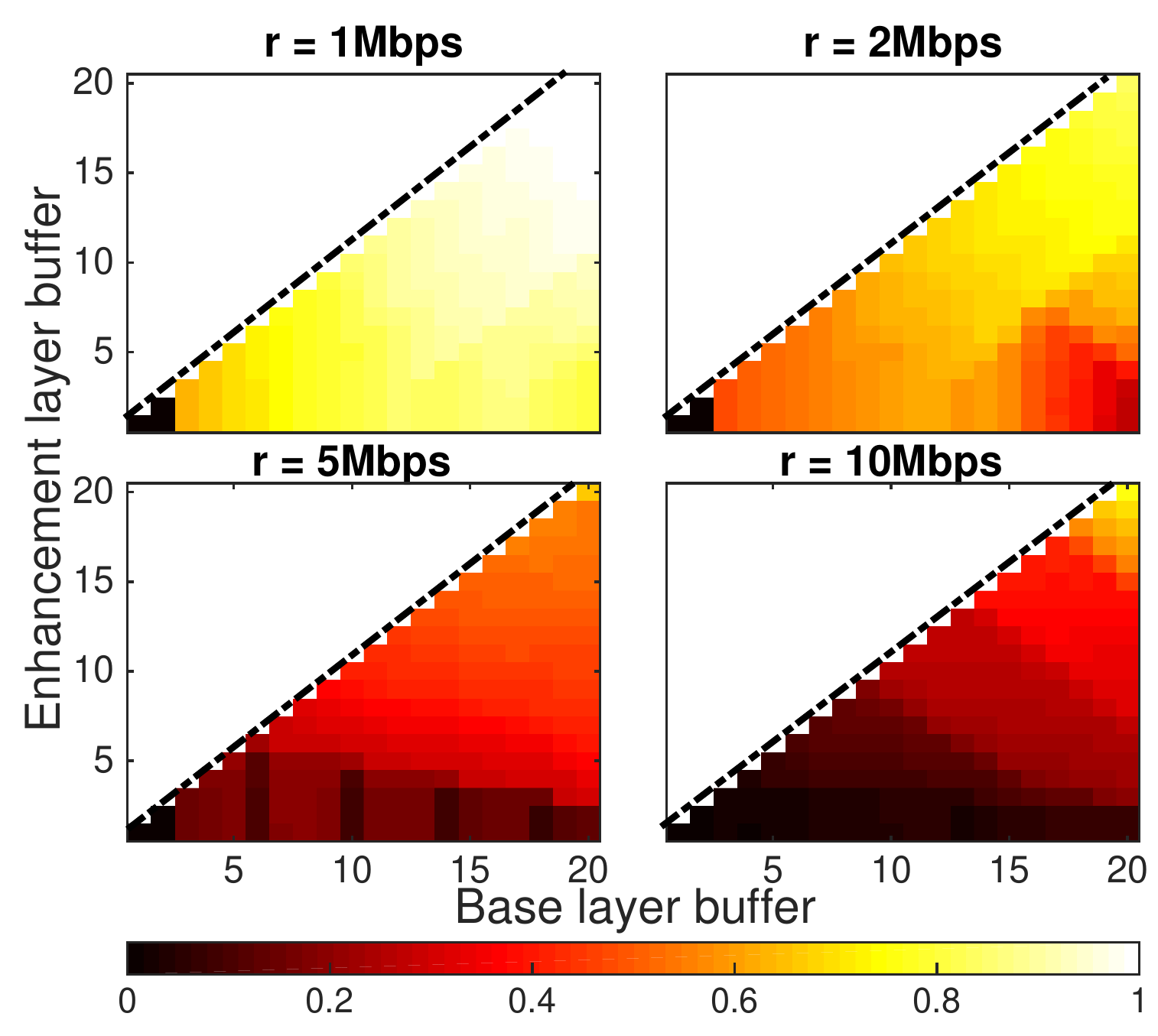}
             \caption{DBP-20s, Mbps, $\rho = 2.5$}\label{fig:heatmap_thresh20_450_highload}
        \end{subfigure}
        \begin{subfigure}[h]{0.24\textwidth}
             \includegraphics[height=1.5in]{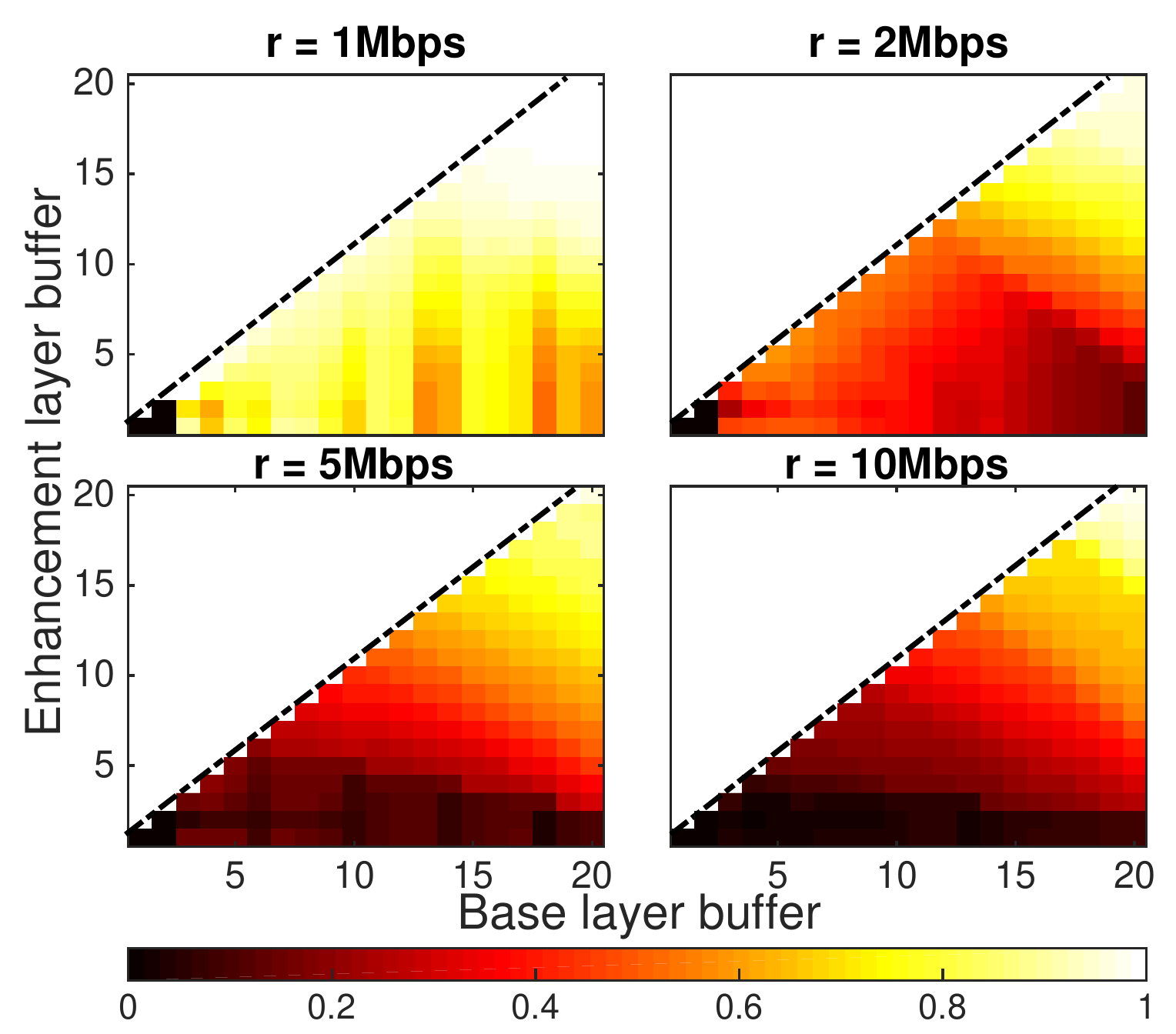}
             \caption{DBP-20s, Mbps, $\rho = 2$}\label{fig:heatmap_thresh20_450_lowload}
        \end{subfigure}
        	\begin{subfigure}[h]{0.24\textwidth}
             \includegraphics[height=1.5in]{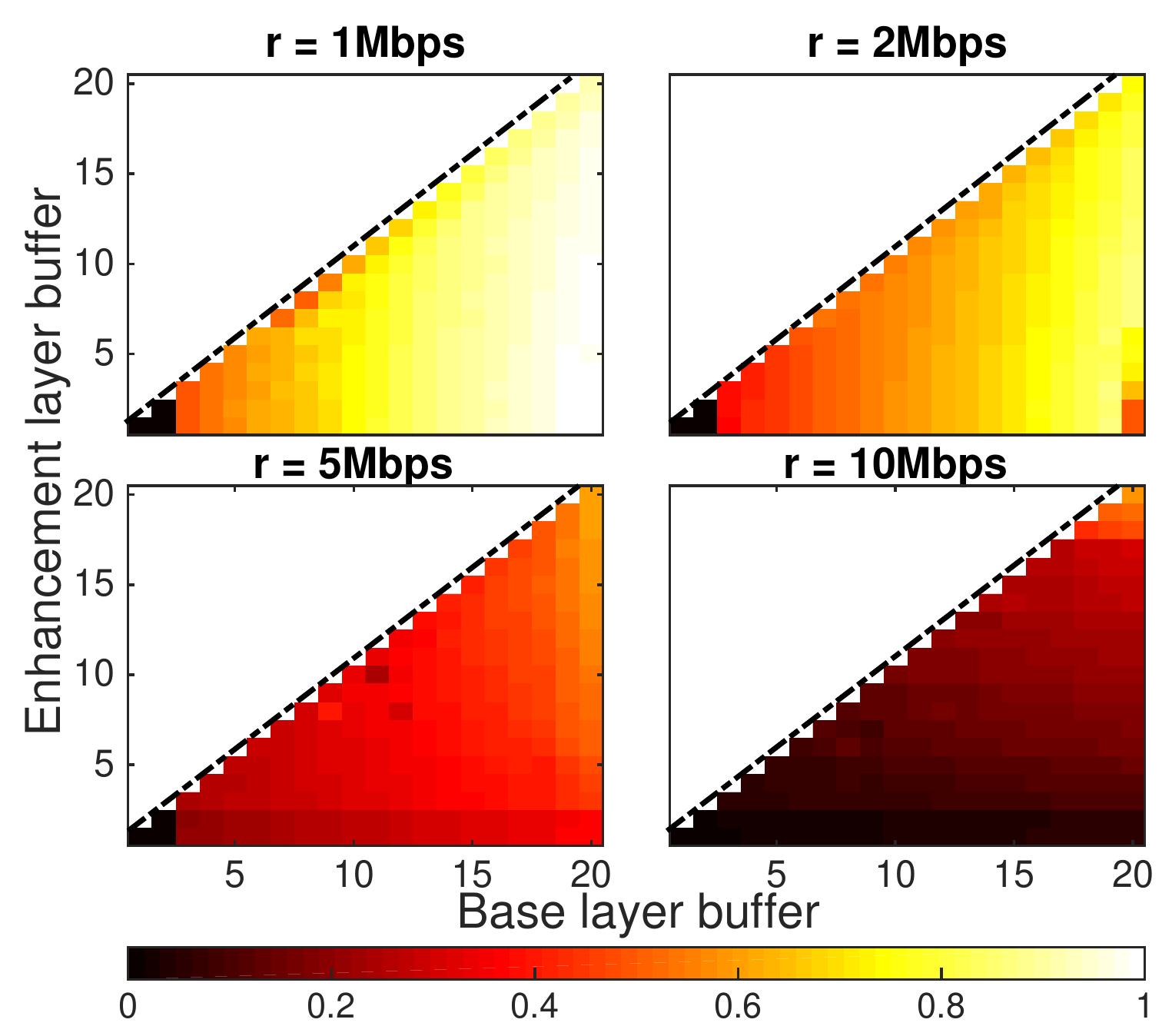}
             \caption{CBP, $\rho = 2.5$}\label{fig:heatmap_cbp_450_highload}
        \end{subfigure}
        \begin{subfigure}[h]{0.24\textwidth}
             \includegraphics[height=1.5in]{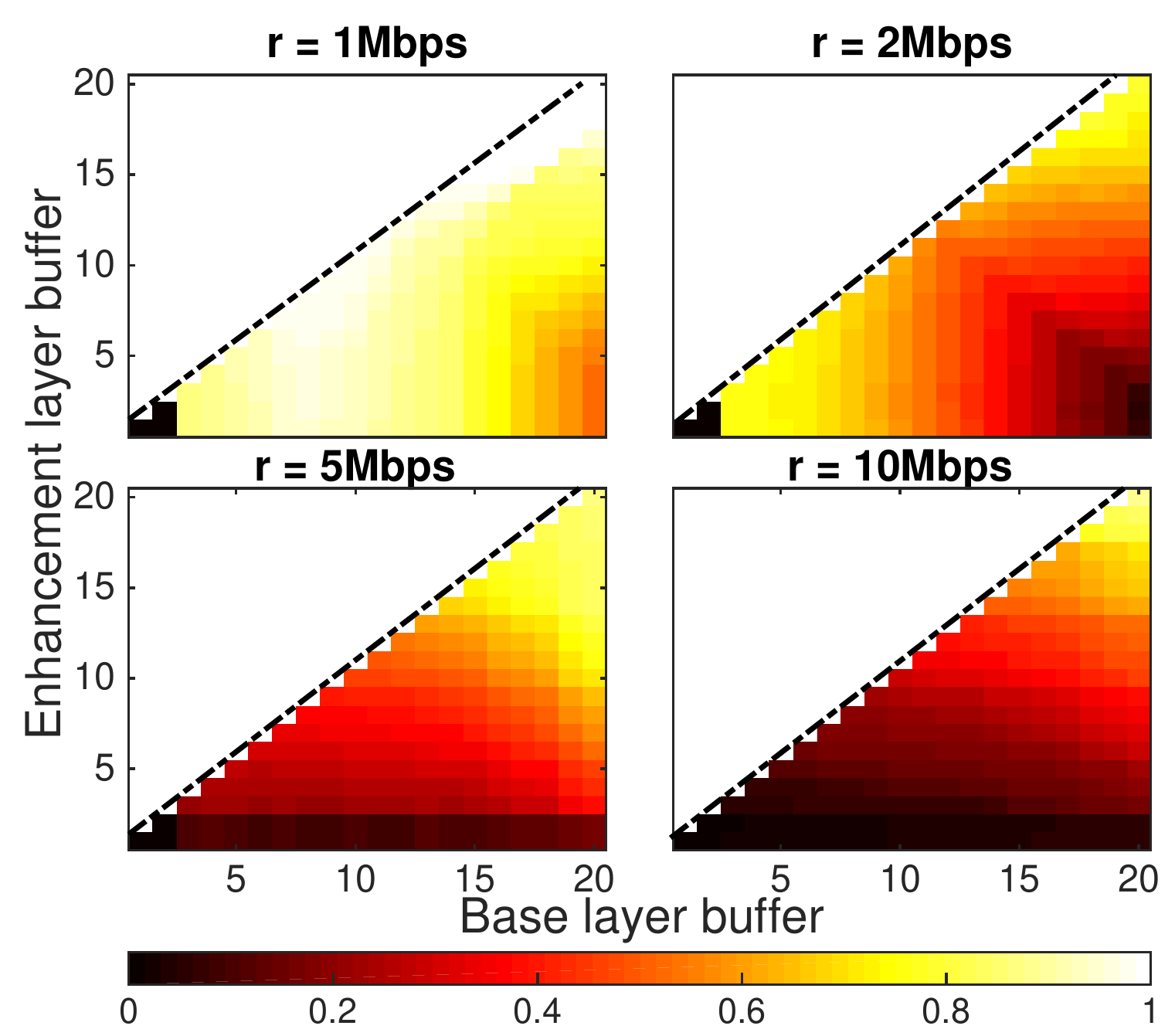}
             \caption{CBP, $\rho = 2$}\label{fig:heatmap_cbp_450_lowload}
        \end{subfigure}
\caption{Scheduling priority comparison for a network with $c_{avg} = 4.5$ Mbps. The figures on the left side correspond to DBP-20s and the ones on the right correspond to CBP. The region beyond the dashed lines corresponds to buffer values that are not possible.}
\label{fig:heatmap_difq_450}         
\end{figure*}

In a QA Blind scheduling policy, the base station does not know what quality layer each user requests at each time slot. Therefore, from the base station's point of view, the user buffer can be modeled as Figure \ref{fig:alg_buf}. In this figure, we illustrate a sample buffer of a user regardless of what layer and segment the data belongs to. Whenever a user is scheduled, the data is delivered and the buffer level increases at a rate of $\lambda$. Because the video is being continuously played back, the buffer level decreases at a rate of $\mu$. The average value of $\lambda$ in a homogeneous network, which we denote as $\lambda_{avg}$ is the average throughput of each user. We can calculate $\lambda_{avg}$ as $\frac{c_{avg}}{\rho}$, where $\rho$ is the load on the network as defined in previous sections and $c_{avg}$ is the average capacity of each subchannel. Also, $\mu_{avg}$ is defined as the average rate of draining the buffer, which depends on the average rate of the video segments being played back. In Appendix \ref{app_3}, we derive an expression for $\mu_{avg}$ given the optimal variables derived from the RB problem.

Figure \ref{fig:alg_lambda_mu} illustrates the values for $\lambda_{avg}$ and $\mu_{avg}$ for different QA and $c_{avg}$ as a function of network load with settings similar to Table \ref{table:params}. We can observe that for each pair of $\lambda_{avg}$ and $\mu_{avg}$, their values coincide at a specific network load, which we call critical load $\rho^*$. This is the network load for which on average, the buffer level remains stable. For load values larger than $\rho^*$, the buffer level will decrease and vice versa. We will use the concept of critical load to derive conclusions regarding the QA Blind scheduling policy.

We now turn our attention to the QAA algorithm in order to determine its outcome at the critical load. Here, in order to clean out states that have no significance in the scheduling policy, we add a sub step between step 0 and step 1 of Algorithm \ref{alg:ranking} in which we remove all states $s$ for which $x^{*0 }_s=x^{*1}_s = 0$ based on the following argument.

\begin{definition}
We define a particular state $s \in \mathcal{S}$ to be reachable from state $l$ under policy $u$, if either $h_{ls}^0 > 0$ and $x(u)_l^{0} > 0$ or $h_{ls}^1 > 0$ and $x(u)_l^{1} > 0$. In other words, $s$ is reachable from $l$ under a given policy, if there is a path from $l$ to $s$ suggested by the policy.
\end{definition}

\begin{theorem}
A particular state $s$ satisfies $x^{*0 }_s=x^{*1}_s = 0$ if, and only if, that state is unreachable from the initial state and any state on the trajectory determined by the optimal policy $u^*$.
\end{theorem}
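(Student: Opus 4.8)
The plan is to read the statement off directly from the balance equation that every feasible occupation measure—and in particular the optimal $\mathbf{x}^*$—must satisfy, namely the defining constraint of the polytope $\mathcal{Q}$ in (\ref{eq:polytope}). Specialized to the coordinate $j=s$ and evaluated at the optimum, it reads
\begin{equation}\label{eq:balance_s}
x_s^{*0} + x_s^{*1} = \alpha_s + \beta\sum_{l\in\mathcal{S}}\sum_{a\in\{0,1\}} h_{ls}^{a}\, x_l^{*a}.
\end{equation}
The whole argument rests on the observation that every term on the right-hand side is nonnegative: $\alpha_s\ge 0$ is an initial-state probability, $\beta>0$, and $h_{ls}^a,x_l^{*a}\ge 0$ by construction. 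First I would record the two elementary translations that recast the theorem in the language of (\ref{eq:balance_s}). Since $x_s^{*0},x_s^{*1}\ge 0$, the hypothesis $x_s^{*0}=x_s^{*1}=0$ is equivalent to $x_s^{*0}+x_s^{*1}=0$. And by the reachability definition, "$s$ is reachable from some state $l$ on the trajectory" means there is a pair $(l,a)$ with $h_{ls}^a>0$ and $x_l^{*a}>0$, i.e. that $\sum_{l,a}h_{ls}^a x_l^{*a}>0$, while "$s$ is reached as an initial state" means $\alpha_s>0$. Hence "$s$ is unreachable from the initial state and from every state on the trajectory" is precisely $\alpha_s=0$ together with $\sum_{l,a}h_{ls}^a x_l^{*a}=0$.

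With these translations in hand the equivalence is immediate from (\ref{eq:balance_s}) and nonnegativity, and I would present both directions in one stroke. For the forward direction, if $x_s^{*0}=x_s^{*1}=0$ then the left side of (\ref{eq:balance_s}) vanishes; being a sum of nonnegative terms, each summand must vanish, so $\alpha_s=0$ and $h_{ls}^a x_l^{*a}=0$ for every $l$ and $a$, which is exactly unreachability. Conversely, if $s$ is unreachable then $\alpha_s=0$ and $h_{ls}^a x_l^{*a}=0$ for all $l,a$, so the right side of (\ref{eq:balance_s}) is zero and therefore $x_s^{*0}=x_s^{*1}=0$. Equivalently, one can simply contrapose: $s$ is reachable iff $\alpha_s+\beta\sum_{l,a}h_{ls}^a x_l^{*a}>0$, which by (\ref{eq:balance_s}) holds iff $x_s^{*0}+x_s^{*1}>0$.

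The one point that deserves care—and what I expect to be the only genuine obstacle—is reconciling the single-step reachability definition with the phrase "any state on the trajectory," since a priori "the trajectory" might suggest reachability through arbitrarily long paths rather than a single transition. The resolution is that the set of states visited with positive discounted occupancy, $V=\{l\in\mathcal{S}\mid x_l^{*0}>0 \text{ or } x_l^{*1}>0\}$, is exactly the set of "states on the trajectory," and $V$ is closed under the policy dynamics: by (\ref{eq:balance_s}) any state that is one-step reachable from a member of $V$ has strictly positive occupancy and hence itself lies in $V$. Consequently single-step reachability from $V$ and reachability along any finite policy-directed path from the initial distribution coincide, so the one-step argument above already captures the full multi-step notion and no separate induction is needed. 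A final sanity check I would include is that the series defining $x_s^{*0}+x_s^{*1}$ converges (as $\beta<1$ and $I_s^a(k)\le 1$), so all the quantities manipulated above are finite and the term-by-term vanishing is legitimate.
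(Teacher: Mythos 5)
Your proof is correct and follows essentially the same route as the paper: both arguments specialize the balance equation of the polytope $\mathcal{Q}$ to the state $s$, use nonnegativity of $\alpha_s$ and of the terms $h_{ls}^a x_l^{*a}$ to force term-by-term vanishing in one direction, and read off $x_s^{*0}=x_s^{*1}=0$ from a vanishing right-hand side in the other. Your added remark that the set of positively-occupied states is closed under the policy dynamics (so one-step and multi-step reachability coincide) makes explicit a point the paper handles only in passing, but it is a refinement of the same argument, not a different one.
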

\begin{proof}
Refer to Appendix \ref{app_4}.
\qed
\end{proof}

We run the QAA algorithm for a network with settings similar to Table \ref{table:params}. In order to determine the relation between scheduling priority and state space attributes. For this purpose, we run QAA and rank all states in an ordered list where the head of list is the state with the highest scheduling priority. For each state $s$, we assign an index $i_s = \frac{p_s}{|\mathcal{S}|}$, where, $p_s$ is the position of state $s$ in the ordered list. Using this index representation, we generate heatmaps that illustrate the scheduling priority of each state. 

Figure \ref{fig:heatmap_thresh10} illustrates the scheduling priority heatmap with respect to the instantaneous channel state and the buffer occupancy of both layers. The darker the color, the higher the state appears in the priority list. Also, all users deploy DBP-10s and $c_{avg} = 4.5$ Mbps. From Figure \ref{fig:alg_lambda_mu}, the critical load for this case is equal to $\rho^* = 2.3$. For a load value larger than 2.3, we observe in Figure \ref{fig:heatmap_thresh10_450_highload} that the scheduling priority is highly channel dependent, with users with the highest channel capacity getting the highest priority. Figure \ref{fig:heatmap_thresh10_450_lowload} shows that for load values smaller than $\rho^*$, the policy begins to become buffer dependent prioritizing users that have less buffer occupancy. A similar trend is observed in Figures \ref{fig:heatmap_thresh10_300_highload} and \ref{fig:heatmap_thresh10_300_lowload} where $c_{avg} = 3$ Mbps and therefore, $\rho^* = 1.58$.
From Figure \ref{fig:heatmap_difq_450} we can conclude that the above observation is not limited to DBP-10s and applies to a great extent also to cases with DBP-20s and CBP, as shown in Figures \ref{fig:heatmap_thresh20_450_highload}-\ref{fig:heatmap_thresh20_450_lowload} and \ref{fig:heatmap_cbp_450_highload}-\ref{fig:heatmap_cbp_450_lowload}, respectively. Therefore, a scheduling mechanism that leverages this trend can be used for a variety of QA policies which the scheduler does not need to know in advance.

Due to the heterogeneity of wireless networks, users will face rising buffer levels at some times and draining buffer levels at others. From the above analysis we can conclude that these fluctuations in buffer level can be exploited to devise scheduling algorithms without the scheduler knowing the underlying system parameters. Such an algorithm should first provide a measure to quantify fluctuations in the buffer level of each user. This measure will then be used to perform buffer dependent scheduling when buffer is filling, and channel dependent scheduling, when the buffer is draining. Given these guidelines, we devise a simple QA Blind scheduling policy called \emph{Buffer Evolution Aware Scheduling (BEAS)} shown in Algorithm \ref{alg:buf}.

\begin{algorithm}
\textbf{Initialization:} Let $\epsilon>0$, for 
$i\in\mathcal{N}$, let $b^k_i = b_0$.\\
The number of layer $l$ segments that are delivered to user $j$ in time slot $k$ is denoted by $n^k_{j,l}$.
\begin{algorithmic}
\FOR{all time slots $k$}
\STATE$b^{k+1}_i = (1-\epsilon)b^k_i - \epsilon \tau_{slot}$
\STATE \textbf{SCHEDULE:} 
\STATE $\mathcal{B} = \{i | b^k_i < b_{thresh}\}$
\IF{$|\mathcal{B}| < M$}
\STATE Schedule $M$ users from $\mathcal{B}$ with the best channel.
\ELSE 
\STATE Schedule all users in $\mathcal{B}$.
\STATE Schedule $M - |\mathcal{B}|$ users from $\mathcal{N}\setminus\mathcal{B}$ with the lowest base layer occupancy.
\ENDIF
\STATE \textbf{UPDATE:} 
\FOR{all scheduled users $j$:}
\STATE $b^{k+1}_j = (1-\epsilon)b^k_j + \epsilon \tau_{seg} h(\sum_{l = 1}^L  n^k_{j,l})$
\ENDFOR
\ENDFOR
\end{algorithmic}
\caption{Buffer Evolution Aware Scheduling}
\label{alg:buf}
\end{algorithm} 

In Algorithm \ref{alg:buf}, we use an auxiliary variable $b_i$ as a measure to represent buffer fluctuations. By starting from an initial value $b_0$ and updating it at each time slot, we can quantify whether the buffer is draining, (decreasing $b_i$) or filling (increasing $b_i$). As a scheduling rule, we first consider users with $b_i$ less than a pre-determined threshold $b_{thresh}$. Among these users, those with better channel are prioritized. If any resources are left, we move to the rest of the users and schedule them by prioritizing users that have fewer base layer segments in the buffer. The update rule for $b_i$ is based on an exponential filter with a smoothing factor $\epsilon$. A larger $\epsilon$ reacts faster to buffer fluctuations while a smaller value results in a smooth representation for the buffer fluctuations. Also, $h(\cdot)$ is a function of the total number of sub-segments delivered in each time slot. In our simulations, we have determined by trial and error that a linear function in the form $h(x)=\alpha x + \beta$ results in the best performance. Algorithm \ref{alg:buf} describes this heuristic. In Section \ref{sec:implementation}, we discuss the practical implications of this algorithm in more detail.

\section{Simulation Results}\label{sec:sim}
In this section, we perform an extensive simulation study evaluating the performance of the algorithms presented in the previous section and compare them with several baseline schemes. Unless mentioned otherwise, the simulation parameters are similar to Table \ref{table:params} and the video length is 10 minutes. The QA schemes used in the simulations are designed similar to Section \ref{sec:prob_form}. For the BEAS algorithm we use $b_{thresh}=0$. 

We start by studying the effect of buffer limit on the system performance in Figure \ref{fig:buflim}. It can be seen that increasing the buffer limit beyond 20s will not significantly improve the delivered video quality. Therefore, for the remainder of the simulations, we set the buffer limit to 20s in order to gain a suitable trade-off between computational complexity and average video quality.

\begin{figure}[h]
\hspace*{-0.6cm}
	\captionsetup[subfigure][h]{twoside,margin={0cm,0cm}}
	\centering
	\begin{subfigure}[h]{0.25\textwidth}
             \includegraphics[height=1.38in]{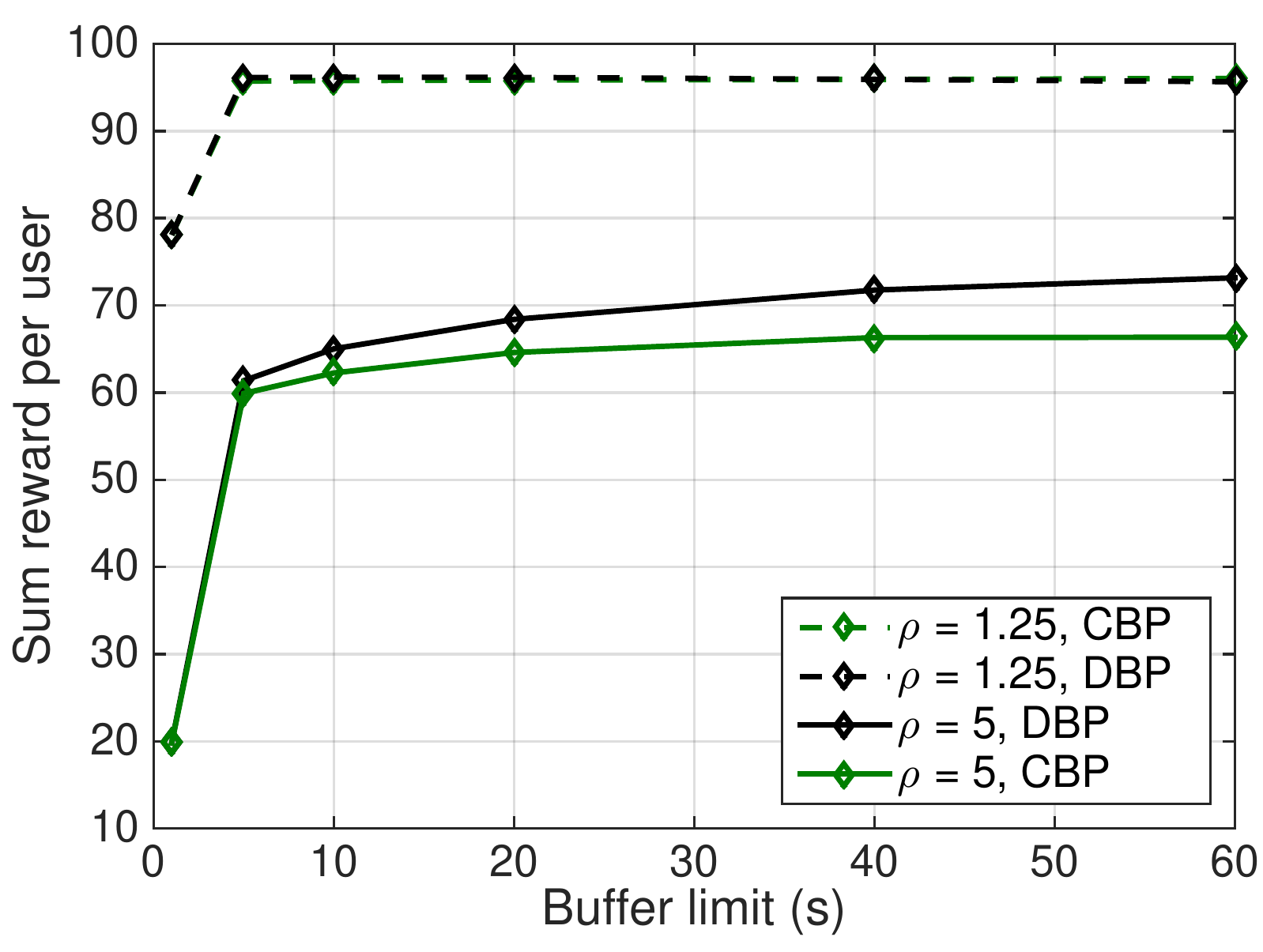}
             \caption{$c_{avg} =  4.5$ Mbps}\label{fig:buflim_high}
        \end{subfigure}
        \begin{subfigure}[h]{0.25\textwidth}
             \includegraphics[height=1.38in]{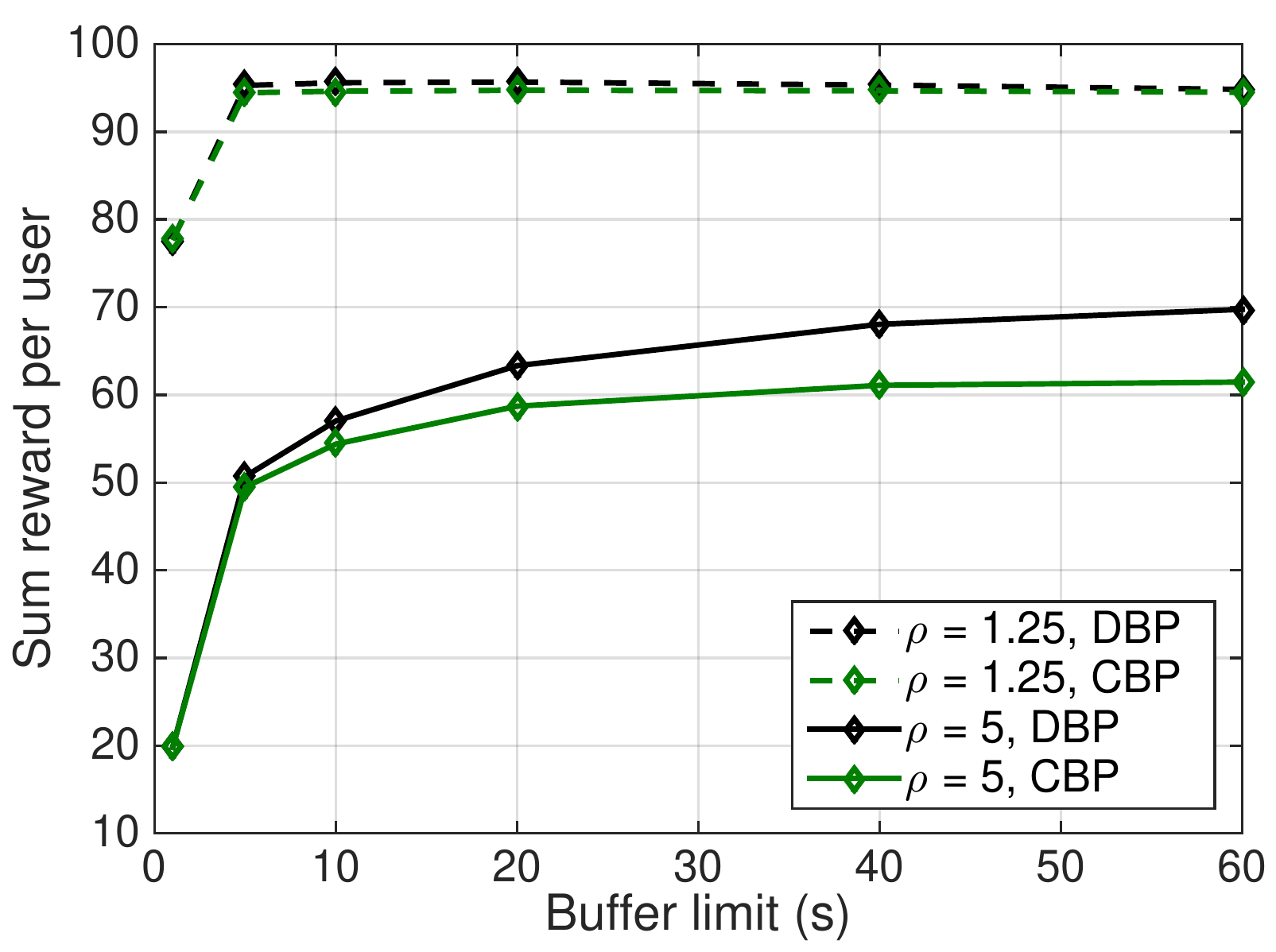}
             \caption{$c_{avg} =  2.55$ Mbps}\label{fig:buflim_low}
        \end{subfigure}        	
\caption{Reward gained by users for different buffer limit values. The QAs used for the experiment are CBP and DBP with the pre-fetch threshold being the buffer limit.}
\label{fig:buflim}         
\end{figure}

Next, we move on to comparing the QAA and BEAS algorithms with three baseline algorithms, namely Proportional Fairness (PF) \cite{kelly1998rate}, Best Channel First (BCF), and Lowest Buffer First (LBF) \cite{yeh2014bandwidth} algorithms. PF is a very popular scheduling scheme for wireless networks in which users are scheduled based on their current channel state normalized by their long term average throughput. BCF is a purely channel dependent scheduling method that only takes the current link conditions of each user and schedules users with the best channel. LBF is a purely buffer dependent scheme in which users that have fewer base layers in the buffer are prioritized. 

An important measure for comparison is to determine how each of these algorithms implement the quality-delay trade-off explained in Section \ref{sec:sys_model}. Therefore, for each scenario under consideration, we show the average fraction of time that each user spends re-buffering, the average fraction of segments that are delivered with only the base layer, and the value of the sum reward per user, which combines both QoE measures into one. It should be noted that while for the re-buffering and reward plots, the x-axis represents the load on the network, the same axis for the video layer plots shows the number of subchannels.

\subsection{Homogeneous System}\label{sec:hom}
We first consider the case of homogeneous users in Figures \ref{fig:midchan} and \ref{fig:lowchan} for channels with $c_{avg} = 4.5$ Mbps and $c_{avg} = 2.55$ Mbps, respectively. By looking at Figures \ref{fig:midchan_reward} and \ref{fig:lowchan_reward}, we observe that in terms of reward, QAA performs very close to the optimum illustrated by the black line, especially for highly loaded networks. Furthermore, we observe that LBF performs better than PF and BCF in high capacity networks with low load while for heavily loaded or low capacity networks, the reverse occurs. 

From Figures \ref{fig:midchan_rebuf} and \ref{fig:lowchan_rebuf}, we see that the two channel based schemes have poor delay performance. Also, in the low capacity network with high load, LBF also has poor delay performance which is due to the fact that by always scheduling the user with the smallest buffer, it might choose users with very poor channel conditions for which the download takes long. In other words, LBF has poor spectrum utilization which is detrimental in low capacity scenarios and high loads where resources are very scarce. These findings suggest that in order to provide satisfactory delay performance, algorithms should take both channel, and buffer state into account.

\begin{figure}[h]
\centering
\hspace*{-0.6cm}
	\captionsetup[subfigure][h]{twoside,margin={0cm,0cm}}
	\centering
	\begin{subfigure}[h]{0.25\textwidth}
             \includegraphics[height=1.38in]{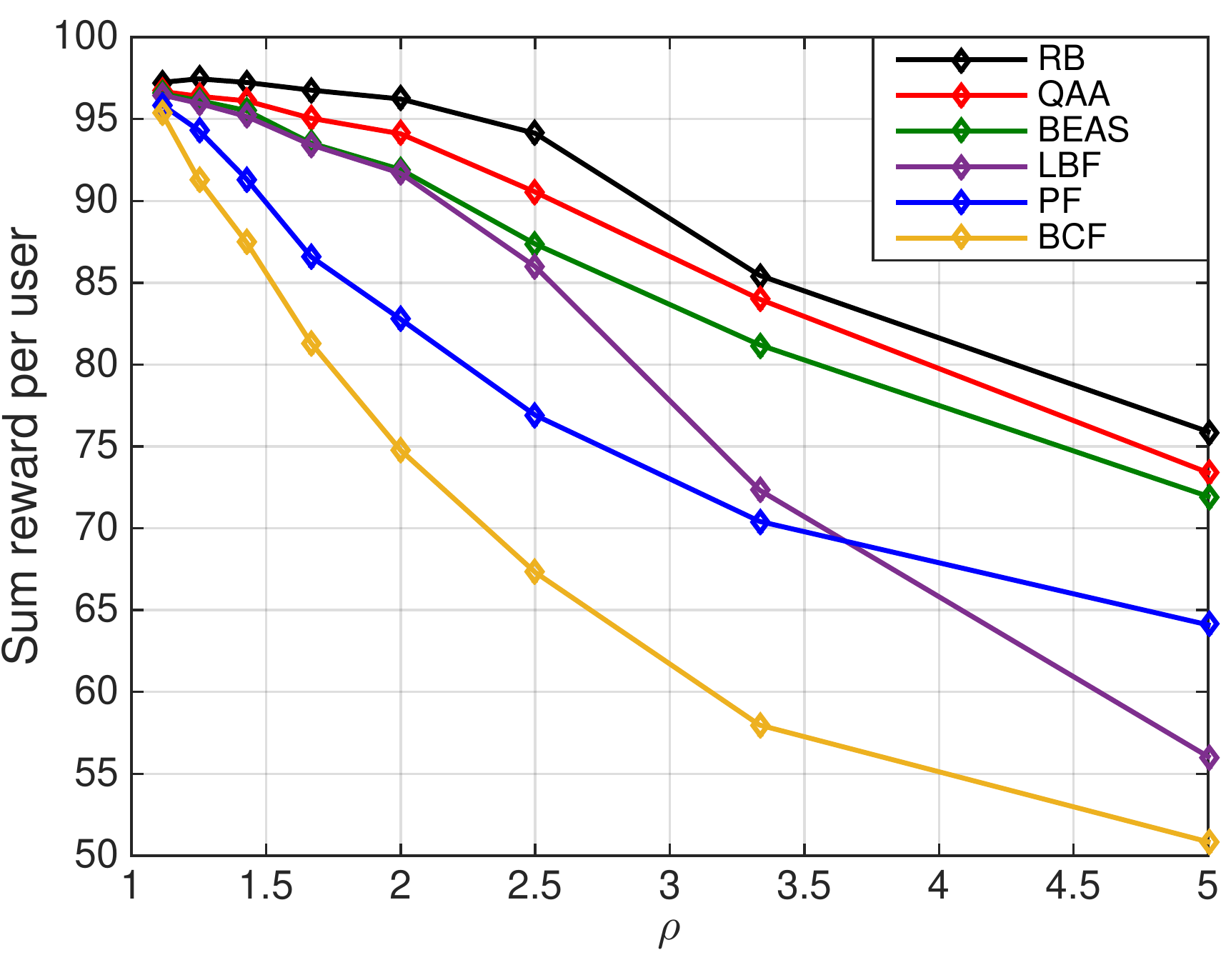}
             \caption{Reward per user}\label{fig:midchan_reward}
        \end{subfigure}
        \begin{subfigure}[h]{0.25\textwidth}
             \includegraphics[height=1.38in]{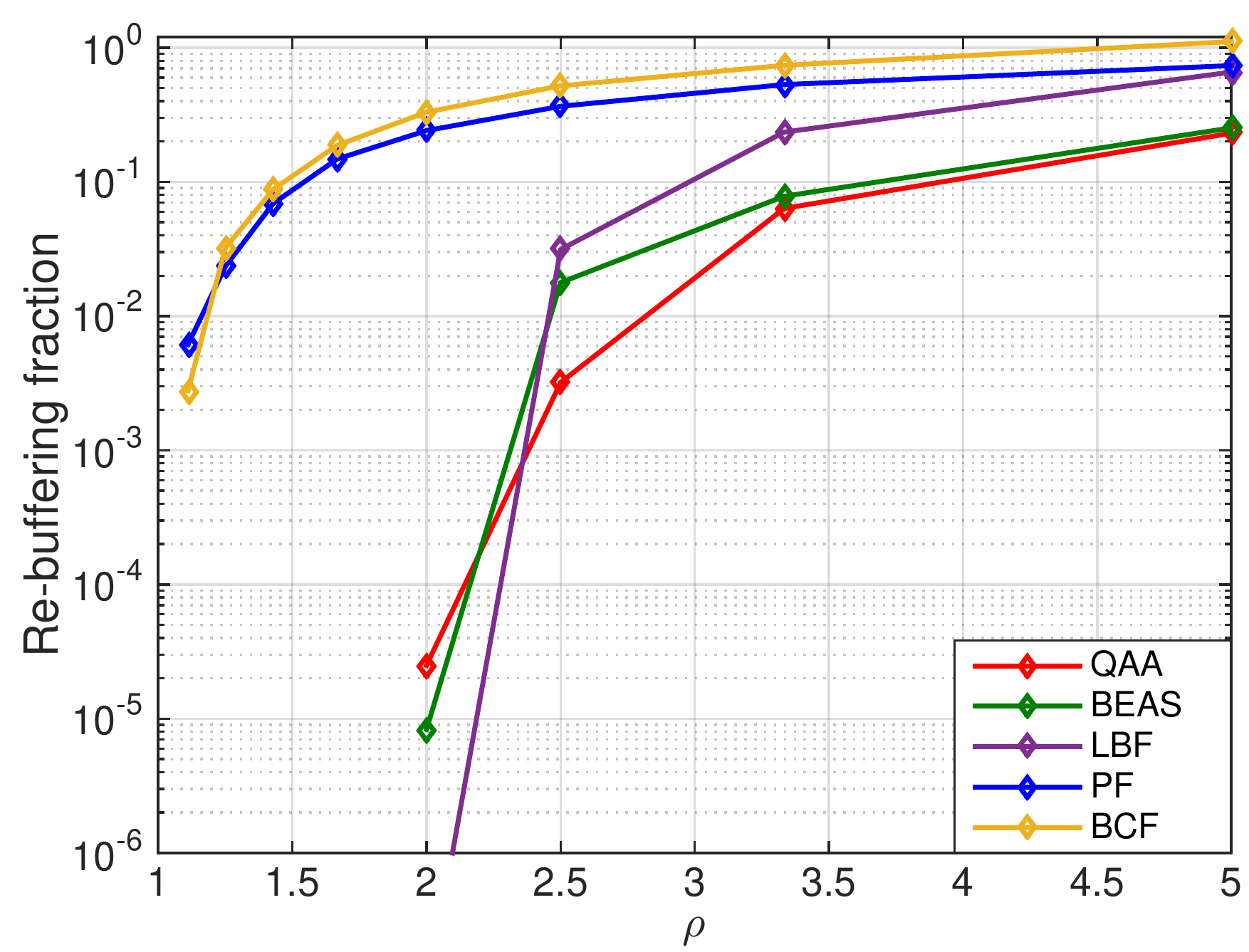}
             \caption{Re-buffering fraction}\label{fig:midchan_rebuf}
        \end{subfigure}   \\
       \begin{subfigure}[h]{0.35\textwidth}
             \includegraphics[height=1.47in]{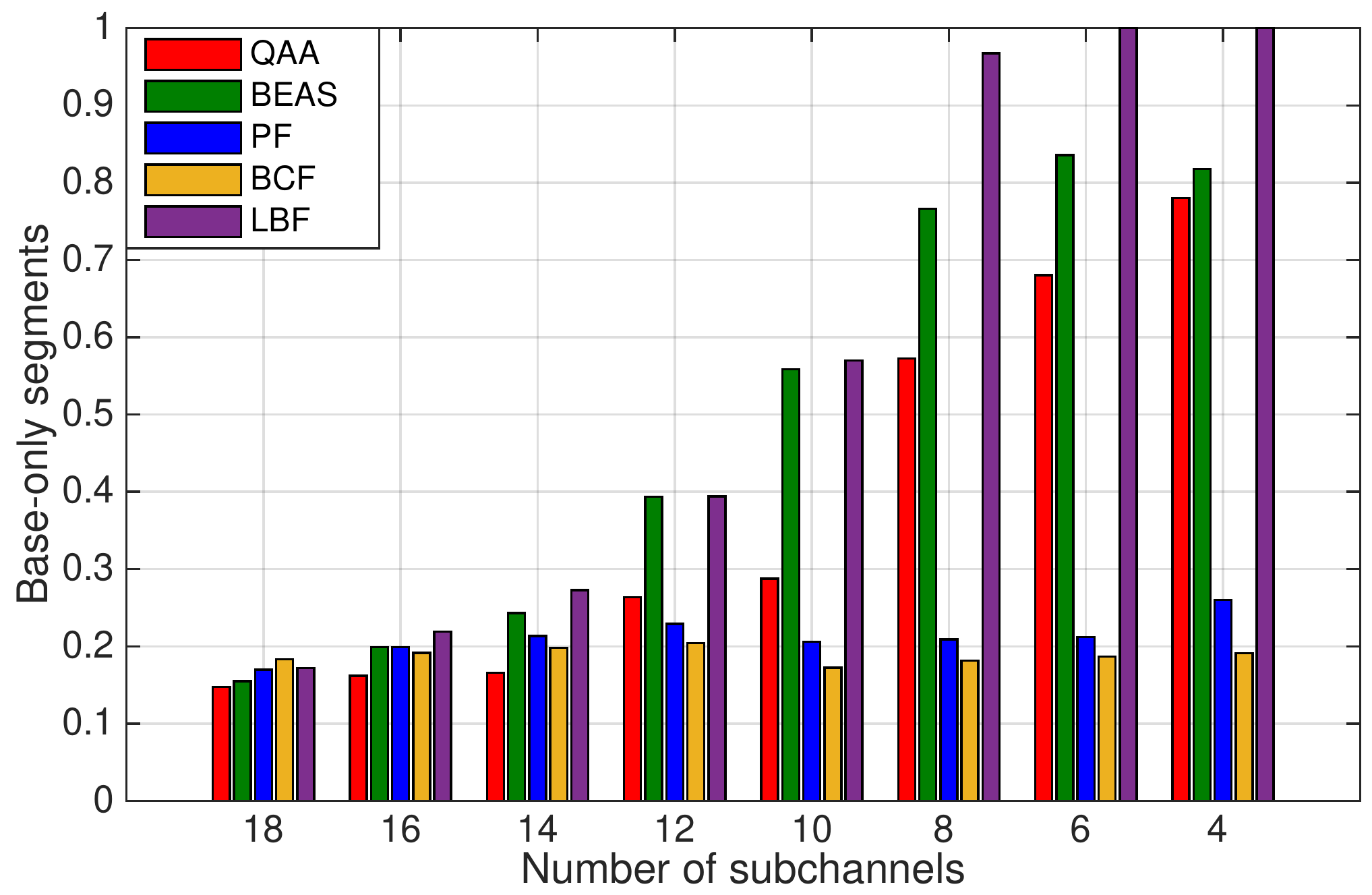}
             \caption{Base-only segment fraction}\label{fig:midchan_layer}
        \end{subfigure}        	
\caption{20 homogeneous users streaming a 10 minute video with DBP-20s, $c_{avg} = 4.5$ Mbps.}
\label{fig:midchan}         
\end{figure}

Providing enhanced delay performance comes at the cost of delivering segments with fewer higher layers in order to avoid re-buffering. Figures \ref{fig:midchan_layer} and \ref{fig:lowchan_layer} show the average fraction of segments that were delivered with only the base layer. We can see from these figures that PF and BCF provide on average more segments with maximum quality than the other schemes. This result, together with the delay performance shows that these two schemes tend to over-serve some users, thereby being able to deliver more full quality segments, and under-serve the rest and cause large re-buffering. The QAA scheme adjusts the base layer only fraction with the load on the network, hence, when the load is large, fewer full quality segments are delivered and vice versa. LBF has a poor performance in these figures which is due to the fact that by preferring small buffer users without taking into account the channel conditions, barely any user can get beyond the initial base layer build up phase of DBP-20s. We can also see that by decreasing the average capacity of the network in Figure \ref{fig:lowchan_layer}, all scheduling schemes are more prone to delivering base layer only segments. 

In all cases discussed above, BEAS performs closest to QAA in terms of average reward per user which is mostly due to its ability to efficiently avoid re-buffering. However, for the video quality, it is sometimes not able to effectively mimic QAA. This is the penalty of not knowing the users' QA. 

\begin{figure}[h]
\centering
\hspace*{-0.6cm}
	\captionsetup[subfigure][h]{twoside,margin={0cm,0cm}}
	\centering
	\begin{subfigure}[h]{0.25\textwidth}
             \includegraphics[height=1.38in]{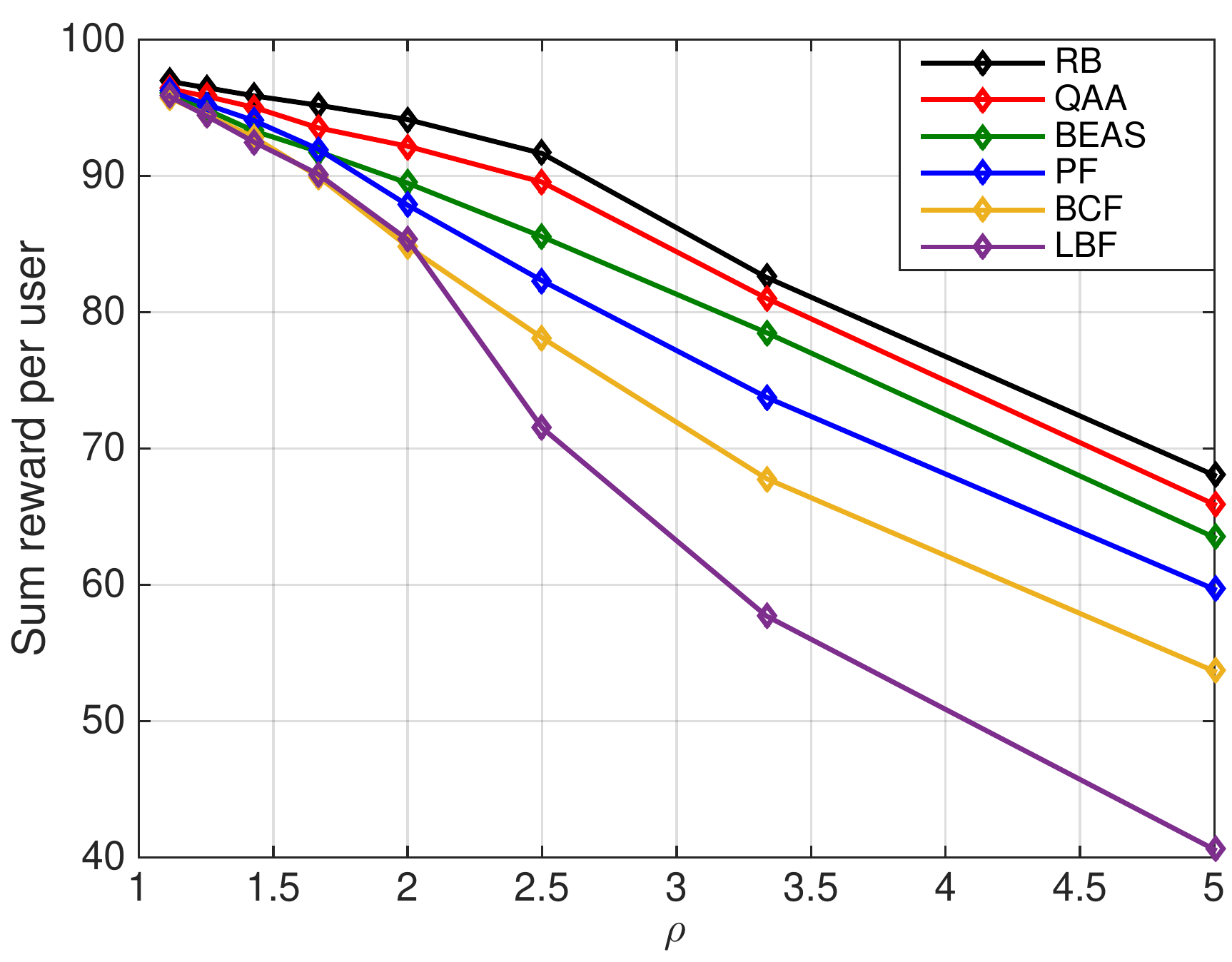}
             \caption{Reward per user}\label{fig:lowchan_reward}
        \end{subfigure}
        \begin{subfigure}[h]{0.25\textwidth}
             \includegraphics[height=1.38in]{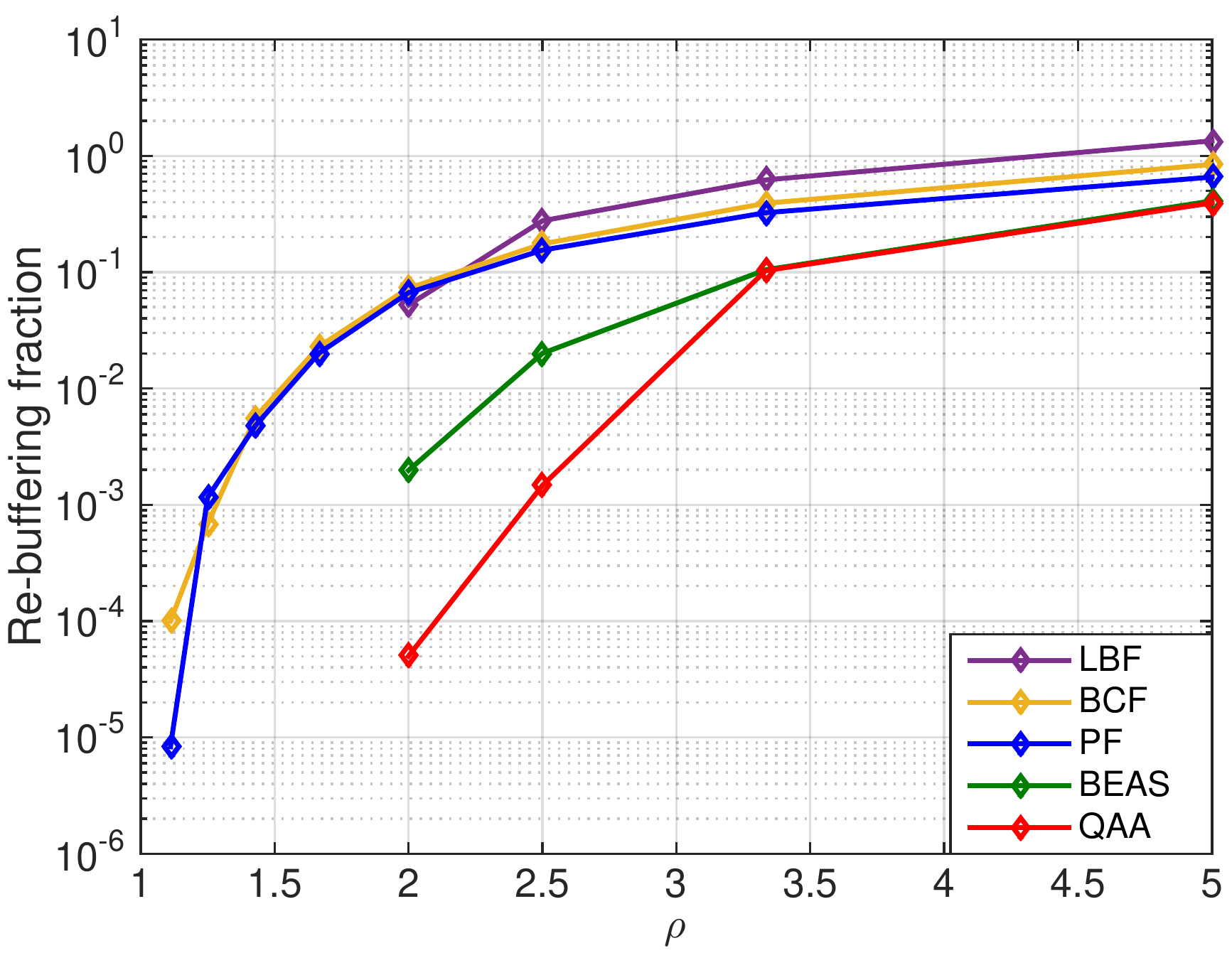}
             \caption{Re-buffering fraction}\label{fig:lowchan_rebuf}
         \end{subfigure}   \\
       \begin{subfigure}[h]{0.35\textwidth}
             \includegraphics[height=1.45in]{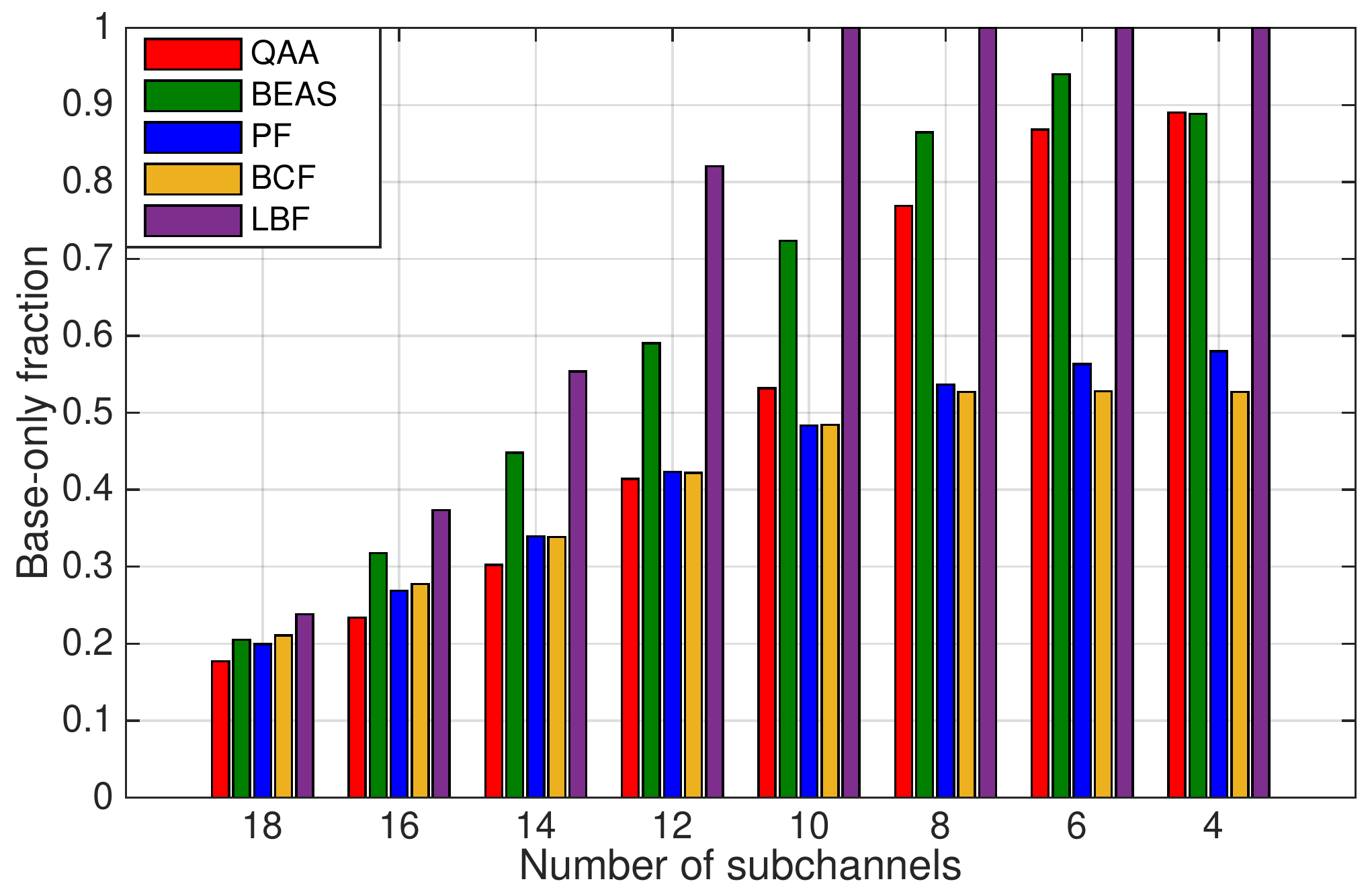}
             \caption{Base-only segment fraction}\label{fig:lowchan_layer}
        \end{subfigure}        	
\caption{20 homogeneous users streaming a 10 minute video with DBP-20s, $c_{avg} = 2.55$ Mbps.}
\label{fig:lowchan}         
\end{figure}

\subsection{Heterogeneous System}\label{sec:het}
Figures \ref{fig:diffchan} and \ref{fig:diffQ} show the performance of the scheduling schemes in non-homogeneous networks based on the discussion in Corollary \ref{cor:groups}. Figure \ref{fig:diffchan} shows the QoE metrics of a network with 20 users, out of which 10 experience $c_{avg} = 4.5$ Mbps and the other 10 experience $c_{avg} = 2.55$ Mbps. Similarly, Figure \ref{fig:diffQ} represents a network with $c_{avg} = 4.5$ Mbps and 20 users. Here, 10 users deploy a DBP-15s QA and the others use CBP. Similar to the homogenous cases, QAA and BEAS outperform the other algorithms in both sum reward and re-buffering. The general trend of the results is similar to the homogeneous case. In Figure \ref{fig:diffchan}, similar to \ref{fig:lowchan}, due to the presence of users in poor conditions, LBF degrades in performance as the load on the network increases. Also, in Figure \ref{fig:diffQ_layer}, we observe that more users are able to deliver full quality segments, which is due to the fact that unlike DBP which starts with downloading only base layers regardless of the channel conditions, CBP is very aggressive in requesting enhancement layers when the channel is in good condition.

\begin{figure}[h]
\hspace*{-0.6cm}
	\captionsetup[subfigure][h]{twoside,margin={0cm,0cm}}
	\centering
	\begin{subfigure}[h]{0.25\textwidth}
             \includegraphics[height=1.38in]{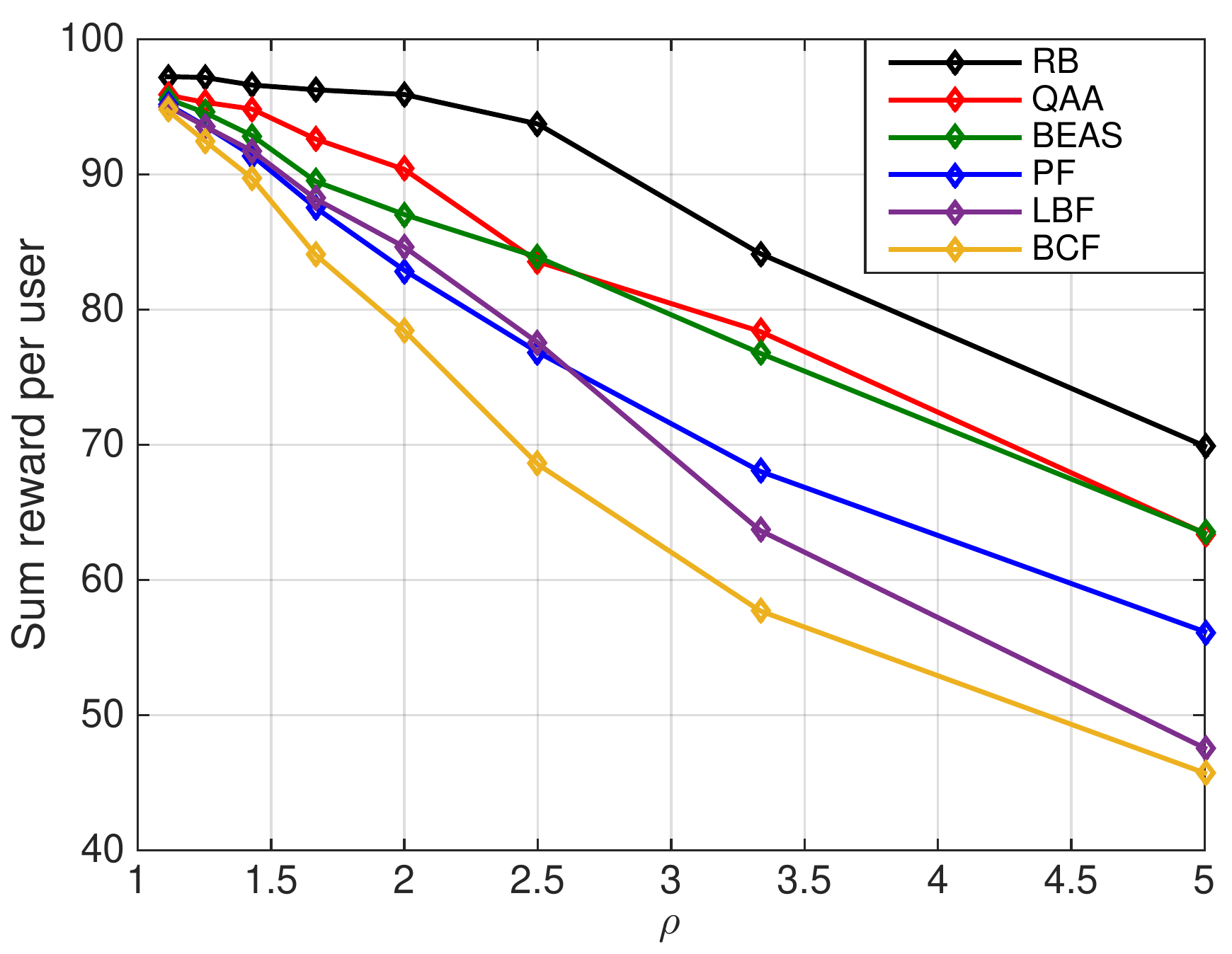}
             \caption{Reward per user}\label{fig:diffchan_reward}
        \end{subfigure}
        \begin{subfigure}[h]{0.25\textwidth}
             \includegraphics[height=1.38in]{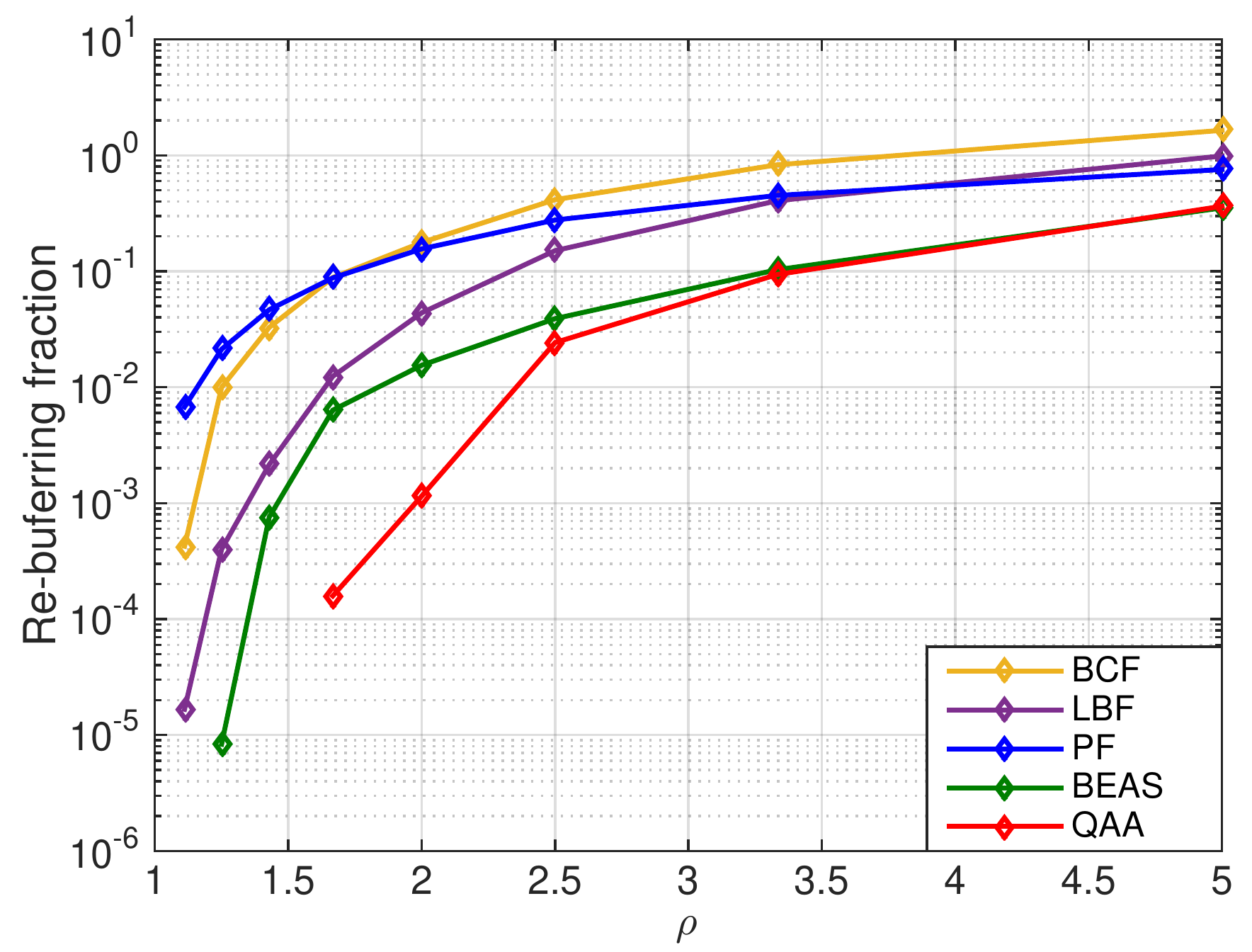}
             \caption{Re-buffering fraction}\label{fig:diffchan_rebuf}
         \end{subfigure}   \\
       \begin{subfigure}[h]{0.35\textwidth}
             \includegraphics[height=1.45in]{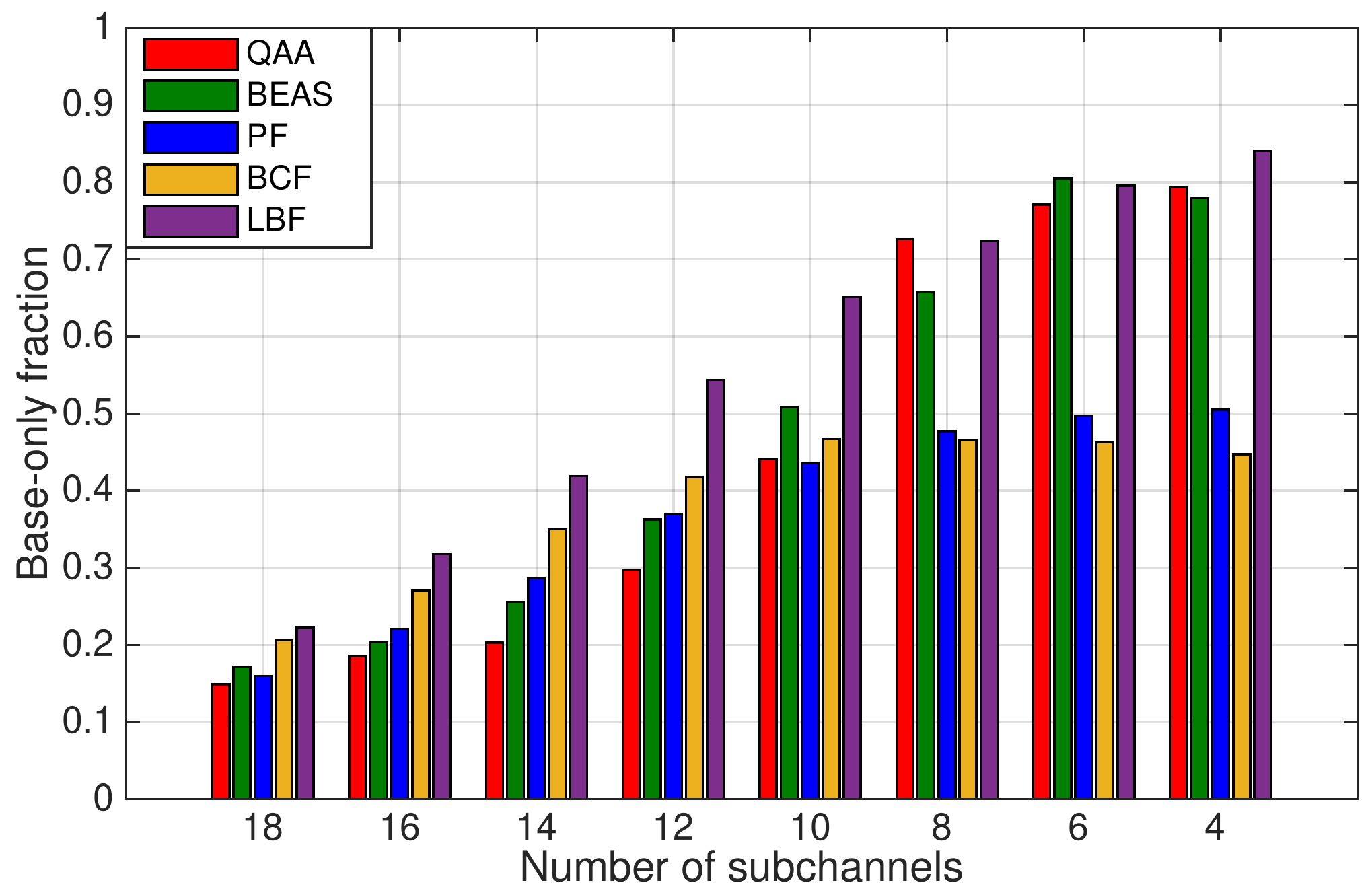}
             \caption{Base-only segment fraction}\label{fig:diffchan_layer}
        \end{subfigure}        	
\caption{Heterogeneous network with DBP-20s. Ten users experience $c_{avg} = 4.5$ Mbps  and the other ten have $c_{avg} = 2.55$ Mbps.}
\label{fig:diffchan}         
\end{figure}

\begin{figure}[h]
\hspace*{-0.6cm}
	\captionsetup[subfigure][h]{twoside,margin={0cm,0cm}}
	\centering
	\begin{subfigure}[h]{0.25\textwidth}
             \includegraphics[height=1.38in]{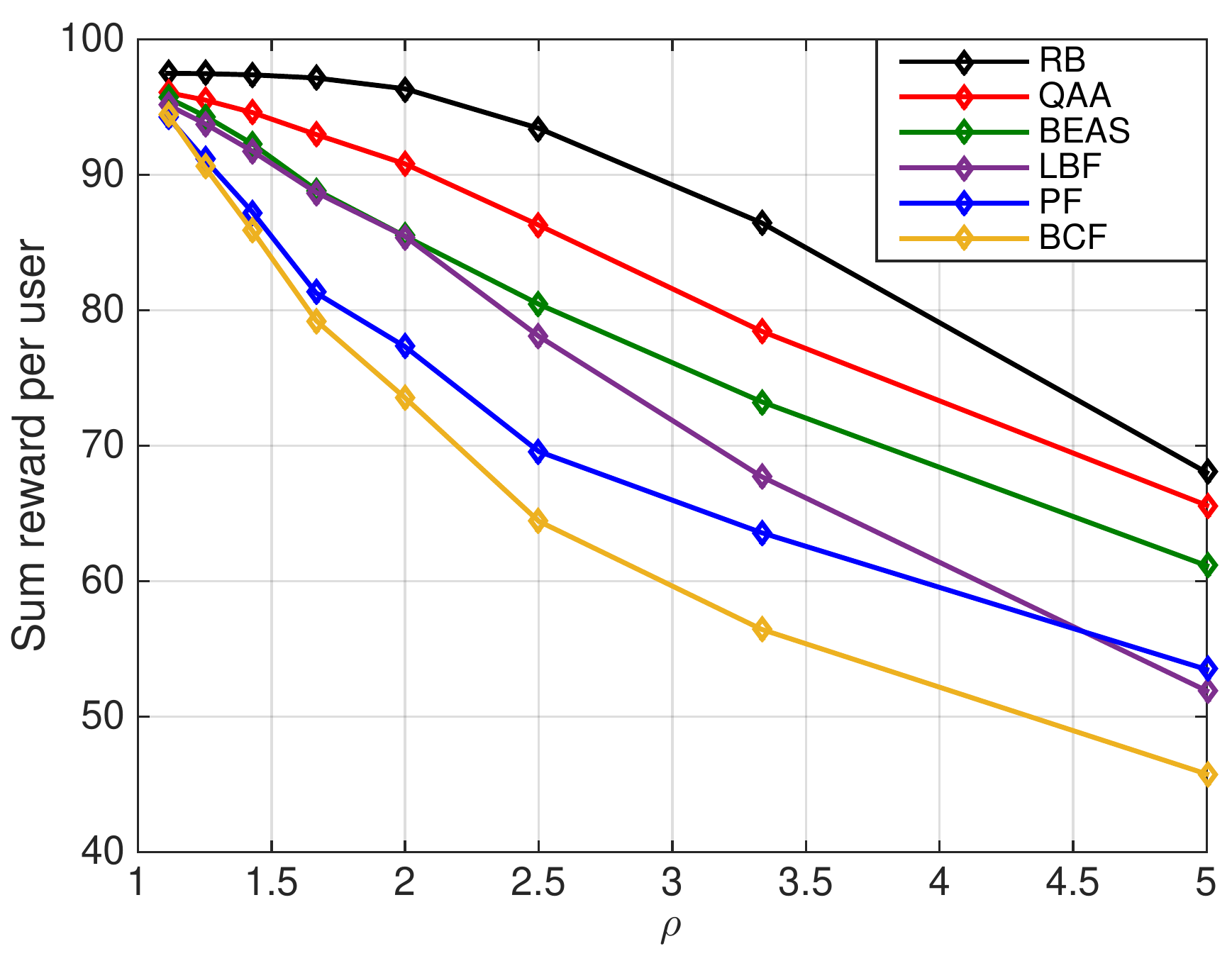}
             \caption{Reward per user}\label{fig:diffQ_reward}
        \end{subfigure}
        \begin{subfigure}[h]{0.25\textwidth}
             \includegraphics[height=1.38in]{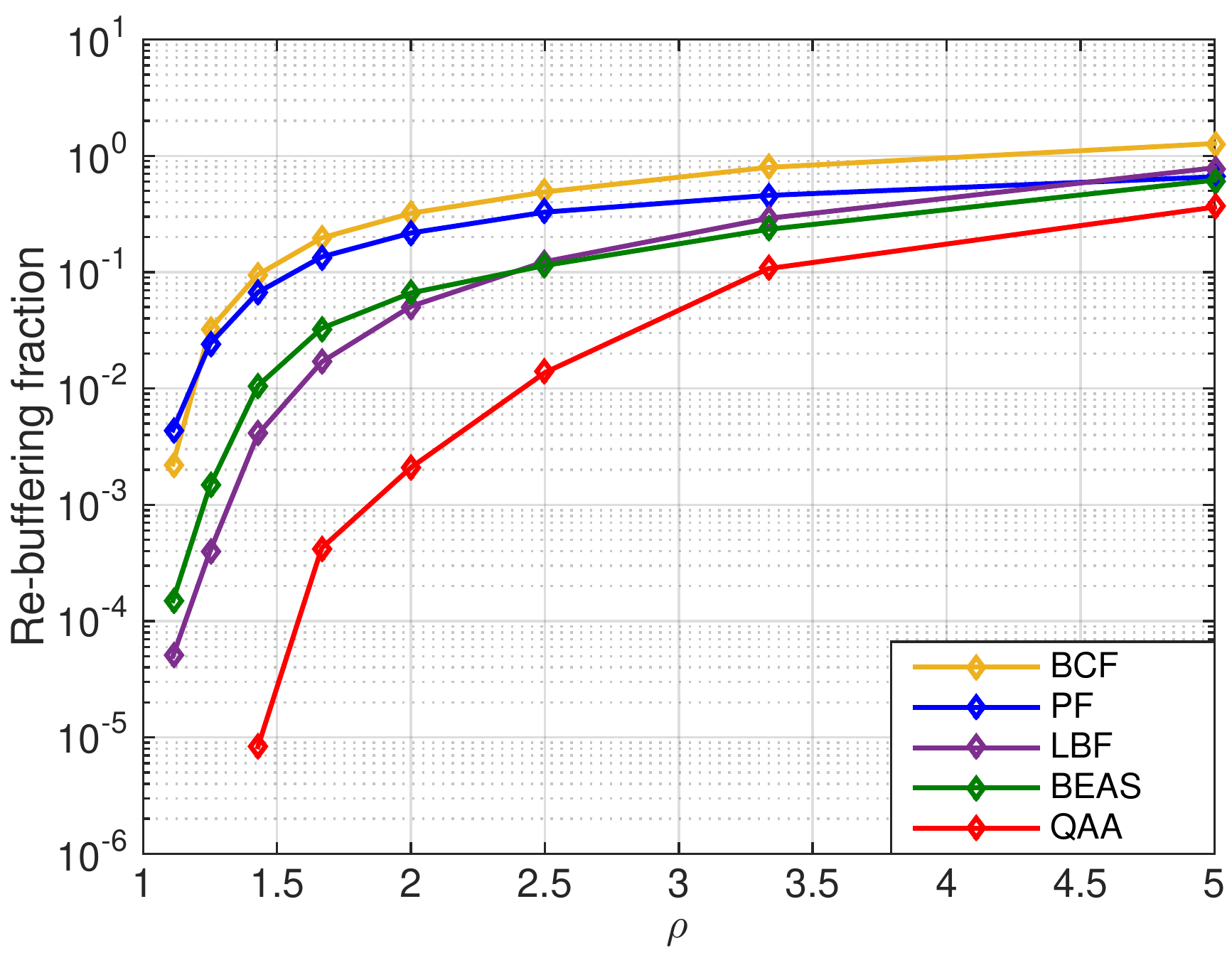}
             \caption{Re-buffering fraction}\label{fig:diffQ_rebuf}
         \end{subfigure}   \\
       \begin{subfigure}[h]{0.35\textwidth}
             \includegraphics[height=1.45in]{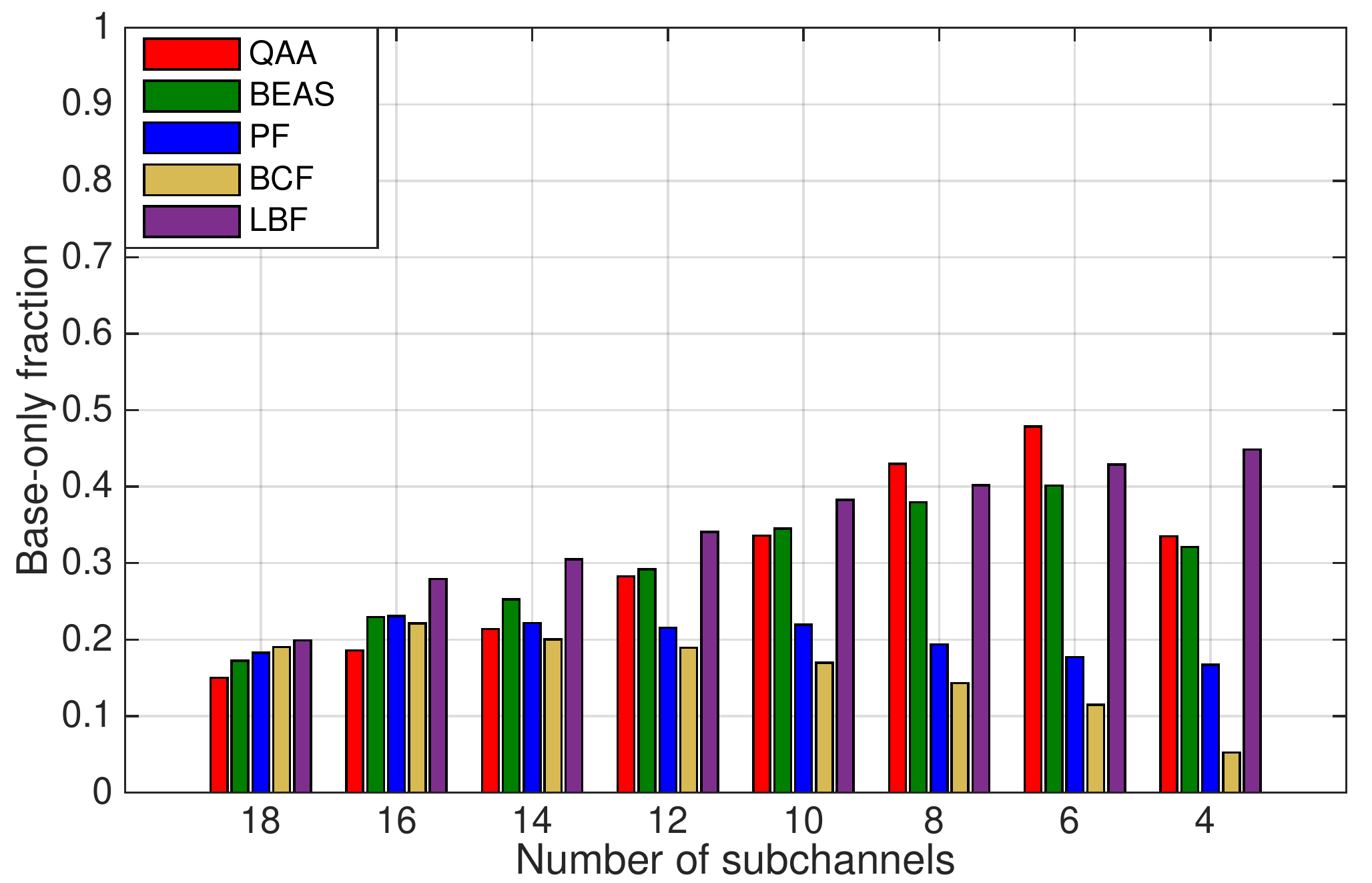}
             \caption{Base-only segment fraction}\label{fig:diffQ_layer}
        \end{subfigure}        	
\caption{Heterogeneous network with $c_{avg} = 4.5$ Mbps. Ten users use DBP-15s and the other ten use CBP for QA.}
\label{fig:diffQ}         
\end{figure}

%

\subsection{Discussion}
From the results in Sections \ref{sec:hom} and \ref{sec:het}, we can draw several conclusions. 

Since our QoE model rewards users based on their immediate playback output, it is desirable to always have non-zero segments in the buffer, preferably with as many layers as possible. Purely buffer based schemes (LBF) have an advantage in re-buffering due to the strict priority of users that are in higher risk of draining the buffer. However, for heavily loaded networks and lower data rates, the overall video quality drops because of poor spectrum usage. Therefore, the buffer level alone cannot be used as a reliable scheduling measure. On the other hand, purely channel dependent schemes (PF and BCF) do not need the buffer level as a scheduling measure since their goal is to increase network throughput. However, since the delay sensitivity of video is not taken into account in these scheduling schemes, they have poor re-buffering performance and hence, provide lower QoE. 

On the other hand, since purely buffer based schemes conservatively try to only avoid re-buffering, they fail at delivering high video quality, especially when the load is high. We can therefore conclude that by using buffer or channel alone, no scheduling policy can deliver satisfactory QoE. BEAS combines the desirable features of channel dependent and buffer dependent scheduling policies into a simple algorithm. By keeping track of the evolution of the buffer state, we can implicitly infer both the capacity and the load of the network. Whenever the buffer level for a user starts to diminish or if the user cannot build up an adequate buffer occupancy, users are scheduled based on the channel state to quickly fill the buffer and prevent re-buffering. If the buffer level grows, since there is no urgency for utilizing the channel efficiently, the scheduler prioritizes users with the lowest segments in the buffer. This also explains why in low capacity networks, where the buffer level is generally low, the gap between BEAS and channel dependent policies narrows.

Another important conclusion from these results is that, especially in wireless networks, even well designed end-to-end QA schemes cannot guarantee QoE if the underlying scheduling at the base station is not designed properly. Also, for a fixed QoE objective, BEAS can deliver up to 30\% more users on the same channel as compared to PF.

\section{Testbed Implementation}\label{sec:implementation}
In this section, we describe some practical implications of BEAS followed by a testbed implementation.
For BEAS to run in a practical network, the scheduler needs to know the channel quality of each user as well as the state of its buffer. If HTTP is used as application layer protocol, the base station is able to extract data related to the next subsegment to be transmitted from each HTTP request packet that the user sends to the content provider. Thereby, it can accurately estimate the buffer level of each user. However, since more content providers are using HTTPS, this information is encrypted and cannot be retrieved by any intermediate node in the network including the base station. Recently, efforts are being made for estimating the buffer level on the users by measuring TCP/IP metrics. For instance, in \cite{krishnamoorthi2017buffest}, a machine learning-based traffic classification method is presented that aims at solving this problem. However, in the absence of these estimation techniques, the buffer state has to be fed back to the base station, in a manner similar to the CQI, as suggested in \cite{joseph2013nova, zahran2017sap,hosseini2016svc}.

We have implemented the scheduling algorithms on the \texttt{sandbox 4} network located in the \texttt{orbit}\cite{orbit} testbed. This experimental network consists of 9 nodes equipped with WiFi transceivers. The attenuation of the link between any two nodes can be manually altered from 0 to 63 dB. We use one of the nodes as a base station that contains all video segments and the other nodes act as streaming users. Then, we divide the 60 second long video into 1s long segments and encode them into a base layer and two enhancement layers using the \texttt{JSVM} encoder. We use temporal scalability where the frame rate of the temporal layers is 6, 12, and 24 frames per second. The QA deployed on all users is set  to DBP-5s.

For our experiment, in order to generate a heterogeneous wireless channel with fluctuating link capacities, we randomly change the value of the attenuation for each node every five seconds. For half of the nodes, the attenuation value is chosen randomly from 6dB, 9dB and 12dB. For the other half, the possible values for attenuation are 9dB, 12dB, and 15dB. Whenever a segment is fully retrieved by a user, the base station polls all users for their channel state which respond by sending their instantaneous channel state. For the buffer, we simplify the implementation by assuming knowledge of the duration of the segments and the layer index of the transmitted segment at the base station. Therefore, the base station can calculate all users' buffer state at any instance without the need of an explicit feedback.

Figure \ref{fig:imp} shows the performance comparison between the studied algorithms. It can be seen that similar to the simulation results, LBF has better re-buffering performance than BCF and PF, while delivering fewer enhancement layer segments. PF and BCF suffer from higher re-buffering but are able to deliver more enhancement layer segments. The benefits of both schemes are combined into BEAS which has the lowest re-buffering and while it is not always able to deliver many enhancement layers, it outperforms the other schemes in terms of total reward.

\begin{figure}[h]
\hspace*{-0.6cm}
	\captionsetup[subfigure][h]{twoside,margin={0cm,0cm}}
	\centering
	\begin{subfigure}[h]{0.25\textwidth}
             \includegraphics[height=1.38in]{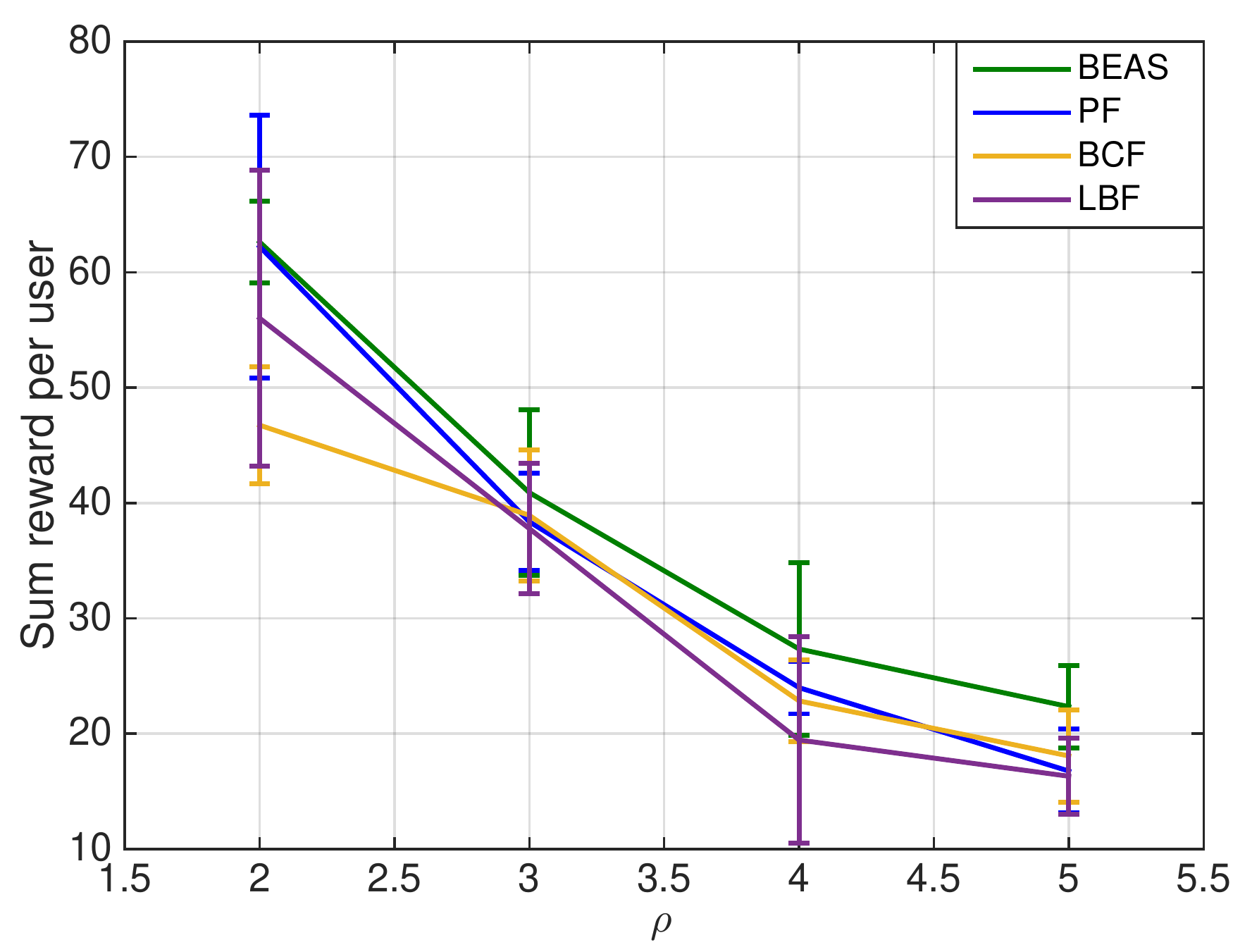}
             \caption{Reward per user}\label{fig:imp_reward}
        \end{subfigure}
        \begin{subfigure}[h]{0.25\textwidth}
             \includegraphics[height=1.38in]{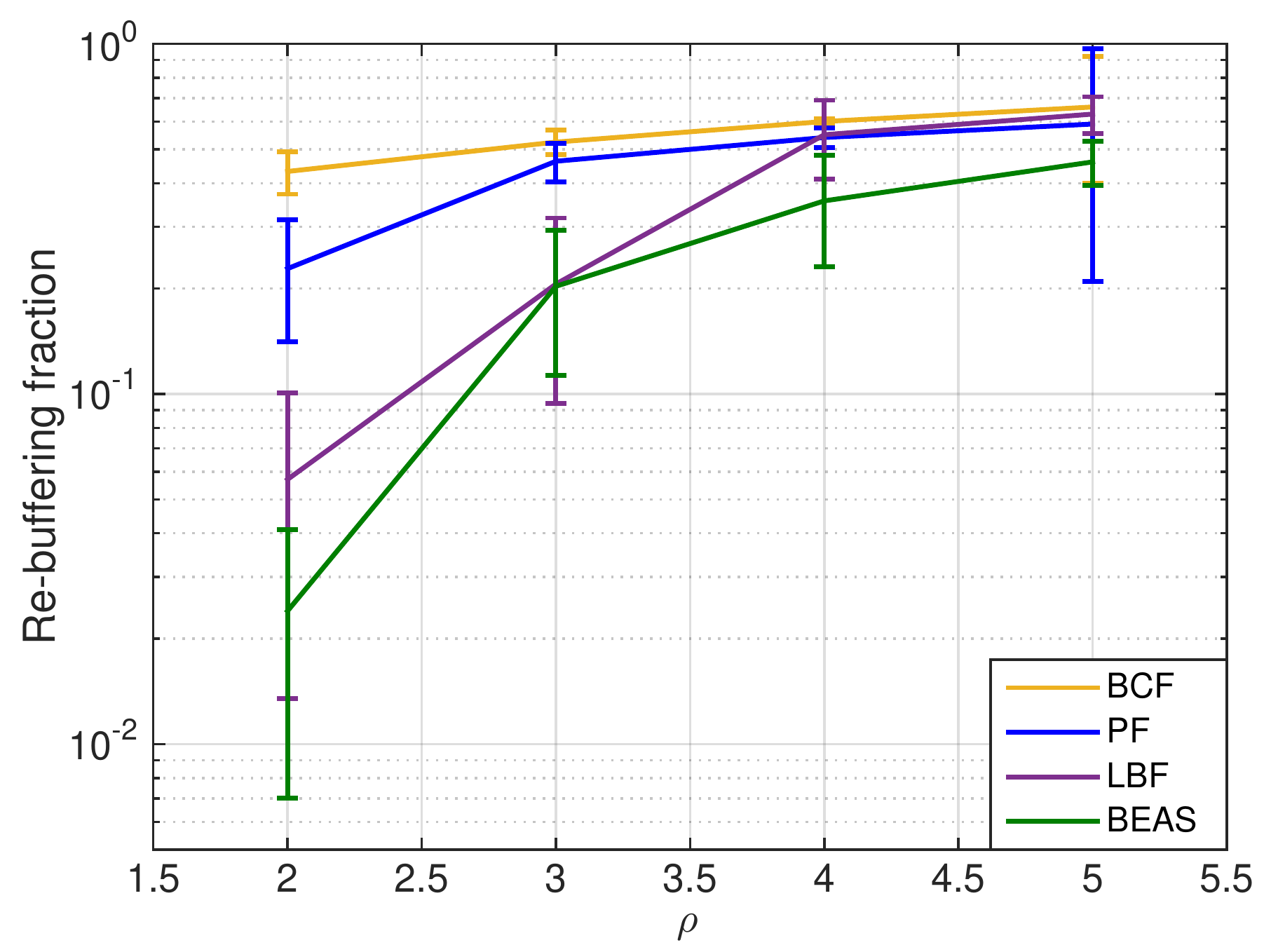}
             \caption{Re-buffering fraction}\label{fig:imp_rebuf}
        \end{subfigure} \\       
        	 \begin{subfigure}[h]{0.3\textwidth}
             \includegraphics[height=1.42in]{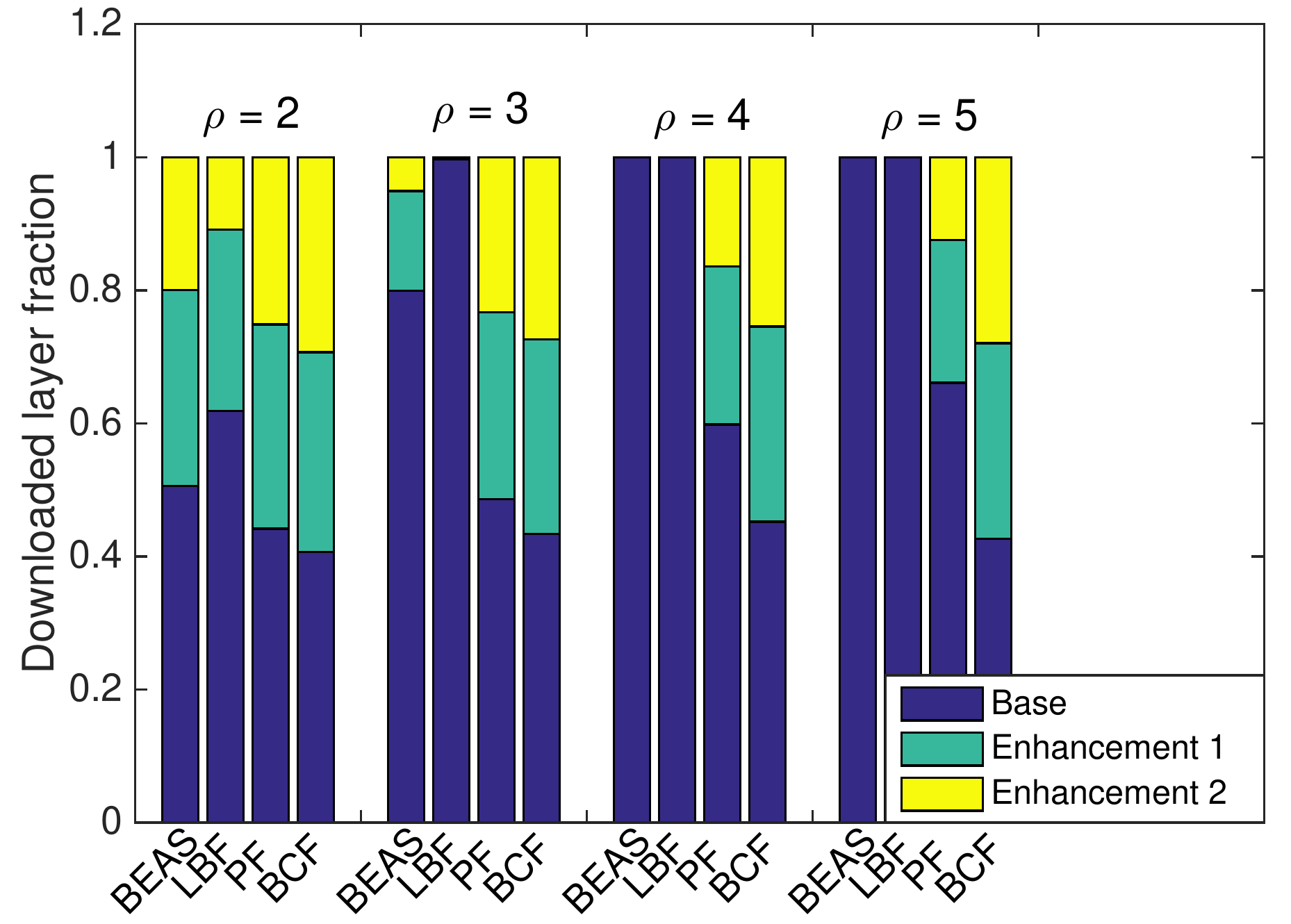}
             \caption{Fraction of downloaded layers}\label{fig:imp_layer_fraction}
        \end{subfigure}    
\caption{Testbed implementation of the scheduling algorithms. All users perform DBP-5s.}
\label{fig:imp}         
\end{figure}

\section{Conclusions and Future Work}\label{sec:con}

In this paper, we developed a framework for QA-adaptive SVC scheduling in wireless networks. We argue that instead of an overly complex and practically infeasible jointly optimized system, we should separate QA from the scheduler and adapt the scheduling policy to the underlying QA deployed on each user. Using the concept of RB, we formulate the problem as a linear program and solve it in order to obtain long term performance measures. We also formulate the jointly optimal problem as a MUSMDP in order to see the cost incurred by diverging from the jointly optimal scenario. We then develop a primal dual algorithm that performs scheduling in a QA-aware setting. By analyzing the outcome of this algorithm, we propose a heuristic scheduling algorithm that performs QA-blind scheduling with minimal complexity and signaling.

We also perform an extensive simulation study comparing the proposed scheduling algorithms with baseline schemes. Our results indicate that the optimal scheduler should have a joint buffer dependent and channel dependent behavior. By tracking the evolution of the buffer occupancy for each user, the scheduler should prioritize users that are draining the buffer and schedule them based on which user has a better channel quality. We also conclude that while QA schemes are designed to offer a good quality-delay trade-off, if the scheduler is not well designed, the end-to-end QA scheme cannot deliver high QoE in wireless networks. We finally evaluate the performance of the scheduling algorithm with a testbed implementation.
\appendices
\section{}\label{app_1}
We first assume that there is an optimal allocation vector $\mathbf{x}^* = (\mathbf{x}^*_1,\cdots,\mathbf{x}^*_n)^T$ (where $\mathbf{x}^*_k = (\mathbf{x}^{0*}_k,\mathbf{x}^{1*}_k)^T$), in which the above proposition does not hold. Our goal is to show that if we replace the allocation vector $\mathbf{x}^{*}$ by the average allocation vector $\mathbf{\tilde{x}} = \frac{1}{N}\sum_{i\in\mathcal{N}}{\mathbf{x}_i^*}$, the new allocation is feasible and does not decrease the value of the objective function. 
First, we define $\mathbf{\tilde{x}}^1 = \frac{1}{N}\sum_{i\in\mathcal{N}}{\mathbf{x}_i^{1*}}$ as the average allocation vector for the active cases. Since $\mathbf{x}_i^{1*}$ ($\forall i\in\mathcal{N}$) is feasible and therefore non-negative, their average also satisfies the non-negativity constraint. Based on the definition of $\mathbf{\tilde{x}}^1 = \frac{1}{N}\sum_{i\in\mathcal{N}}{\mathbf{x}_i^{1*}}$, satisfying the resource constraint (\ref{eq:whittle}) becomes trivial. In order to check the feasibility of the new allocation for the polytope constraint (\ref{eq:depinoux}), we rewrite (\ref{eq:polytope}) as follows:
\begin{equation}\label{eq:constraint}
x_s^0 - \beta\sum_{l\in\mathcal{S}}h_{ls}^0x_{l}^0+x_s^1 - \beta\sum_{l\in\mathcal{S}}h_{ls}^1x_{l}^1 = \alpha_s,s\in\mathcal{S}.
\end{equation}
Therefore, for the optimal point, the set of constraints in (\ref{eq:depinoux}) can be represented as:
\begin{equation}
\mathbf{A}\mathbf{x}^*_i = \mathbf{A}^0\mathbf{x}_i^{0*} + \mathbf{A}^1\mathbf{x}_i^{1*} = \mathbf{\alpha} ~~ \forall i\in\mathcal{N},
\end{equation}
where $\mathbf{A}^0 = (\mathbf{I} - \beta \mathbf{H}^{0T})$ and $\mathbf{A}^1 = (\mathbf{I} - \beta \mathbf{H}^{1T})$ (Note that the user index is omitted for all matrices since they are identical for all users). In order to prove the feasibility of $\mathbf{\tilde{x}}$, we need to show that the following holds:
\begin{equation}\label{eq:proof_res}
\mathbf{A}^0\mathbf{\tilde{x}}^0 + \mathbf{A}^1\mathbf{\tilde{x}}^1 = \mathbf{\alpha}.
\end{equation}
If we plug in the values for $\mathbf{\tilde{x}}^0$ and $\mathbf{\tilde{x}}^1$, we have:

\begin{eqnarray}\label{eq:proof_res2}
\nonumber
\frac{1}{N}\sum_{i\in\mathcal{N}}\left(\mathbf{A}^0\mathbf{x}_i^{0*} + \mathbf{A}^1\mathbf{x}_i^{1*}\right) =& \frac{1}{N}\times N \mathbf\alpha = \mathbf\alpha,
\end{eqnarray}

and feasibility of $\mathbf{\tilde{x}}$ is concluded.
Finally, we check if the value of the objective function changes if we replace $\mathbf{x}^*$ with $\mathbf{\tilde{x}}$. Since the immediate rewards in each state is equal across the users, the optimal value of (\ref{eq:objective}) can be written as $\mathbf{R} \cdot\sum_{i\in\mathcal{N}}(\mathbf{x}^{0*}_i + \mathbf{x}^{1*}_i)$, where $\mathbf{R} = ( R_{s})_{1:|\mathcal{S}|}$. By replacing $\mathbf{x}_i^*$ ($\forall i \in \mathcal{N}$) with $\mathbf{\tilde{x}}$, it is easily verified that the objective value remains constant and the proof is complete.

\section{}\label{app_2}
The SMDP described by the objective function (\ref{eq:musmdp_new_objective}) can be turned into an equivalent MDP which an be solved using the Bellman equation described below for each user $n$ \cite{puterman2014markov}:

\begin{equation}\label{eq:equiv_mdp}
V_{s_n}^* = \max_{a_n} \left( \bar{r}_{s_n}^{a_n} + \sum_{{j_n}\in\mathcal{S}}M(j_n|s_n,a_n)V_{j_n}^*\right), 
\end{equation}
where $j_n$ is the state of user $n$ after action $a_n$ is fully executed, and the values for $\bar{r}_{s_n}^{a_n}$ and $M(j_n|s_n,a_n)$ are derived as follows:
\begin{eqnarray}
	\bar{r}_{s_n}^{a_n} &=& \mathbb{E}_s^a\left( \sum_{k=0}^{\tau_l}e^{-sk}R_{s_n}^k\right)\\\nonumber 
	&=& \sum_{t=0}^\infty\left(\sum_{k=0}^t\sum_{j\in\mathcal{S}_n}e^{-sk}R_{s_n}^kp(j|k,s_n,a_n)\right)f(t|s_n,a_n) \\\nonumber
	&=& \sum_{t=0}^\infty\left(\sum_{k=0}^te^{-sk}R_{s_n}^k\right)f(t|s,a) \\\nonumber
	&=& \sum_{t=0}^\infty \sum_{c\in\mathcal{C}}\left(\sum_{k=0}^te^{-sk}R_{s_n}^k f_{s_n}^{a_n}(t,c)\right). \\\nonumber
\end{eqnarray}
\begin{eqnarray} \label{eq:new_polytope}
M(j_n|s_n,a_n) &=& \sum_{k=0}^\infty e^{-sk}\mathbb{P}(k,j_n|s_n,a_n) = \mathbf{H}^{a_n}_{n,s_n,j_n}.
\end{eqnarray}
From (\ref{eq:new_polytope}) we can see that the transition probabilities of the equivalent MDP are obtained by the $\mathbf{H}^l_n$ matrices derived in (\ref{eq:hl}) and (\ref{eq:vl}). Therefore, similar to the polytope constraint derived in (\ref{eq:polytope}), we can derive the equivalent polytope constraint (\ref{eq:polytope_smdp}). 

For the resource constraint (\ref{eq:resource_smdp}), we need to show that both sides of the equation represent the expected discounted number of occupied subchannels. The right hand side is defined similar to the resource constraint (\ref{eq:simp_res_const}) and for the left hand side, we have:
\begin{eqnarray}\label{eq:constraint_deriv}
\nonumber
&&\mathbb{E}\left[\sum_{n\in\mathcal{N}}\sum_{l=1}^L\sum_{t=0}^\infty\left(e^{-st}I_{s_n}^l(t)\sum_{k=0}^{\tau_l}e^{-sk}\right)\right] = \\\nonumber
&&\sum_{n\in\mathcal{N}}\sum_{l=1}^L\mathbb{E}\left[\sum_{t=0}^\infty\left(e^{-st}I_{s_n}^l(t)\sum_{k=0}^{\tau_l}e^{-sk}\right)\right] = \\
&&\sum_{n\in\mathcal{N}}\sum_{l=1}^L\sum_{s_n\in\mathcal{S}_n}\left[\mathbb{E}_u\left(\sum_{t=0}^\infty e^{-st}I_{s_n}^l(t)\right)\mathbb{E}_\tau\sum_{k=0}^{\tau_{s_n}}e^{-sk}\right]. \\\nonumber
\end{eqnarray}
Similar to the derivation of (\ref{eq:equiv_mdp}), we conclude the following:

\begin{equation}\label{eq:exp_time}
\mathbb{E}_\tau\sum_{k=0}^{\tau}e^{-sk} = \sum_{t=0}^\infty \sum_{c\in\mathcal{C}}\left(\sum_{k=0}^te^{-sk}f_{s_n}^{a_n}(t,c)\right)
\end{equation}

By substituting (\ref{eq:exp_time}) into (\ref{eq:constraint_deriv}) and calling it $\bar\tau_{s_n}^l$, we will get the resource constraint 

\section{Calculating the average playback rate $\mu_{avg}$}\label{app_3}
Without loss of generality, we perform the derivation for the homogenous case. We can write $\mu_{avg}$ as:
\begin{equation}
\mu_{avg}=\sum_{l=0}^L\left(\sum_{i=0}^lq_i\right)\tau_l,
\end{equation}
where $q_i$ are defined as in Section \ref{sec:video_model}, with $q_0=0$ to represent a playback rate of zero for re-buffering. Also, $\tau_l$ is the fraction of total streaming time that segments with up to $l$ layers are played back according to the RB solution, which can be calculated as follows:
\begin{equation}
\tau_l=\frac{\sum_{s\in\mathcal{S}_l}(x^0_s + x^1_s)}{\sum_{s\in\mathcal{S}}(x^0_s + x^1_s)},
\end{equation}
where $x^0_s$ and $x^1_s$ are the optimal solutions of the RB for state $s$, and $\mathcal{S}_l$ is the set of all states for which up to $l$ layers are being played back $\mathcal{S}_l=\left\{s\in\mathcal{S}|b_i>0 ~~\forall i=(1,\cdots,l) \text{ and } b_i=0 ~~\forall i=(l+1,\cdots,L) \right\}$.

\section{}\label{app_4}
We denote the set of states for which $x^{*0 }_s=x^{*1}_s = 0$, as $\mathcal{S'}\subset\mathcal{S}$. First, we prove that all states in $\mathcal{S'}$ are unreachable under $u^*$. If for these states, we rewrite the constraints of RBOPT using (\ref{eq:constraint}), for the optimal points, we will have: 
\begin{equation}\label{eq:sim_constraint}
- \beta(\sum_{l\in\mathcal{S\backslash\mathcal{S'}}}h_{ls}^0x_{l}^{*0}+\sum_{l\in\mathcal{S\backslash\mathcal{S'}}}h_{ls}^1x_{l}^{*1}) = \alpha_s,s\in\mathcal{S'}.
\end{equation}
Let us first assume that $s$ is not an initial state ($\alpha_s = 0,\forall s\in\mathcal{S'}$). Since all variables on the left hand side are non-negative, (\ref{eq:sim_constraint}) can only hold if $h_{ls}^0x^{*0}_l=0$ and $h_{ls}^1x^{*1}_l = 0,\forall l\in\mathcal{S\backslash\mathcal{S'}}$, which, according to the above definition, means that state $s$ cannot be reached from any state in $\mathcal{S}\backslash\mathcal{S'}$. On the other hand, if $s$ is an initial state ($\alpha_s > 0$), then $s \notin \mathcal{S'}$, otherwise the two sides of (\ref{eq:sim_constraint}) cannot be equal. We conclude that the trajectory will start from an initial state that is a member of $\mathcal{S}\backslash\mathcal{S'}$ and that from any state belonging to this set, the optimal policy does not allow a transition from $\mathcal{S}\backslash\mathcal{S'}$ to $\mathcal{S'}$. Now, if state $s$ is unreachable, we have $h_{ls}^0x^{*0 } =h_{ls}^1x^{*1 }= 0, \forall l \in\mathcal{S}$ and also $\alpha_s = 0$. Therefore, it is easily verified that in order to have feasibility, $x_s^0=x_s^1=0$ and the proof is complete.

\bibliographystyle{IEEEtran}
\bibliography{svc_mult}
\end{document}